\documentclass[journal,draftcls,onecolumn,11pt,twoside]{IEEEtran}
\usepackage{amsmath,amssymb,amsthm,amsfonts}
\usepackage{textcomp}
\usepackage{graphicx}
\usepackage{xcolor}
\usepackage{cite} 
\usepackage{epstopdf}
\usepackage{mathtools}
\usepackage{bibunits}
\usepackage{enumitem}
\usepackage{subfig}
\usepackage{caption}
\usepackage[font={color=blue},figurename=Fig.]{caption}
 
\usepackage{flushend}

\let\emph\textit

\def\BibTeX{{\rm B\kern-.05em{\sc i\kern-.025em b}\kern-.08em
	T\kern-.1667em\lower.7ex\hbox{E}\kern-.125emX}}

\newtheorem{thm}{Theorem}

\newtheorem{cor}{Corollary}
\newtheorem{lemma}{Lemma}
\newtheorem{remark}{Remark}

\newcommand{\vect}[1]{\boldsymbol{#1}}
\DeclarePairedDelimiter\norm{\lVert}{\rVert}



\begin{document}

\title{ Downlink Secrecy Rate of One-Bit  Massive MIMO System with Active Eavesdropping}
\author{M.~A.~Teeti
	\thanks{M. Teeti is with the Department of Communication Engineering, East China University of Technology, Nanchang, 330013, China (e-mail: teeti.moh@gmail.com)}
}%
\maketitle

\begin{abstract}
In this study, we consider the physical layer security in the downlink of a Massive MIMO system employing one-bit quantization at the base station (BS).  We assume an active eavesdropper that attempts to spoiling the channel estimation acquisition at the BS for a legitimate user, whereas overhearing on downlink transmission. We consider the two most widespread methods for degrading the eavesdropper's channel, the nullspace artificial noise (NS-AN) and random artificial noise (R-AN). Then, we present a lower bound on the secrecy rate and asymptotic performance, considering zero-forcing beamforming (ZF-BF) and maximum-ratio transmission beamforming (MRT-BF). Our results reveal that even when the eavesdropper is close enough to the intercepted user, a positive secrecy rate --which tends to saturation with increasing the number of BS antennas $N$---is possible, as long as the transmit power of eavesdropper is less than that of the legitimate user during channel training. We show that ZF-BF with NS-AN provides the best performance. It is found that MRT-BF and ZF-BF are equivalent in the asymptotic limit of $N$ and hence the artificial noise technique is the performance indicator. Moreover, we study the impact of \emph{power-scaling law} on the secrecy rate. When the transmit power of BS is reduced proportional to $1/N$, the performance is independent of artificial noise asymptotically and hence the beamforming technique is the performance indicator. In addition, when the BS's power is reduced proportional to $1/\sqrt{N}$, all combinations of beamforming and artificial noise schemes are equally likely asymptotically, independent of quantization noise. We present various numerical results to corroborate our analysis.
\end{abstract}

\begin{IEEEkeywords}
	 Active eavesdropping, ergodic information leakage, Massive MIMO, one-bit quantization, physical layer security
\end{IEEEkeywords}

\IEEEpeerreviewmaketitle

\let\incfile\include
\renewcommand{\include}[1]{\directlua{tex.ConvertToSpace("#1")}}

\section{Introduction}
\label{sec:introduction}

Information  secrecy in Massive multiple-input multiple-output (MIMO) system—as a key technology for fifth-generation networks—has been a critical issue that spurred widespread interest \cite{Zhu2014,7108032,7491252,Kapetanovic2015,Goel2008,7574326}. One challenge in Massive MIMO lies in the increase in hardware complexity and energy consumption \cite{7000981} due to a large number of antennas at the base station (BS). In recent years, there has been a growing interest in replacing the high-resolution analog-to-digital converters (ADCs) and digital-to-analog converters (DACs) with low-resolution ADCs and DACs. The extreme case of one-bit ADC/DAC  has been gaining much attention \cite{7586074, 6804238, 7094595, T2018} because of the considerable design simplicity offered to the physical layer and negligible energy consumption. With this in mind, it is of interest to understand the secrecy capability of Massive MIMO employing one-bit quantization, which is the aim of this work.  

In a major advance in 1949, Claude Shannon \cite{Shannon1948} established the information-theoretic basis of communication secrecy of cryptographic systems. In classical security, the transmitter often shields the private message by a means of shared-key cryptographic techniques carried out at the logical layers of the network. Typically, the encryption key is very long and computationally demanding. In addition, it is susceptible to interception by powerful adversaries, especially in a wireless environment. Consequently, a key exchange between two legitimate parties becomes infeasible in dynamic wireless networks with nodes of limited resources. To tackle this problem, physical layer security provides an alternative or a complement to classical cryptography, which exploits the statistical differences between the channel of the legitimate receiver and that of the eavesdropper to guarantee secrecy. 

The first information-theoretic approach to physical layer security dates back to Wyner's work \cite{wyner78} on the degraded Gaussian wiretap channel. Later, Csiszar and Korner \cite{Korner1978} generalized Wyner's work to the non-degraded wiretap channel. In the preceding works of Wyner, Csiszar, and Korner,  it was shown that when the channel of the legitimate receiver is more capable (less noisy) than that of the eavesdropper, secure communication is possible with no need for classical cryptography. The maximal rate at which the transmitter and legitimate receiver can communicate securely is limited by the \emph{secrecy capacity}, defined as the maximal of the difference between the channel mutual information of the legitimate receiver and that of the eavesdropper.

In the literature, \emph{passive attack} refers to the situation where an eavesdropper is concealing himself and thus only eavesdropping on the confidential transmission. On the other hand, \emph{active attack} refers to the situation where an eavesdropper is not only eavesdropping on the confidential transmission but also jamming the transmission. In the literature, many attempts have been made \cite{Zhu2014,7167717, Zhu2016, 7574326,7926385} to study the impact of the passive attack in Massive MIMO systems under different scenarios. One common thing among most of the above works and others in the literature is the use of \emph{artificial noise} to degrade the eavesdropper channel \cite{Goel2008} and hence improve security. Most of the above works focus on the careful design of data beamforming (or precoding) and artificial noise. In the literature, two artificial noise techniques are widely used, the nullspace artificial noise (NS-AN) and random artificial noise (R-AN) \cite{Goel2008}. With NS-AN, the artificial noise is made aligned with the nullspace of the channel of the legitimate user while with the R-AN, the artificial noise is generated randomly.

Massive MIMO has been considered as one of the key technologies enabling green communication for its ability to scale down the transmitted power while maintaining a minimum quality of service to each user in the system. Thus minimizing power (transmitted and circuit power consumption \cite{1522052}) while achieving secure communication simultaneously is of great importance \cite{7812623}. Besides, one of the constrains in wireless communication is the limited battery life of wireless devices \cite{7462485,7962221, zhu2019,zhu2018}.  Thus maximizing the energy harvested while satisfying the requirement of secure communication turns out to be of importance in current and future networks.  Z.  Zhu et al., \cite{zhu2018} studied the information secrecy under the simultaneous wireless information and power transfer (SWIPT) MIMO system, where the authors proposed a low-complexity iterative algorithm to optimize the design of beamforming to maximize the harvested energy and meeting secrecy rate requirements simultaneously. In \cite{7962221} a joint optimization of beamforming and AN in a multiuser MIMO system is considered under target secrecy rate and transmit and harvested power constraints.

Stemming from the fact that meeting physical layer security in the information-theoretic sense may give rise to a significant loss in data rate, Bin Chen et al., \cite{7491252} considers a cryptographic-like scheme to achieve security in Massive MIMO system in the presence of a powerful eavesdropper. In \cite{7491252}, the message symbols are randomly phase rotated while this phase rotation is only available at the legitimate receiver through downlink training with a small amount of overhead. There, it is shown that when the BS is equipped with a sufficiently large number of antennas, we guarantee secure communication with high probability.

It is well-known that the promising gains of Massive MIMO systems \cite{Lu2014, 6375940, 6457363, 7091922}  are affected by \emph{pilot contamination} \cite{5595728,7833554}, whether resulting from pilot reuse \cite{5595728} in multi-cellular networks or \emph{pilot attack} \cite{6151778, Wu2016, Kapetanovic2015} created intentionally by an active eavesdropper. In fact, the pilot attack can cause serious degradation of the secrecy rate since the beamformed signal in the downlink will be partly aligned with the direction of eavesdropper's channel, thereby increasing the information leakage. This situation becomes more pronounced when the pilot attack is severe, under which no positive secrecy rate is possible. 

Many attempts \cite{Yuksel2015, 7010928, 7750607, Do2018, 8335124} with the purpose of detecting and combating pilot attack in Massive MIMO have been done. Yuksel et al., \cite{Yuksel2015} showed that pilot attack can be eliminated asymptotically as the size of the pilot set (which is assumed known to everyone) is increased as long as users select their pilots randomly. Q. {Xiong} et al., \cite{7010928} proposed an efficient energy-based detector to identify a pilot attack without the knowledge of the channel state information (CSI). T. T. {Do} et al., \cite{7750607} considered a single-user uplink Massive MIMO and studied two anti-jamming strategies based on pilot re-transmission and pilot adaptation technique. R. F. {Schaefer} et al., \cite{8335124} studied the secrecy and pilot attack detection in a single-cell Massive MIMO in the presence of a single-antenna eavesdropper. There, it is shown that the secrecy rate can drop to zero as the power of eavesdropper is increased. Tan et al., \cite{Do2018} considered pilot jamming in the uplink and proposed jamming-resistant approach using unused pilot and pilot hopping to estimate the jamming channel. With zero-forcing type receiver, it is shown in \cite{Do2018} that we can greatly enhance the robustness of the Massive MIMO uplink against jamming attacks. 

In multicell multiuser Massive MIMO systems, although pilot contamination resulting from reuse of pilot across the network can be alleviated through coordination between different BSs with low overhead \cite{Marzetta2014} (also see \cite{8246979} for interference alignment-based approach), however, the pilot attack can still pose a real performance risk. Wu et al., \cite{Wu2016} considered an active eavesdropper armed with multiple antennas, and presented signal design using beamforming based on maximum-ratio transmission and NS-AN technique under the correlated channel. They showed that the NS-AN can benefit from the highly correlated channels, enabling secure communication; however, this is not the case when the channel is weakly correlated or independent and identically distributed (i.i.d.). To overcome the limitation in \cite{Wu2016}, the authors in \cite{Wu2018} considered pilot-data exploitation for CSI acquisition. They showed that decreasing the legitimate user's power causes its received signal to lie in a different eigenspace as that of the eavesdropper in the asymptotic limit of data length, thus mitigating the effect of a strong pilot attack. 

Using low-resolution ADCs/DACs at the BS in Massive MIMO can substantially simplify the physical layer and reduce energy consumption, particularly when the one-bit quantization is considered. A related challenge is the design of the channel estimator and the precoder \cite{Jacobsson2017,7946265} which turns to be not trivial as the quantization can break the structure of the beamforming matrix. This challenge can exacerbate when a pilot attack is present in the system.  In \cite{Zhao2016} the design of artificial noise is investigated in a simple scenario of a multiple-antenna system under the constraint of a few RF chains at the BS, considering a passive eavesdropper and perfect CSI at the BS. The impact of hardware impairment  (such as phase noise and amplified receiver noise) on secrecy in Massive MIMO is studied in \cite{Zhu2017} and hence both the uplink training and the design of artificial noise are optimized to enhance secrecy under a passive eavesdropper. More recently, a low-resolution Massive MIMO system with multiple-antenna passive eavesdropper was studied in \cite{8626548}. With perfect CSI  assumed available at the BS, it was shown that quantization noise gives rise to the increase in secrecy rate.  

The main limitation of the previous studies on the secrecy of Massive MIMO system with quantization or limited RF chains at the BS is the focus on passive attack scenarios with the assumption of perfect CSI at the BS. As far as quantization is concerned, the assumption of perfect CSI becomes inaccurate even in the absence of pilot contamination and in particular,  the perfect CSI is unjustified when the extreme one-bit quantization case is considered.  Also, of even greater importance is the impact of active eavesdropping on secrecy in quantized Massive MIMO systems, which is not well understood in the literature. In this work, we will particularly study the one-bit quantized Massive MIMO system with an active eavesdropper, and investigate its secrecy performance under various beamforming and artificial noise techniques.  

As a first step toward understanding the potential secrecy in such quantized systems, we will investigate only the performance under the zero-forcing and maximum ratio combining (or matched filtering) beamformers combined with two widely used techniques for degrading the quality of the eavesdropper's channel: nullspace and random artificial noise techniques.

\subsection{Contributions}

We summarize the main contributions of this work as
follows:
\begin{enumerate}[wide, labelwidth=!, labelindent=0pt]
	\normalfont{
	\item  We derive lower bounds on secrecy rate under various beamforming and artificial noise schemes, and an asymptotic performance analysis (when the number of BS antennas $N \to \infty$)  is given.
	\item We show analytically (as $N\to \infty$) a threshold on the transmit power ratio between the eavesdropper and intercepted user below which a positive secrecy rate is possible. As a result, when the eavesdropper is near enough to the intercepted user, secure communication turns to be difficult (if not impossible) when the transmit ratio is close to 1. This result is confirmed by simulation of a practical scenario. 
	\item We show that when there is no power scaling at the BS (i.e., power is not scaled down by the number of BS antennas),  NS-AN technique outperforms R-AN technique, regardless of beamforming technique, as $N \to \infty$.
	\item We show that when the power at the BS is reduced proportional to $1/N$, the zero-forcing beamforming (ZF-BF) outperforms maximum-ratio transmission beamforming (MRT-BF), regardless of artificial noise. Further, when the power is reduced proportional to $1/\sqrt{N}$ all schemes (any combination of beamforming and artificial noise techniques) are asymptotically equivalent and also quantization noise is irrelevant.}
\end{enumerate}

\subsection{Outline}
We organize the rest of the paper as follows. Section \ref{sec2} introduces signal models in uplink and downlink and we discuss channel estimation. Section \ref{sec3} presents the design of downlink beamforming and artificial noise. Also, we show the analysis of information rates, the main results and specializing the main results to an unquantized system and a quantized system with passive eavesdropping. In Section \ref{sec4}, we present the asymptotic performance comparison and we derive the condition under which secure communication is possible. In Section \ref{sec5}, we present some numerical examples to verify our analytical results and Section \ref{sec6} concludes this work. 

\subsection{Notation}
Throughout this paper we use the superscript $T$ to denote transposition and the superscript $H$ to denote hermitian transpose, $E [\cdot]$ and $\operatorname{Var} (\cdot)$ denote the expected value and variance of a random variable, respectively. Boldface capital letter $\vect{X}$ denotes a random matrix, boldface small letter $\vect{x}$ denotes a random vector, small letter in normal font $x$ denotes a scalar random variable, big letter in normal font $X$ is typically used to denote a system parameter. We denote by $\norm{\vect{x}}$ the Euclidean norm of a vector $\vect{x}$, $[\vect{X}]_i$ denotes the $i$-th diagonal entry of a matrix $\vect{X}$, $\operatorname{diag}(a_1,a_2,\cdots)$ denotes a diagonal matrix with $a_1,a_2,\cdots$ comprise its diagonal, and $\operatorname{tr}(\vect{\cdot})$ denotes the matrix trace. The pointwise operations $ \log(\cdot), \operatorname{sign}(\cdot), \Re\{\cdot\}$ and $\Im\{\cdot\}$ denote the logarithm to base 2, sign function, real and imaginary parts of a scalar, vector or matrix, respectively.

\section{Signal Model and Channel Estimation}
 \label{sec2}
 We consider the downlink of a single-cell Massive MIMO system with one-bit ADCs/DACs employed at the BS. We assume that the BS has $N$ antennas, serving $K$ single-antenna users ($K\ll N$) in the same time-frequency resource block. We assume the communication system operates in the time-division duplex (TDD) mode. Further, we assume a single-antenna active eavesdropper who attacks the communication between a legitimate user and the BS by contaminating its CSI acquisition at the BS during channel training and overhearing on the downlink transmission.

We consider Rayleigh block-fading for both BS-users and BS-eavesdropper channels with coherence time $T_c$. Within each block, the channel remains constant over $T_c$ symbol intervals and changes independently from one block to another. The composite small-scale fading channel between all legitimate users and the BS is denoted by $\vect{H} = [\vect{h}_1,\vect{h}_2,\cdots,\vect{h}_K] \in \mathcal{C}^{N\times K}$ and $\vect{g} \in \mathcal{C}^{N}$ represents the small-scale fading channel between the eavesdropper and the BS. The $(n,j)$-th component of $\vect{H}$, denoted $h_{n j}$, represents the propagation gain between the $n$-th BS antenna and user $j$, whereas $g_n$ denotes the propagation gain between the BS antenna $n$ and the eavesdropper. Both $\vect{H}$ and $\vect{g}$ comprise i.i.d. complex Gaussian random variables, each with zero-mean and unit variance.  Further, we denote by  $\beta_j$ \& $\beta_{e}$ the large-scale fading coefficients associated with legitimate user $j$ and the eavesdropper, respectively. We assume all large-scale fading coefficients change slowly in order of several $T_c$ intervals and hence assumed available to everyone. Since we are interested in the downlink rate, we divide the coherence time into two parts; one for training (over $\tau$ symbol intervals) and the other for downlink transmission (over $T_c-\tau$ symbol intervals).

\subsection{Uplink signal model}
\label{sec2.1}
At the start of communication, all legitimate users in the system send mutually orthogonal pilot sequences of length $\tau$ symbols in the uplink for channel estimation at the BS, whereas the eavesdropper concurrently transmits the same pilot sequence of user $k$ (intercepted user) to impair its channel acquisition at the BS (see Fig. \ref{sys_model1}).  We denote by $\vect{\Psi} = [\vect{\psi}_1,\vect{\psi}_2,\cdots,\vect{\psi}_K]^T \in \mathcal{C}^{K \times \tau}$ the pilot matrix satisfying $\vect{\Psi} \vect{\Psi}^H = \tau I_K$. The $j$-th pilot sequence is expressed as $\vect{\psi}_j = [\psi_j(1),\psi_j(2),\cdots, \psi_j(\tau)]^T \in \mathcal{C}^{\tau}$ where $\psi_j(t)$ is a discrete-time pilot symbol sent from user $j$ at time $t$. For simplicity of analysis, there is no loss of generality in assuming the pilot symbols $\{ \psi_j(t)\}$ to have unit modulus, i.e., $|\psi_j(t)|^2=1$.

\begin{figure}[!ht]
	\centering
	\includegraphics[width=6in]{./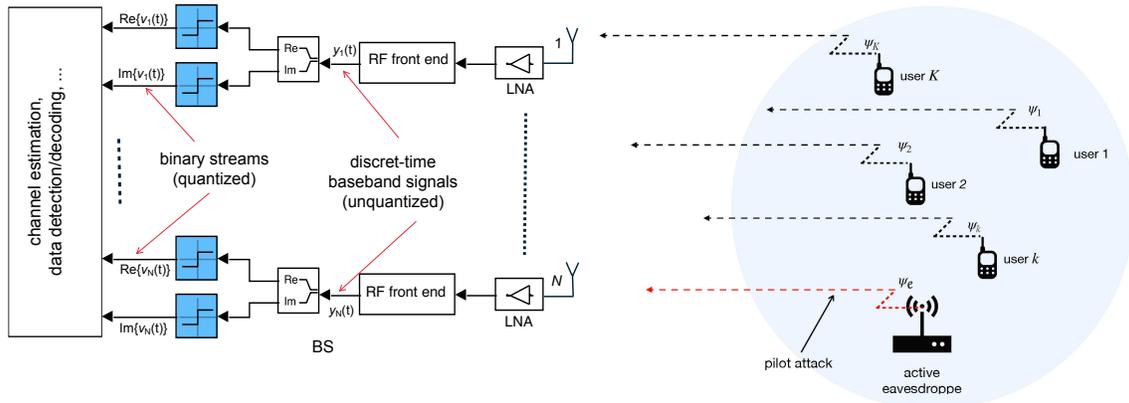}
	\caption{ System model in uplink. }
		\label{sys_model1}
\end{figure}


Thus, the discrete-time received signal at the BS during $\tau$ symbol intervals can be written as
\begin{equation} \label{eq0}
\vect{Y} = \sum_{j=1}^K \sqrt{p_j^{\prime}} \vect{h}_j \vect{\psi}_j^T + \sqrt{ p_e^{\prime}} \vect{g} \vect{\psi}_k^T + \vect{Z}  
\end{equation}
where $p_j^{\prime}$ and $p_e^{\prime}$ are the average received power at the BS from user $j$ and eavesdropper, respectively, i.e.,
\begin{subequations}
	\begin{equation}
	p_j^{\prime}= \beta_j p_ j 
	\end{equation}
	\begin{equation}
	p_e^{\prime} = \beta_e p_e
	\end{equation}
\end{subequations}
where $p_j$ and $p_e$ are the average transmit powers of user $j$ and eavesdropper, respectively. The matrix $\vect{Z} \in \mathcal{C}^{N \times \tau}$ denotes a complex additive white Gaussian noise (AWGN) with i.i.d.  $\mathcal{CN}(0,1)$ entries. 

Let $\vect{y}^T_n = [ y_n(1), y_n(2),\cdots, y_n(\tau)]$ be the $n$-th row of $\vect{Y}$  (i.e., signal received by BS antenna $n$) and $\vect{z}^T_n = [ z_n(1), z_n(2), \cdots, z_n(\tau)]$ be the $n$-th row of $\vect{Z}$. Thus $\vect{y}^T_n$ can be expressed as 

\begin{equation} \label{eq:row_n_model}
\vect{y}^T_n  = \sum_{j=1}^K \sqrt{p_j^{\prime}} {h}_{nj}  \vect{\psi}_j^T + \sqrt{ p_e^{\prime}} {g}_n \vect{\psi}_k^T +  \vect{z}^T_n.
\end{equation}
We observe that the row vectors of $\vect{Y}$ are independent of each other and have the common covariance matrix given by
\begin{equation}
\label{Covariance_matrix_first_row}
\vect{C}_{\vect{y}_n} =\vect{\Psi}^H {\vect{P}^{\prime}} \vect{\Psi} + p_e^{\prime} \vect{\psi}_k^{\ast} \vect{\psi}_k^{T} + \vect{I}_{\tau}.
\end{equation}
where ${\vect{P}^{\prime}}=\operatorname{diag}(p_1^{\prime},\cdots,p_K^{\prime})$ is a diagonal matrix. Therefore, it suffices to consider the signal model \eqref{eq:row_n_model} for our analysis. From \eqref{eq:row_n_model}, the $t$-th component of $\vect{y}^T_n $ is given by
\begin{equation} \label{eq1}
y_n(t) =  \sum \limits_{j= 1}^{K}  \sqrt{p_{j}^{\prime}} {h}_{n j} \psi_j (t) +
\sqrt{p_e^{\prime}}  {g}_{n}  \psi_k(t) +  {z}_{n}(t).
\end{equation}

Then, the signal after the one-bit quantizer (one-bit ADC) attached to the $n$-th BS antenna is expressed as
\begin{equation} \label{eq_quantized_sig}
v_n(t) = \operatorname{sign} ({y}_{n}(t)), \hspace{5pt} t=1,2,\cdots, \tau
\end{equation}  
where  $\operatorname{sign} (\cdot)$ is the sign function which yields the sign of the real and imaginary parts of $y_n(t) $ independently. Here we assume a zero-threshold quantizer. Accordingly, the constellation of the quantized signal corresponds to the quadrature phase-shift keying constellation, i.e.,  $\mathcal{A} = \frac{1}{\sqrt{2}}\{1+j,1-j,-1+j,-1-j\}$.

Because of the non-linearity of \eqref{eq_quantized_sig}, the analysis is difficult. However, since ${y}_{n}(t) $ is a Gaussian random variable, it holds from the Bussgang theorem \cite{Bussgang52} that we may express \eqref{eq_quantized_sig} as a sum of a scaled version of ${y}_{n}(t)$ and an uncorrelated term (quantization noise) \cite{Mezghani2012} \cite{T2018}, i.e.,
\begin{align} \label{eq2} 
{v}_{n}(t)   = \gamma {y}_{n} (t) + {q}_{n} (t)
\end{align}
where $\gamma <1$ is a scaling factor and ${q}_{n}(t)$ is the quantization noise uncorrelated to ${y}_{n} (t)$.  From \eqref{eq2}, $\gamma$ is obtained by the linear minimum mean squared error (LMMSE) solution, i.e., $ \gamma = E[y_n^{\ast} (t) v_n(t)] / \sigma_{{y}}^2$, resulting in quantization noise with minimum variance. From \cite{Bussgang52} (see also \cite{Mezghani2012} \cite{T2018}), $E[y_n^{\ast} (t) v_n(t)] =\sqrt{2 	\sigma_{{y}}^2 /\pi} $, where $\sigma_{{y}}^2$ is the variance of $ {y}_{n} (t) $. Hence, 
\begin{equation}\label{eq22} 
\gamma=\sqrt{\frac{2}{\pi \sigma_y^2   }}=\sqrt{\frac{2/\pi}{ \sum_{j=1}^K  p_j^{\prime}+ {p}_{e}^{\prime} +1}}.
\end{equation}
Substituting \eqref{eq22} in \eqref{eq2}, the variance of quantization noise is
\begin{align}\label{eq222}  \nonumber
\sigma_q^2 &= E[|v_n(t)|^2] - \gamma^2 E[|y_n(t)|]\\
&=1-{2}/{\pi} \approx 0.3634.
\end{align}

Stacking the successive symbols  ${v}_{n}(t) (t=1,2,\cdots, \tau)$ in a row vector $\vect{v}_n^T = [v_n(1), v_n(2),\cdots,v_n(\tau) ]$, we obtain
\begin{equation} \label{eq:quantized_signal_vectorform}
\vect{v}_n^T  = \gamma \vect{y}_n^T +\vect{q}_n^T 
\end{equation}
where  $\vect{q}_n^T = [q_n(1), q_n(2),\cdots,q_n(\tau) ]$.

In this work the quantization noise  is assumed uncorrelated \cite{T2018}, i.e., $\vect{C}_{\vect{q}_n} =\sigma_q^2 \vect{I}_{\tau}$. This can be justified as follows.  Using \eqref{eq:quantized_signal_vectorform} the covariance matrix (or correlation) of $\vect{q}_n^T$ can be written as \cite{Mezghani2012}\cite{T2018}
\begin{align} \label{eq:QN_covarinace_matrix}
\nonumber
\vect{C}_{\vect{q}_n} &=\vect{C}_{\vect{v}_n} - \gamma^2 \vect{C}_{\vect{y}_n} \\
&= \frac{2}{\pi}  \sin^{-1} \left[ \vect{\Sigma}_{\vect{y}_n}^{-1/2}  \vect{C}_{\vect{y}_n}   \vect{\Sigma}_{\vect{y}_n}^{-1/2}    \right]- \gamma^2 \vect{C}_{\vect{y}_n} 
\end{align}
where $\vect{C}_{\vect{v}_n} $ is the covariance matrix of ${\vect{v}_n}$ and $\vect{\Sigma}_{\vect{y}_n}$ is the diagonal matrix constructed from the diagonal entries of $\vect{C}_{\vect{y}_n} $. It is seen from \eqref{eq:QN_covarinace_matrix} that when the input of quantizer is uncorrelated (i.e., $\vect{C}_{\vect{y}_n}   \propto  \vect{I}_{\tau}$) so is the quantization noise. Note that the diagonal entries of $\vect{C}_{\vect{y}_n}$ are all equal to $\sigma_y^2 = \sum_{j=1}^K  p_j^{\prime}+ {p}_{e}^{\prime} +1$, while the off-diagonal entries can be expressed as 
\begin{equation}
	\label{eq:ctt}  
\vect{C}_{\vect{y}_n}( t ,t^{\prime}) =\sum_{i=1}^K  p_i^{\prime}\underbrace{\psi_i^{\ast} (t) \psi_i (t^{\prime})}_{e^{j\phi_{i}(t,t^{\prime})}} + p_{e}^{\prime} \psi_k^{\ast} (t) \psi_k (t^{\prime}), t\neq t^{\prime}
\end{equation}

From \eqref{eq:ctt} we observe that when $K$ is sufficiently large, the magnitudes of off-diagonal entries of $\vect{C}_{\vect{y}_n} $ are really much smaller than its diagonal entries, i.e., $\sigma_y^2\gg |\vect{C}_{\vect{y}_n} (t,t^{\prime})|$, due to the sum of a large number of weighted complex exponentials having distinct phases in \eqref{eq:ctt}. Thus we can approximate $\vect{C}_{\vect{y}_n} $ as a diagonal matrix, i.e., $\vect{C}_{\vect{y}_n} \approx \sigma_y^2 \vect{
I}_{\tau}$, leading to $\vect{C}_{\vect{q}_n} \approx \sigma_q^2 \vect{I}_{\tau}$. Finally, without loss of generality, in this work we assume $\tau = K \gg 1$.
   


\subsection{Channel estimation}
\label{sec2.2}

To estimate the propagation gain $h_{nl}$ (from user $l$ to BS antenna $n$), the BS correlates \eqref{eq:quantized_signal_vectorform} with the pilot sequence of user $l$. Hence,
\begin{align}
\label{eq4.1}  \nonumber
\tilde{v}_{l} &:= \frac{1}{\sqrt{\tau}}   \vect{v}_n^T \vect{\psi}_l^{\ast} =\frac{1}{\sqrt{\tau}} \sum \limits_{t= 1}^{\tau} \psi_l^{\ast} (t) v_n(t) \\
&=  \sqrt{  \gamma^2 \tau p_{l}^{\prime}  } h_{n l} + { \sqrt{ \gamma^2 \tau p_e^{\prime}}    {g}_{n} \delta(l-k)} + {\gamma  \tilde{z}_{l}} + { \tilde{q}_{l}}
\end{align}
where  $\tilde{z}_{l}= \vect{z}_n^T \vect{\psi}_l^{\ast}/\sqrt{\tau}$ and $\tilde{q}_{l}=\vect{q}_n^T \vect{\psi}_l^{\ast}/\sqrt{\tau}$ are zero-mean scalar random variables with variances 1 and $\sigma_q^2$, respectively. 

Using \eqref{eq4.1}  the LMMSE estimate of $h_{n l}$ reads
\begin{align}\label{eq4.2} 
\hat{h}_{n l} = \frac{\gamma \sqrt{  p_{l}^{\prime} \tau}   }{ \gamma^2   p_{l}^{\prime} \tau+  \gamma^2  p_{e}^{\prime} \tau \delta(l-k)+  \gamma^2 +   \sigma_q^2 } \tilde{v}_{l} := \lambda_l \tilde{v}_{l}
\end{align}
and therefore the variance of ${\hat{h}_{nl}}$ is
\begin{equation}\label{eq4.3}
\sigma_{\hat{h}_{l}}^2= \frac{\gamma^2 p_{l}^{\prime} \tau}{\gamma^2   p_{l}^{\prime} \tau + \gamma^2  p_{e}^{\prime} \tau \delta(l-k)+  \gamma^2 +   \sigma_q^2 }.
\end{equation} 

Stacking all channel estimates in a matrix form, the  composite channel estimate, denoted $\widehat{\vect{H}}$, can be written  as 
\begin{equation}\label{eq5} 
\widehat{\vect{H}}  ={\vect{V} \vect{\Psi}^H \vect{\Lambda}} /{\sqrt{\tau}}
\end{equation}
where $\vect{\Lambda}= \operatorname{diag}(\lambda_1,\lambda_2,\cdots,\lambda_K) \in \mathcal{R}^{K \times K}$ is a diagonal matrix and  $\vect{V} \in \mathcal{C}^{N \times \tau}$ is the quantized signal corresponding to $\vect{Y}$, where the $(n,t)$-th entry of $\vect{V}$ is defined in \eqref{eq_quantized_sig}. Finally, we remark that the channel estimates $\hat{h}_{n l}$ are treated as i.i.d. $\mathcal{CN} (0, \sigma_{\hat{h}_{l}}^2)$, thanks to the law of large numbers. This follows from the fact that $\tilde{v}_{l}$  is typically comprised of a sum of a large number of random variables. 

\subsection{Downlink signal model}
\label{sec2.3}

Over one symbol interval, the BS synthesizes the following signal vector (complex baseband precoded signal):
\begin{equation}
\label{eq5.1} 
\tilde {\vect{x}} = \underbrace{\sqrt{\frac{\theta} {\eta} } \vect{W} \vect{s}}_{\textnormal{ information}} + \underbrace{\sqrt{\frac{\bar{\theta}}{{\zeta}}} \vect{n}}_{\textnormal{artificial noise}}
\end{equation}
where $\vect{s} = [s_1,s_2,\cdots,s_K]^T$ comprises $K$ independent complex Gaussian information symbols, i.e., $s_j \sim \mathcal{CN}(0 ,1)$, $\vect{W }= [\vect{w}_1,\vect{w}_2,\cdots,\vect{w}_K]  \in \mathcal{C}^{N \times K}$ is the precoding (or beamforming) matrix with $\vect{w}_j$  being the $j$-th column of $\vect{W}$, and $\vect{n}=[n_1,n_2,\cdots,n_N]^T \in \mathcal{C}^N$ is a zero-mean complex artificial noise vector generated deliberately to weaken the eavesdropper's channel. In \eqref{eq5.1}, $\eta$ and $\zeta$ are long-term normalization constants given by $\eta= E[\operatorname{tr}(\vect{W} \vect{W}^H)] $ and $ \zeta= E[\norm{\vect{n}}^2] $.  Further, $\theta \in (0,1)$ and $\bar{\theta} = 1-\theta$ are the power fractions allocated to the beamformed signal and artificial noise, respectively. Consequently, we have $E[\norm{\tilde{\vect{x}}}^2] = 1$.
   \begin{figure}[!ht]
	\centering
	\includegraphics[width=6in]{./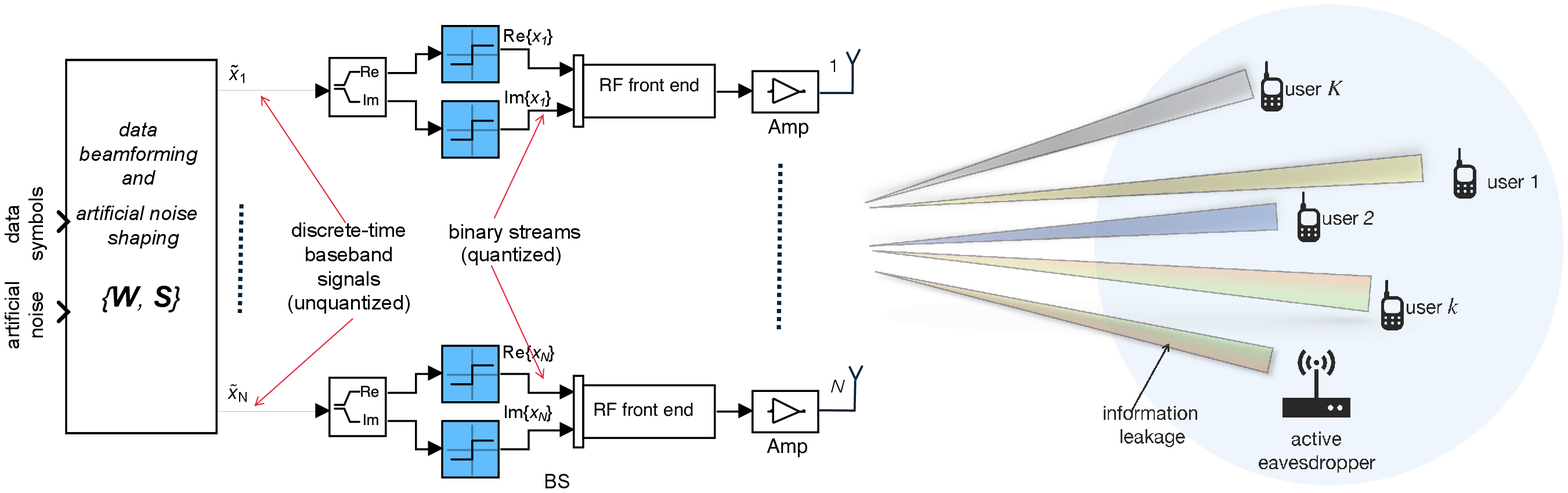}
	\caption{ System model in downlink. }
	\label{fig:system2}
\end{figure}

Then, after the one-bit quantizers at the BS, the signs of the real and imaginary part of $\tilde{\vect{x}}$ are retrieved (pointwise). The complex baseband representation of the transmitted signal is thus given by
\begin{equation}
\label{eq5.2} 
{\vect{x}} = \sqrt{{p_d}/{N}}   \operatorname{sign} (\tilde {\vect{x}} )
\end{equation}
where the scaling factor $\sqrt{p_d/N}$ is introduced to restrict the average transmit power at the BS to $p_d$. The system model in downlink in depicted in Fig. \ref{fig:system2}.

Since $\tilde {\vect{x}}$ is a unit norm vector and we consider the channel matrix $\vect{H}$ drawn from random Gaussian matrix ensembles, the variance of each component of the precoded signal $\tilde {\vect{x}}$ turns to be $\sigma_{\tilde{x}}^2= 1/N$. By linearizing the nonlinear model in \eqref{eq5.2} as we have discussed previously, we can express \eqref{eq5.2} as
\begin{align}
\label{eq6}
 \nonumber
\vect{x} & =\sqrt{\frac{p_d}{N}}   (\bar{\gamma} \tilde {\vect{x}} +\bar{\vect{q}} ) \\ \nonumber
& =\sqrt{\frac{ \theta  \bar{\gamma}^2 p_d   }{N\eta}} \vect{W} \vect{s} + \sqrt{ \frac{\bar{\theta} \bar{\gamma}^2 p_d   }{N \zeta}} \vect{n}+ \sqrt{\frac{p_d}{N}}   \bar{\vect{q}}\\
&=\sqrt{\frac{ 2 \theta  p_d   }{\pi \eta}} \vect{W} \vect{s} + \sqrt{ \frac{2 \bar{\theta}  p_d   }{\pi\zeta}} \vect{n}+ \sqrt{\frac{p_d}{N}}   \bar{\vect{q}}
\end{align}
where $\bar{\vect{q}}$ is the quantization noise, which is assumed to be uncorrelated, i.e., $\vect{C}_{\bar{\vect{q}}}= \sigma_q^2 \vect{I}_N$ and $\bar{\gamma }$ is a scaling factor given by
\begin{equation}
\bar{\gamma}:=\sqrt{\frac{2 }{ \pi \sigma_{\tilde{x}}^2}} = \sqrt{\frac{2N }{\pi}}.
\end{equation}

 For simplicity of notation, we express \eqref{eq6} as
\begin{equation}
\vect{x} =c_1 \vect{W} \vect{s} + c_2 \vect{n}+ c_3 \bar{\vect{q}}
\end{equation}
where $c_1, c_2$ and $c_3$ are, respectively, defined as
\begin{IEEEeqnarray}{CC}
\IEEEyesnumber \label{eq:both} \IEEEyessubnumber*
c_1  & =\sqrt{2 \theta  p_d / \pi \eta} \label{eq:c1} \\
c_2  & =\sqrt{ {2 \bar{\theta} p_d   }/{ \pi \zeta}}  \label{eq:c2} \\
c_3  &= \sqrt{{p_d}/{N}}. \label{eq:c3}
\end{IEEEeqnarray}

\begin{remark}
	\label{remark:rem_UQ}
To retrieve the unquantized signal model from the quantized signal model \eqref{eq6}, we simply replace $p_d$ by $p_d \pi /2$ and let $\sigma_q^2=0$.
\end{remark}
\section{Secrecy capacity analysis}
\label{sec3}
In this section, we establish the achievable rate $\underline{R}_k$ of the intercepted user $k$, and an upper bound on the eavesdropper's rate $\overline{R}_{e}$. We use the underline and overline notation to distinguish between a lower bound and upper bound, respectively. Then the achievable secrecy rate $\underline{R}_s$ is given by \cite{Zhu2014} \cite{Korner1978}
\begin{equation} 
\label{eq:Rs_Def}
\underline{R}_s=
\left [  \underline{R}_k-\overline{R}_{e}  \right ]^{+} 
\end{equation}
where $[A]^{+}=A$ when $A>0$ and $[A]^{+}=0$ when $A<0$.

\subsection{Data beamforming and artificial noise}
\label{sec3.1}

In this work, we will consider two classical beamforming techniques: the maximum ratio transmission beamforming (MRT-BF) and zero-forcing beamforming (ZF-BF). Using the channel estimate $\widehat{\vect{H}}$ in \eqref{eq5},  the beamforming matrices of MRT-BF and  ZF-BF are thus given by
\begin{equation}
\label{eq:mrt_zf_bf}
\vect{W} \coloneqq
\begin{cases}
\widehat{\vect{H}}^{\ast} & \text{MRT-BF}, \\
\widehat{\vect{H}}^{\ast} \left(\widehat{\vect{H}}^T \widehat{\vect{H}}^{\ast}\right)^{-1}  &\text{ZF-BF} .
\end{cases}
\end{equation}

For the artificial noise, the vector $\vect{n}$ in \eqref{eq5.1} is defined by 
\begin{equation} \label{eq_AN}
\vect{n} = \vect{S} \tilde{\vect{n}} 
\end{equation}
where $\vect{S}$ is a shaping matrix and $\tilde{\vect{n}}$ is an $N\times 1$ Gaussian vector with i.i.d. $\mathcal{CN}(0,1)$ components. In this work, we will consider the two widely known designs of artificial noise; the R-AN and NS-AN. In the R-AN approach, the noise is purely random, which points to no specific direction. Therefore, we let $ \vect{S} = \vect{I}_N$, thus $ \vect{n} =\tilde{ \vect{n} }$. When NS-AN approach is used, $\vect{n} \in \operatorname{nullspace}(\widehat{\vect{H}}^T )$, i.e., 
\begin{equation} \label{eq:Nullspace_shaping}
\widehat{\vect{H}}^T \vect{S} = \vect{0}_{K \times K}.
\end{equation}

This means that under unquantized systems, when the channel estimate is sufficiently accurate and both user and eavesdropper channels are not highly correlated, a large part of the nullspace artificial noise will be annihilated at the user while it is observed as a random noise at the eavesdropper, degrading its channel quality. To satisfy \eqref{eq:Nullspace_shaping}, we simply choose $\vect{S}$ to be the orthogonal complement matrix of $\widehat{\vect{H}}^T$, given by $\vect{S} = \vect{I}_N - \widehat{\vect{H}}^{\ast} (\widehat{\vect{H}}^T \widehat{\vect{H}}^{\ast})^{-1}\widehat{\vect{H}}^T$.

 We summarize:\begin{equation}
\label{eq_shaping}
\vect{S} \coloneqq
\begin{cases}
\vect{I}_N &\text{R-AN}, \\
\vect{I}_N - \underbrace{\widehat{\vect{H}}^{\ast} (\widehat{\vect{H}}^T \widehat{\vect{H}}^{\ast})^{-1}\widehat{\vect{H}}^T}_{\vect{P}_{\text{proj}}} &\text{NS-AN} .
\end{cases}
\end{equation}

Next the beamforming and AN normalization constants $\eta$ and $\zeta$ in \eqref{eq5.1} are evaluated as follows. Let the channel estimate $\widehat{\vect{H}}$ be decomposed as
\begin{equation} \label{eq:CE_decomposition}
\widehat{\vect{H}} = \widetilde{\vect{H}} \vect{\Sigma}^{1/2}
\end{equation}
 where $\widetilde{\vect{H}}$ is a random Gaussian matrix with i.i.d. $\mathcal{CN}(0,1)$ components, and $\vect{\Sigma}=\operatorname{diag} (\sigma_{\hat{h}_1}^2,\sigma_{\hat{h}_2}^2,\cdots,\sigma_{\hat{h}_K}^2)$ is a diagonal matrix whose diagonal elements comprise the row vector $(\sigma_{\hat{h}_1}^2,\sigma_{\hat{h}_2}^2,\cdots,\sigma_{\hat{h}_K}^2)$. 
 
Thus, when MRT-BF is used we have that  
\begin{align}\label{eq_norm_mrt}\nonumber
\eta_{\text{mrt}} &= E[\operatorname{tr} (  \vect{W}_{\text{mrt}}  \vect{W}_{\text{mrt}}^H)] \\ \nonumber
& = E[\operatorname{tr} (\vect{\Sigma}^{1/2} \widetilde{\vect{H}}^{T} \widetilde{\vect{H}}^{\ast}  \vect{\Sigma}^{1/2})] \\
&=\operatorname{tr} (\vect{\Sigma}^{1/2} E[ \widetilde{\vect{H}}^{T} \widetilde{\vect{H}}^{\ast}]  \vect{\Sigma}^{1/2})=N \operatorname{tr} (\vect{\Sigma})
\end{align}
and when ZF is used, we can write
\begin{align}\label{eq_norm_zf} \nonumber
\eta_{\text{zf}} &= E[\operatorname{tr} (  \vect{W}_{\text{zf}}^H  \vect{W}_{\text{zf}})] \\ \nonumber
&= E[\operatorname{tr} (\vect{\Sigma}^{-1/2} (\widetilde{\vect{H}}^{T} \widetilde{\vect{H}}^{\ast})^{-1}  \vect{\Sigma}^{-1/2})] \\
&= \operatorname{tr} (\vect{\Sigma}^{-1/2} E[ (\widetilde{\vect{H}}^{T} \widetilde{\vect{H}}^{\ast})^{-1}]  \vect{\Sigma}^{-1/2}) = \frac{\operatorname{tr} (\vect{\Sigma}^{-1})}{N-K}
\end{align}
where in \eqref{eq_norm_zf} we have used $E[ (\widetilde{\vect{H}}^{T} \widetilde{\vect{H}}^{\ast})^{-1}]= (N-K)^{-1} \vect{I}_K$, which follows from the property of the inverse of central Wishart matrix $\widetilde{\vect{H}}^{T} \widetilde{\vect{H}}^{\ast}$ \cite{RMT_2004}.

From \eqref{eq_shaping}, the respective AN normalization constants corresponding to R-AN and NS-AN are
\begin{align}
\zeta_{\text{r-an}}&= E[\operatorname{tr} (\vect{I}_N)]=N \\
\nonumber
\zeta_{\text{ns-an}} & = E[\operatorname{tr} (\vect{I}_N- \vect{P}_{\text{proj}})(\vect{I}_ N-\vect{P}_{\text{proj}})^H] \\ \nonumber
&=E[\operatorname{tr} (\vect{I}_N- \vect{P}_{\text{proj}})] \\
&= N-K.
\end{align}

By substituting the above  derived normalization constants in \eqref{eq:both}, we again rewrite \eqref{eq:both} as  
		\begin{align}
\label{defc1}
{c_1} &=
\begin{cases}
\sqrt{{2 \theta  p_d}/ { \pi N \operatorname{tr}(\vect{\Sigma})}} &\text{MRT-BF}\\
\sqrt{{2 \theta  p_d (N-K)}/ { \pi N \operatorname{tr}(\vect{\Sigma}^{-1})} } & \text{ZF-BF} 
\end{cases}\\
\label{defc2}
{c_2} &=
\begin{cases}
\sqrt{ {2 \bar{\theta} p_d   }/{ \pi N}}  &\text{      if R-AN}, \\
\sqrt{ {2 \bar{\theta} p_d   }/{ \pi (N-K)}}  &\text{   if NS-AN} .
\end{cases}\\
\label{defc3}
{c_3} &= \sqrt{p_d/N}.
\end{align}
\subsection{Data rates  analysis}
\label{sec3.2}

The received signal at the intercepted user $k$ is
\begin{align}\label{eq7}  \nonumber
r_k &= \sqrt{\beta_k c_1^2}  \vect{h}_k^T  \vect{w}_k s_k + \sum \limits_{j=1, j\neq k}^K  \sqrt{\beta_k c_1^2}   \vect{h}_k^T  \vect{w}_j s_j   \\
&+ \sqrt{\beta_k c_2^2}  \vect{h}_k^T \vect{S} \tilde{\vect{n}} + \sqrt{\beta_k c_3^2}  \vect{h}_k^T \bar{\vect{q}}  +\nu_k
\end{align} 
and the eavesdropper receives
\begin{align}\label{eq8}  \nonumber
r_{e} &= \sqrt{\beta_{e} c_1^2}  \vect{g}^T\vect{w}_k s_k + \sum \limits_{j=1,  j \neq k}^K  \sqrt{\beta_{e} c_1^2 }  \vect{g}^T  \vect{w}_j s_j \\
&+ \sqrt{\beta_{e} c_2^2 }  \vect{g}^T  \vect{S} \tilde{\vect{n}}  + \sqrt{\beta_{e} c_3^2}  \vect{g}^T \bar{\vect{q}} + {\nu}_{e}
\end{align}
where both $\nu_{k}, {\nu}_{e} \sim \mathcal{CN}(0,1)$, denoting the Gaussian noises at the intercepted user and eavesdropper, respectively. 

To obtain a lower bound on secrecy rate, we shall make two main assumptions that have been considered in the literature, serving as a worst-case scenario \cite{Zhu2014,Wu2016}. First, to obtain a lower bound on rate achievable by the legitimate user, we assume the legitimate user has no access to its channel realization and its beamforming vector, and thus the user utilizes only its knowledge of the long-term statistics of the channel for decoding.  Second, to obtain an upper bound on information leakage, we assume the eavesdropper has access to its channel realizations and the beamforming vector of intercepted user. Further, we assume that the eavesdropper can cancel out all inter-user interference, which is conceivable through collusion of other users with the eavesdropper. 

Therefore, after ignoring the second term in \eqref{eq8}, we rewrite \eqref{eq8} as
\begin{align}\label{eq8.1}   
r_{e} =\sqrt{\beta_{e} c_1^2}  \vect{g}^T\vect{w}_k s_k + \sqrt{\beta_{e} c_2^2 }  \vect{g}^T  \vect{S} \tilde{\vect{n}}  + \sqrt{\beta_{e} c_3^2}  \vect{g}^T \bar{\vect{q}} + {\nu}_{e}
\end{align}
and hence an upper bound on the ergodic information rate leaked to the eavesdropper is given by\footnote{In \eqref{eq10.3} we have treated the quantization noise as Gaussian, which is a technical assumption justified by the law of large numbers. Note that $\vect{g}^T \bar{\vect{q}}$ (third term in \eqref{eq8.1}) is a sum of $N$ (large) random numbers and hence can be well approximated as Gaussian random variable.}  
\begin{equation} \label{eq10.3}
\overline{R}_{e}=  E \left[ \log \left ( 1 +  c_1^2 \beta_{e} \frac{ \norm{\vect{w}_k^H \vect{g}^{\ast}}^2 }{\sigma_{e}^{2}}\right) \right]
\end{equation}
where $\sigma_{e}^{2}$ is the variance of the effective noise seen by the eavesdropper, given by
\begin{equation} \label{eq_cov_eavesdropper}
\sigma_{e}^{2}  =  c_2^2 \beta_{e} \vect{g}^T \vect{S} \vect{g}^{\ast}+ c_3^2 \beta_{e}\sigma_q^2  \norm{\vect{g}}^2 + 1.
\end{equation}

To obtain a lower bound on achievable rate of legitimate user $k$, we may express \eqref{eq7} as  a sum of signal and uncorrelated noise \cite{Marzetta2014} \cite{T2018}, i.e., 
\begin{equation} \label{eq9}
r_k  = a s_k + n_{\text{eff}}
\end{equation}
where $a$ is a deterministic constant which depends only on the statistics of the channel and $n_{\text{eff}}$ is an effective noise uncorrelated with $s_k$. 

From \eqref{eq9} the variance of  $n_{\text{eff}}$  is given by
\begin{align} \label{var_effective_noise}
\sigma_{n_{\text{eff}}}^2  = E\left [ (r_k  - a s_k )(r_k  - a s_k )^{\ast}\right] .
\end{align}
Thus, $\sigma_{n_{\text{eff}}}^2$ is minimized by choosing $a$ according to the optimal estimator in the sense of MMSE which renders $n_{\text{eff}}$ uncorrelated with the signal $s_k$. In the view of orthogonality property  of optimal estimator, i.e., $E \left [(r_k  - a s_k )s_k^{\ast}\right ]=0$, it follows that 
\begin{equation} \label{eq10}
a = E[s_k^{\ast} r_k] = c_1 \sqrt{\beta_{k}} E[ \vect{h}_k^T  \vect{w}_k]
\end{equation}

Using this result in \eqref{var_effective_noise} yields 
\begin{align} \label{eq10.1}\nonumber
&\sigma_{n_{\text{eff}}}^2=E[|r_k|^2]-|a|^2\\\nonumber
&=
c_1^2 \beta_k \operatorname{Var} (\vect{h}_k^T  \vect{w}_k)+ \sum \limits_{j=1, j\neq k}^K  c_1^2 \beta_k  E[ |\vect{h}_k^T  \vect{w}_j|^2] \\
& + c_2^2 \beta_k E[ \vect{h}_k^T \vect{S} \vect{h}_k^{\ast}] + \beta_k \sigma_q^2 p_d   +1.
\end{align}
where $\operatorname{Var}(\cdot)$ is the variance operator. In \eqref{eq_cov_eavesdropper} and \eqref{eq10.1} we have used the fact that $\vect{SS}^H = \vect{S}$ for both R-AN and NS-AN schemes.

Finally, a lower bound on the achievable rate $\underline{R}_k$ is obtained by treating the non-Gaussian noise $n_{\text{eff}}$ (which is uncorrelated with signal) as independent Gaussian noise with the same variance $\sigma_{n_{\text{eff}}}^2$ \cite{Hissibi2003}. Thus, we have
\begin{equation}\label{eq10.22}
\underline{R}_k= \log\left(1+\frac{|a|^2}{\sigma_{n_{\text{eff}}}^2}\right).
\end{equation}

For the convenience of exposition and analysis in this paper, we summarize the results for the eavesdropper and legitimate user in Lemmas \ref{R_e_general} and \ref{R_k_general} which will be used later in Sec. \ref{sec3.4}. 
\begin{lemma}
	\label{R_e_general}
	An upper bound on the ergodic information rate (leakage) of the  eavesdropper is given by 
	\begin{equation}
	 \label{eq:R_e_general}
\overline{R}_{e}=  E \left[ \log \left ( 1 +  c_1^2 \beta_{e} \frac{ \norm{\vect{w}_k^H \vect{g}^{\ast}}^2 }{\sigma_{e}^{2}}\right) \right]
	\end{equation}
	where $\sigma_{e}^{2}  =  c_2^2 \beta_{e} \vect{g}^T \vect{S} \vect{g}^{\ast}+ c_3^2 \beta_{e}\sigma_q^2  \norm{\vect{g}}^2 + 1$.
	\end{lemma}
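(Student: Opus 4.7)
The plan is to start from the full downlink received signal at the eavesdropper in (eq8) and apply, in order, the two worst-case assumptions stated just before the lemma: (i) the eavesdropper has perfect knowledge of its own channel $\vect{g}$ and of the precoding vector $\vect{w}_k$, and (ii) the eavesdropper can completely cancel the inter-user interference term $\sum_{j\neq k}\sqrt{\beta_{e} c_1^2}\,\vect{g}^T\vect{w}_j s_j$ through collusion with the other users. These two assumptions together reduce the observation to the scalar model (eq8.1),
\[
r_{e} =\sqrt{\beta_{e} c_1^2}\,\vect{g}^T\vect{w}_k s_k + \sqrt{\beta_{e} c_2^2 }\,\vect{g}^T \vect{S} \tilde{\vect{n}} + \sqrt{\beta_{e} c_3^2}\,\vect{g}^T \bar{\vect{q}} + {\nu}_{e},
\]
which is the appropriate single-user, single-interferer channel for upper-bounding leakage.

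Next I would condition on $(\vect{g},\vect{w}_k,\vect{S})$, all of which are known to the eavesdropper by assumption, and compute the conditional variance of each additive term. Using independence of $\tilde{\vect{n}}\sim \mathcal{CN}(\vect{0},\vect{I}_N)$, the covariance $\vect{C}_{\bar{\vect{q}}}=\sigma_q^2\vect{I}_N$ from (eq6), the identity $\vect{S}\vect{S}^H=\vect{S}$ valid for both choices in (eq_shaping) (already invoked by the authors right after (eq10.1)), and $\nu_{e}\sim\mathcal{CN}(0,1)$, the three variances add to exactly
\[
\sigma_e^{2}=c_2^2\beta_{e}\,\vect{g}^T\vect{S}\vect{g}^{\ast}+c_3^2\beta_{e}\sigma_q^2\,\norm{\vect{g}}^2+1,
\]
which matches the expression claimed in the lemma. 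The conditional signal power is $c_1^2\beta_{e}\norm{\vect{w}_k^H\vect{g}^{\ast}}^2$ because $s_k\sim\mathcal{CN}(0,1)$.

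Because the three noise components are mutually uncorrelated, $\nu_{e}$ is Gaussian, $\vect{g}^T\vect{S}\tilde{\vect{n}}$ is Gaussian conditioned on $(\vect{g},\vect{S})$, and the quantization-noise term $\vect{g}^T\bar{\vect{q}}$ is a weighted sum of $N$ independent entries well approximated as Gaussian by the central limit theorem in the Massive MIMO regime, the resulting conditional channel is a scalar additive-Gaussian-noise channel with Gaussian input. Its capacity is $\log\bigl(1+c_1^2\beta_{e}\norm{\vect{w}_k^H\vect{g}^{\ast}}^2/\sigma_e^{2}\bigr)$, and taking expectation over the joint law of $(\vect{g},\vect{w}_k,\vect{S})$ yields the ergodic upper bound $\overline{R}_{e}$ in (eq:R_e_general).

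The main obstacle is the last point: justifying that the Gaussian-noise formula is genuinely an upper bound on the eavesdropper's achievable rate despite the quantization noise being non-Gaussian. This is the technical content of the footnote attached to (eq10.3), and it hinges on the CLT claim for $\vect{g}^T\bar{\vect{q}}$; rigorously one should argue either via a Berry--Esseen-type bound on the deviation from Gaussianity, or by observing that the dominant noise contributions for moderate-to-large $\bar{\theta}$ come from the provably Gaussian artificial-noise and ambient-noise terms so that the non-Gaussian correction vanishes as $N\to\infty$. Once this approximation is granted, everything else is an accounting exercise in second moments combined with the two worst-case assumptions on the eavesdropper.
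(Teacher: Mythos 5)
Your proposal is correct and follows essentially the same route as the paper, whose ``proof'' of Lemma~\ref{R_e_general} is the in-text derivation of Section~\ref{sec3.2}: invoke the worst-case assumptions to reduce \eqref{eq8} to \eqref{eq8.1}, condition on the quantities known to the eavesdropper, use $\vect{S}\vect{S}^H=\vect{S}$ to assemble $\sigma_e^2$, and treat the residual noise as Gaussian before averaging. Your closing remark about the Gaussian approximation of $\vect{g}^T\bar{\vect{q}}$ being the only genuinely delicate step is exactly the caveat the paper relegates to the footnote attached to \eqref{eq10.3}.
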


	\begin{lemma} 
	\label{R_k_general}
	A lower bound on achievable rate of the legitimate user $k$ (intercepted) is given by 
	\begin{equation}\label{eq10.2}
	\underline{R}_k= \log\left(1+\frac{|a|^2}{\sigma_{n_{\text{eff}}}^2}\right)
	\end{equation}
	where $a$ and $\sigma_{n_{\text{eff}}}^2$ are given by
	\begin{align}
	a &= c_1 \sqrt{\beta_{k}} E[ \vect{h}_k^T  \vect{w}_k] \\ \nonumber
	\sigma_{n_{\text{eff}}}^2&= c_1^2 \beta_k (\operatorname{Var} (\vect{h}_k^T  \vect{w}_k)+ \sum \limits_{j=1, j\neq k}^K   E[ |\vect{h}_k^T  \vect{w}_j|^2]) \\
	& + c_2^2 \beta_k E[ \vect{h}_k^T \vect{S} \vect{h}_k^{\ast}] + \beta_k \sigma_q^2 p_d   +1.
	\end{align}
\end{lemma}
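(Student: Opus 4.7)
The plan is to use the classical ``use-and-then-forget channel state'' decomposition of Medard / Hassibi--Hochwald: write the received signal at user $k$ as $r_k = a\, s_k + n_{\text{eff}}$, where $a = c_1\sqrt{\beta_k}\, E[\vect{h}_k^T \vect{w}_k]$ is a deterministic effective gain and $n_{\text{eff}}$ collects everything else. The constant $a$ depends only on long-term statistics, which is consistent with the worst-case assumption that user $k$ knows only $E[\vect{h}_k^T \vect{w}_k]$ and not the instantaneous channel or beamformer realization. By the orthogonality principle, this choice $a = E[s_k^{\ast} r_k]$ is precisely the LMMSE estimator of $r_k$ from $s_k$, so the residual $n_{\text{eff}}$ is uncorrelated with $s_k$, which is all we will need for the final bound.

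Next I would compute $\sigma_{n_{\text{eff}}}^2 = E[|r_k|^2] - |a|^2$ by expanding \eqref{eq7} term by term and using independence of the symbols $\{s_j\}$, the artificial-noise vector $\tilde{\vect{n}}$, the quantization noise $\bar{\vect{q}}$, and the thermal noise $\nu_k$. After subtracting $|a|^2$, the desired-signal term leaves the beamforming-gain fluctuation $c_1^2 \beta_k \operatorname{Var}(\vect{h}_k^T \vect{w}_k)$; the multiuser terms contribute $c_1^2 \beta_k \sum_{j\neq k} E[|\vect{h}_k^T \vect{w}_j|^2]$; the artificial-noise term contributes $c_2^2 \beta_k\, E[\vect{h}_k^T \vect{S}\vect{S}^H \vect{h}_k^{\ast}]$, which collapses to $c_2^2 \beta_k\, E[\vect{h}_k^T \vect{S} \vect{h}_k^{\ast}]$ using $\vect{S}\vect{S}^H = \vect{S}$ (trivially true for R-AN, where $\vect{S}=\vect{I}_N$, and true for NS-AN because $\vect{S}$ is a Hermitian idempotent projector onto $\operatorname{nullspace}(\widehat{\vect{H}}^T)$); the quantization-noise term contributes $c_3^2 \beta_k \sigma_q^2\, E[\norm{\vect{h}_k}^2] = \beta_k \sigma_q^2 p_d$ by \eqref{defc3} together with $E[\norm{\vect{h}_k}^2]=N$; and the thermal noise contributes $1$. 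Summing reproduces exactly the expression stated in the lemma.

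Finally I would invoke the standard worst-case-noise bound \cite{Hissibi2003}: for a circularly symmetric complex Gaussian input and a noise term that is merely \emph{uncorrelated} with (not necessarily independent of) the input, replacing that noise by a complex Gaussian of the same variance cannot increase the mutual information. Applied to $r_k = a s_k + n_{\text{eff}}$, this immediately yields the achievable rate $\underline{R}_k = \log(1 + |a|^2/\sigma_{n_{\text{eff}}}^2)$. The main (mild) obstacle is justifying this last step cleanly: $n_{\text{eff}}$ is genuinely signal-dependent, both through the residual beamforming-gain term $c_1\sqrt{\beta_k}\bigl(\vect{h}_k^T \vect{w}_k - E[\vect{h}_k^T \vect{w}_k]\bigr)s_k$ and through $\bar{\vect{q}}$, which is correlated with $\vect{s}$ via the one-bit quantizer. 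The orthogonality-based choice of $a$ is exactly what guarantees the uncorrelatedness hypothesis needed for the Hassibi--Hochwald bound, and the law-of-large-numbers approximation already invoked in the paper (e.g.\ for the footnote after \eqref{eq10.3}) keeps the Gaussian-input assumption on $s_k$ matched to a near-Gaussian environment, so the resulting $\underline{R}_k$ is a genuine achievable-rate lower bound.
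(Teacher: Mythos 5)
Your proposal is correct and follows essentially the same route as the paper's own derivation in Sec.~III-B: the MMSE/orthogonality choice of $a$ in \eqref{eq10}, the term-by-term evaluation of $\sigma_{n_{\text{eff}}}^2 = E[|r_k|^2]-|a|^2$ using $\vect{S}\vect{S}^H=\vect{S}$ and $c_3^2\beta_k\sigma_q^2 E[\norm{\vect{h}_k}^2]=\beta_k\sigma_q^2 p_d$, and the worst-case uncorrelated-Gaussian-noise bound of \cite{Hissibi2003}. Your added remark that the uncorrelatedness of $n_{\text{eff}}$ with $s_k$ (via the orthogonality principle and the Bussgang decomposition of $\bar{\vect{q}}$) is exactly the hypothesis needed for that final bound is a faithful, slightly more explicit restatement of what the paper does implicitly.
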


\subsection {Impact of pilot attack}
\label{sec3.3}

Because of the pilot attack, the estimated channel of the legitimate user $k$ will contain information (i.e., correlation) about the channel of the eavesdropper. Here, we characterize this information which turns to be useful in our analysis of the main results. 
\begin{lemma} \label{lemma1}
	The eavesdropper's channel vector  can be expressed as
	\begin{equation}\label{lemma1_eq1}
	\vect{g} = \sqrt{\kappa_R} \hat{ \vect{h}}_k + \vect{\epsilon}
	\end{equation}
	where $\kappa_R$ is the received power ratio between the eavesdropper and  intercepted user $k$, i.e., 
	\begin{equation} \label{kappa_R}
	\kappa_R  = \frac{ p_{e}^{\prime}}{ p_k^{\prime}} = \frac{\beta_e p_{e}}{\beta_k p_k}
	\end{equation}
	and $\vect{\epsilon}$ is uncorrelated Gaussian (approximately) error vector with covariance matrix given by 
	\begin{align}\label{lemma1_eq2}
	\vect{C}_{\vect{\epsilon}} = (1-{\kappa_R} \sigma_{\hat{h}_{k}}^2 ) \vect{I}_N 
	\end{align} 
\end{lemma}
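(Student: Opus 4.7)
The plan is to recognize the identity as the LMMSE decomposition of $\vect{g}$ onto the subspace spanned by $\hat{\vect{h}}_k$. Because $\hat{\vect{h}}_k$ is already being treated as approximately Gaussian in this paper (via the law of large numbers applied to $\tilde{v}_k$, as stated just after \eqref{eq5}), the LMMSE estimator is also the MMSE estimator, so writing $\vect{g}=c\,\hat{\vect{h}}_k+\vect{\epsilon}$ with $\vect{\epsilon}$ orthogonal to $\hat{\vect{h}}_k$ gives a decomposition into uncorrelated Gaussian parts. Since the channel gains $g_n$ are i.i.d.\ across antennas and so are the estimates $\hat{h}_{nk}$, I would work one antenna at a time and then assemble the vector statement.

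First I would compute the cross-correlation $E[g_n\hat h_{nk}^{\ast}]$ using \eqref{eq4.1}--\eqref{eq4.2}. Substituting $\hat h_{nk}=\lambda_k\tilde v_k$ and keeping only the term proportional to $g_n^{\ast}$ (the other terms $h_{nk}$, $\tilde z_k$, $\tilde q_k$ are independent of $g_n$), I get
\begin{equation*}
E[g_n\hat h_{nk}^{\ast}]=\lambda_k\sqrt{\gamma^{2}\tau p_e^{\prime}}\,E[|g_n|^{2}]=\lambda_k\sqrt{\gamma^{2}\tau p_e^{\prime}}.
\end{equation*}
Comparing with the expressions for $\lambda_k$ and $\sigma_{\hat h_k}^{2}$ in \eqref{eq4.2}--\eqref{eq4.3} yields the convenient relation $\lambda_k=\sigma_{\hat h_k}^{2}/\sqrt{\gamma^{2}\tau p_k^{\prime}}$, so that $E[g_n\hat h_{nk}^{\ast}]=\sigma_{\hat h_k}^{2}\sqrt{p_e^{\prime}/p_k^{\prime}}=\sigma_{\hat h_k}^{2}\sqrt{\kappa_R}$. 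The orthogonality principle then fixes the LMMSE coefficient as $c=E[g_n\hat h_{nk}^{\ast}]/\sigma_{\hat h_k}^{2}=\sqrt{\kappa_R}$, which is precisely the coefficient appearing in \eqref{lemma1_eq1}.

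Next I would compute the per-antenna error variance by the standard MMSE identity $\operatorname{Var}(\epsilon_n)=E[|g_n|^{2}]-|c|^{2}\sigma_{\hat h_k}^{2}=1-\kappa_R\sigma_{\hat h_k}^{2}$, matching the diagonal entries of \eqref{lemma1_eq2}. To finish the covariance claim I need to show the off-diagonal entries of $\vect{C}_{\vect{\epsilon}}$ vanish. Expanding $E[\epsilon_n\epsilon_{n'}^{\ast}]$ for $n\neq n'$ splits into four pieces: $E[g_ng_{n'}^{\ast}]$, two mixed terms, and $\kappa_R E[\hat h_{nk}\hat h_{n'k}^{\ast}]$. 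Independence of $\{g_n\}$, the i.i.d.\ structure of $\{h_{nj}\}$, and the independence of the thermal noise $\{z_n(t)\}$ and quantization noise $\{q_n(t)\}$ across antennas make each of these four terms vanish, delivering $\vect{C}_{\vect{\epsilon}}=(1-\kappa_R\sigma_{\hat h_k}^{2})\vect{I}_N$.

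The only genuine subtlety I anticipate is the constant-chasing step that reduces $\lambda_k\sqrt{\gamma^{2}\tau p_e^{\prime}}$ to the clean form $\sigma_{\hat h_k}^{2}\sqrt{\kappa_R}$; everything else is bookkeeping once one commits to treating $\hat{\vect{h}}_k$ as Gaussian (which is already the standing convention in the paper) and uses the componentwise i.i.d.\ structure of the underlying channel and noise processes.
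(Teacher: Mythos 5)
Your proposal is correct and follows essentially the same route as the paper's Appendix~A: both identify \eqref{lemma1_eq1} as the (L)MMSE decomposition of $\vect{g}$ given $\hat{\vect{h}}_k$, compute the cross-correlation $E[\vect{g}\hat{\vect{h}}_k^H]=\lambda_k\sqrt{\gamma^2\tau p_e^{\prime}}\,\vect{I}_N$ from \eqref{eq_lem1}, reduce the coefficient to $\sqrt{\kappa_R}$ via the definitions of $\lambda_k$ and $\sigma_{\hat h_k}^2$, and obtain \eqref{lemma1_eq2} from the orthogonality principle. The only difference is presentational: you work per antenna and verify the off-diagonal terms vanish explicitly, while the paper carries out the same computation directly in vector form with identity-scaled covariance matrices.
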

\begin{proof}
The proof is straightforward which follows from the classical work on MMSE solution. Appendix \ref{Proof_lemma1} presents the details.
\end{proof}

Although Lemma \ref{lemma1} is a straightforward result, however, it is noteworthy. It can tell us how much information about the eavesdropper's channel $\vect{g}\sim  \mathcal{CN}(\vect{0}, \vect{I}_N)$ is contained in the channel estimate $\hat{ \vect{h}}_k\sim \mathcal{CN}(\vect{0}, \sigma_{\hat{h}_{k}}^2\vect{I}_N)$. 

\noindent Using Lemma \ref{lemma1}, the mutual information between eavesdropper's and legitimate user's channels is obtained as follows:
\begin{align} 
\label{eq_mutualinfo}
\nonumber
I(\vect{g}; \hat{\vect{h}}_k) &= h(\vect{g})-h(\vect{g}|\hat{\vect{h}}_k) \\ \nonumber
&=  h(\vect{g})-h(\vect{\epsilon}) \\\nonumber
&=N \log(\pi e)-N \log\left(\pi e (1-{\kappa_R} \sigma_{\hat{h}_{k}}^2)\right)\\
&=N \log\left( \frac{1}{1-\kappa_R \sigma_{\hat{h}_{k}}^2} \right) \ge 0, 
\end{align}

The intuitive result in \eqref{eq_mutualinfo} indicates that $I(\vect{g};\hat{\vect{h}}_k)$ can grow large and will be limited only by AWGN and quantization noise when $\kappa_R\gg1$. The equality in \eqref{eq_mutualinfo} is satisfied when $\kappa_R = 0$, i.e., passive eavesdropping. Note that when $\kappa_R\gg 1$ (eavesdropper's received power is much larger than received power of legitimate user), $ \sigma_{\hat{h}_{k}}^2$ becomes very small (i.e., channel estimate becomes unreliable) and vice versa. However, the product $\kappa_R \sigma_{\hat{h}_{k}}^2$ is always less than unity.

Since in particular the nullspace noise is a function of $\hat{ \vect{h}}_k$, which is correlated with $\vect{g}$, part of this noise lives in the nullspace of the eavesdropper's channel. Thus this part of nullspace noise will be annihilated at the eavesdropper, giving rise to an increase in his information rate and hence a significant reduction in secrecy rate.

\subsection{Main theoretical results}
\label{sec3.4}

Here, we give a lower bound on the achievable secrecy rate under different beamforming and artificial noise techniques. In the following, all derived information rates are given in their normalized form\footnote{The normalization factor is ($1-\tau/T_c$), i.e., the fraction of time over which downlink transmission is considered in this work.}.

We state our findings in the following two theorems.   
\begin{thm}
	\label{thm1}
	Consider a one-bit quantized Massive MIMO system with $N$ antennas at the BS and $K$ single-antenna users in the presence of a single-antenna active eavesdropper. Also, imperfect CSI is assumed to be available at the BS. If BS uses MRT-BF, then a lower bound on the achievable rate of the intercepted user $k$,  is given by
	\begin{equation} \label{thm1_MRT}
	\underline{R}_{k}^{\textnormal{MRT}} = \log\left (1+\frac{2 \theta \pi^{-1} \operatorname{tr}^{-1}(\vect{\Sigma}) \beta_k \sigma_{\hat{h}_{k}}^4  p_d N}{{2  \theta\beta_k p_d}/{\pi} +P_{k}^{\textnormal{AN}}+\beta_k \sigma_q^2 p_d+1}\right).
	\end{equation}
Further, if the BS uses ZF-BF, then a lower bound on the achievable rate is given by 
\begin{equation}\label{thm1_ZF}
	\underline{R}_{k}^{\textnormal{ZF}} =  \log \left( 1+ \frac{2 \theta  \pi^{-1} \operatorname{tr}^{-1}(\vect{\Sigma}^{-1}) \beta_k p_d (N-K)}{{2 \theta \beta_k p_d }(1-\sigma_{\hat{h}_{k}}^2) /\pi +P_{k}^{\textnormal{AN}} + \beta_k \sigma_q^2 p_d +1}\right)
	\end{equation}
	where  $P^{\textnormal{AN}} $ is the leakage power of artificial noise seen at the intercepted user $k$ defined as
	\begin{equation} \label{thm1_AN}
	P_{k}^{\textnormal{AN}} =
	\begin{cases}
	{2 \bar{\theta} \beta_k p_d }/{\pi}&\textnormal{ R-AN }  \\
	{2 \bar{\theta} \beta_k p_d }(1-\sigma_{\hat{h}_{k}}^2)/\pi &\textnormal{NS-AN}.
	\end{cases}
	\end{equation} 
\end{thm}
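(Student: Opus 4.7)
The plan is to instantiate Lemma \ref{R_k_general} for each of the four combinations of beamforming and artificial-noise schemes and reduce every expectation to a scalar in $\operatorname{tr}(\vect{\Sigma})$, $\operatorname{tr}(\vect{\Sigma}^{-1})$, and $\sigma_{\hat{h}_k}^2$. The backbone is the LMMSE decomposition $\vect{h}_k=\hat{\vect{h}}_k+\tilde{\vect{h}}_k$: joint Gaussianity upgrades the LMMSE orthogonality principle to full independence, so the error $\tilde{\vect{h}}_k$, with covariance $(1-\sigma_{\hat{h}_k}^2)\vect{I}_N$, is independent not only of $\hat{\vect{h}}_k$ but, thanks to pilot orthogonality and to the pilot attack being confined to the $k$-th correlator output, of the entire estimate matrix $\hat{\vect{H}}$. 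Moreover, for $j\neq k$, $\hat{\vect{h}}_j$ depends only on $\vect{h}_j$ and on the $j$-th correlator's noise, hence is independent of $\vect{h}_k$.

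For MRT-BF, $\vect{w}_k=\hat{\vect{h}}_k^{\ast}$ gives $E[\vect{h}_k^T\hat{\vect{h}}_k^{\ast}]=E[\|\hat{\vect{h}}_k\|^2]=N\sigma_{\hat{h}_k}^2$, so $|a|^2=c_1^2\beta_k N^2\sigma_{\hat{h}_k}^4$. Writing $\vect{h}_k^T\hat{\vect{h}}_k^{\ast}=\|\hat{\vect{h}}_k\|^2+\tilde{\vect{h}}_k^T\hat{\vect{h}}_k^{\ast}$ and invoking independence, the variance splits into $N\sigma_{\hat{h}_k}^4$ (variance of a complex chi-squared of order $2N$) plus $N(1-\sigma_{\hat{h}_k}^2)\sigma_{\hat{h}_k}^2$ (conditional complex Gaussian), summing to $N\sigma_{\hat{h}_k}^2$. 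For $j\neq k$, independence of $\vect{h}_k$ from $\hat{\vect{h}}_j$ gives $E[|\vect{h}_k^T\hat{\vect{h}}_j^{\ast}|^2]=N\sigma_{\hat{h}_j}^2$. The aggregate $N\operatorname{tr}(\vect{\Sigma})$, multiplied by $c_1^2\beta_k=2\theta\beta_k p_d/[\pi N\operatorname{tr}(\vect{\Sigma})]$ from \eqref{defc1}, telescopes to the first denominator term $2\theta\beta_k p_d/\pi$ of \eqref{thm1_MRT}.

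For ZF-BF, the identity $\hat{\vect{H}}^T\vect{W}=\vect{I}_K$ forces $\hat{\vect{h}}_k^T\vect{w}_k=1$ and $\hat{\vect{h}}_j^T\vect{w}_k=0$ for $j\neq k$; therefore $\vect{h}_k^T\vect{w}_k=1+\tilde{\vect{h}}_k^T\vect{w}_k$, $\vect{h}_k^T\vect{w}_j=\tilde{\vect{h}}_k^T\vect{w}_j$, and $|a|^2=c_1^2\beta_k$. Independence of $\tilde{\vect{h}}_k$ from $\hat{\vect{H}}$ makes each squared cross-term equal to $(1-\sigma_{\hat{h}_k}^2)E[\|\vect{w}_j\|^2]$, and $E[\|\vect{w}_j\|^2]$ is the $(j,j)$-entry of $E[(\hat{\vect{H}}^T\hat{\vect{H}}^{\ast})^{-1}]=\vect{\Sigma}^{-1}/(N-K)$, via the central-Wishart inverse identity already used in \eqref{eq_norm_zf}. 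Summing over $j$ yields $(1-\sigma_{\hat{h}_k}^2)\operatorname{tr}(\vect{\Sigma}^{-1})/(N-K)$, whose product with $c_1^2\beta_k$ collapses to $2\theta\beta_k p_d(1-\sigma_{\hat{h}_k}^2)/\pi$, matching the first denominator term of \eqref{thm1_ZF}, while the numerator follows directly from $c_1^2\beta_k$.

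The AN leakage $c_2^2\beta_k E[\vect{h}_k^T\vect{S}\vect{h}_k^{\ast}]$ is immediate for R-AN ($\vect{S}=\vect{I}_N$) and equals $2\bar{\theta}\beta_k p_d/\pi$. For NS-AN the hard part, and the main obstacle of the proof, is that $\vect{h}_k$ and $\vect{P}_{\text{proj}}$ are correlated, so one cannot directly invoke $E[\vect{h}_k^T\vect{P}_{\text{proj}}\vect{h}_k^{\ast}]=\operatorname{tr}(\vect{P}_{\text{proj}})$. The fix is again to split $\vect{h}_k=\hat{\vect{h}}_k+\tilde{\vect{h}}_k$ and exploit two facts separately: the deterministic identity $\vect{P}_{\text{proj}}\hat{\vect{h}}_k^{\ast}=\hat{\vect{h}}_k^{\ast}$ (since $\hat{\vect{h}}_k^{\ast}$ lies in the column span of $\hat{\vect{H}}^{\ast}$), giving $E[\hat{\vect{h}}_k^T\vect{P}_{\text{proj}}\hat{\vect{h}}_k^{\ast}]=N\sigma_{\hat{h}_k}^2$; and the independence of $\tilde{\vect{h}}_k$ from the projector together with $\operatorname{tr}(\vect{P}_{\text{proj}})=K$ (a.s., since $\hat{\vect{H}}$ has full column rank), giving $E[\tilde{\vect{h}}_k^T\vect{P}_{\text{proj}}\tilde{\vect{h}}_k^{\ast}]=(1-\sigma_{\hat{h}_k}^2)K$. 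Subtracting from $E[\|\vect{h}_k\|^2]=N$ produces $(N-K)(1-\sigma_{\hat{h}_k}^2)$, and multiplication by $c_2^2\beta_k=2\bar{\theta}\beta_k p_d/[\pi(N-K)]$ from \eqref{defc2} yields the NS-AN value in \eqref{thm1_AN}.
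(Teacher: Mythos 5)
Your proposal is correct and follows essentially the same route as the paper's Appendix B: the decomposition $\vect{h}_k=\hat{\vect{h}}_k+\vect{e}_k$, the Wishart-inverse identity for $E[\norm{\vect{w}_{\text{zf},j}}^2]$, and the split of $E[\vect{h}_k^T\vect{P}_{\text{proj}}\vect{h}_k^{\ast}]$ into the projected-estimate term $N\sigma_{\hat{h}_k}^2$ and the error term $K(1-\sigma_{\hat{h}_k}^2)$ all match the paper's steps, with only cosmetic differences (e.g., you compute $\operatorname{Var}(\norm{\hat{\vect{h}}_k}^2)$ directly as a chi-squared variance where the paper uses $E[\norm{\hat{\vect{h}}_k}^4]=N(N+1)\sigma_{\hat{h}_k}^4$).
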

\begin{proof}
	See Appendix \ref{proof_thm1}.
\end{proof}

\begin{thm}
	\label{thm2}
	Consider the system model in Theorem \ref{thm1}. When the number of base station antennas $N$ is sufficiently large, an upper bound on the ergodic information rate leaked to the eavesdropper is given by (equal or approximate)
	
	\begin{equation} \label{thm2_eq1}
	\overline{R}_{e}^{\textnormal{MRT}} \cong    \log \left( 1+ \frac{ 2 \theta   \beta_{e} p_d \sigma_{\hat{h}_{k}}^2  \left (\kappa_R \sigma_{\hat{h}_{k}}^2 N+ 1 \right) } { \pi \operatorname{tr}(\vect{\Sigma}) (P_{e}^{\textnormal{AN}}   + \beta_{e} p_d \sigma_q^2 + 1)}\right)
	\end{equation}
	when the BS uses MRT-BF, and when the BS uses ZF-BF, 
	\begin{equation}\label{thm2_eq2}
	\overline{R}_{e}^{\textnormal{ZF}} \cong   \log \left( 1+ \frac{ 2 \theta  \beta_{e} p_d \left (\kappa_R (N-K-1) + \sigma_{\hat{h}_{k}}^{-2}    \right) } { \pi \operatorname{tr}(\vect{\Sigma}^{-1}) (P_{e}^{\textnormal{AN}}   + \beta_{e} p_d \sigma_q^2 + 1)}\right)
	\end{equation}
where $\kappa_R$ is the receive power ratio defined in \eqref{kappa_R} and $P_{e}^{\text{AN}} $ is the leakage power of artificial noise seen at the eavesdropper defined as
	\begin{equation}\label{thm2_AN}
	P_{e}^{\textnormal{AN}} =
	\begin{cases}
	2\bar{\theta}  \beta_{e} p_d/\pi  &\textnormal{R-AN }  \\
	{2 \bar{\theta} \beta_{e} p_d }(1-\kappa_R  \sigma_{\hat{h}_{k}}^2)/\pi &\textnormal{NS-AN}.
	\end{cases}
	\end{equation} 
\end{thm}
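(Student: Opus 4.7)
The plan is to start from Lemma \ref{R_e_general}, which expresses $\overline{R}_{e}$ as $E[\log(1+c_1^2\beta_{e}\norm{\vect{w}_k^H\vect{g}^{\ast}}^2/\sigma_{e}^{2})]$, and replace it by the deterministic equivalent $\log(1+E[c_1^2\beta_{e}\norm{\vect{w}_k^H\vect{g}^{\ast}}^2]/E[\sigma_{e}^{2}])$. Both the numerator and the denominator are sums of $\Theta(N)$ weakly correlated Gaussian terms in $\vect{g}$, $\widehat{\vect{H}}$ and $\vect{\epsilon}$, so they concentrate around their means at rate $1/\sqrt{N}$; the continuous mapping theorem, combined with uniform integrability from boundedness of the SINR, then yields the $\cong$ relation stated in \eqref{thm2_eq1}--\eqref{thm2_eq2}. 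What remains is to evaluate $E[\norm{\vect{w}_k^H\vect{g}^{\ast}}^2]$ and $E[\sigma_{e}^{2}]$ in closed form.

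For the numerator, I would invoke Lemma \ref{lemma1} to write $\vect{g}=\sqrt{\kappa_R}\hat{\vect{h}}_k+\vect{\epsilon}$ with $\vect{\epsilon}$ independent of $\widehat{\vect{H}}$ and covariance $(1-\kappa_R\sigma_{\hat{h}_{k}}^2)\vect{I}_N$. For MRT-BF this gives $\vect{w}_k^H\vect{g}^{\ast}=\sqrt{\kappa_R}\norm{\hat{\vect{h}}_k}^2+\hat{\vect{h}}_k^T\vect{\epsilon}^{\ast}$; the cross-term vanishes in expectation, and the standard complex-Gaussian moments $E[\norm{\hat{\vect{h}}_k}^4]=N(N+1)\sigma_{\hat{h}_{k}}^4$ and $E[|\hat{\vect{h}}_k^T\vect{\epsilon}^{\ast}|^2]=N\sigma_{\hat{h}_{k}}^2(1-\kappa_R\sigma_{\hat{h}_{k}}^2)$ combine to yield $N\sigma_{\hat{h}_{k}}^2(\kappa_R\sigma_{\hat{h}_{k}}^2 N+1)$. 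For ZF-BF, the identity $\widehat{\vect{H}}^T\hat{\vect{h}}_k^{\ast}=(\widehat{\vect{H}}^T\widehat{\vect{H}}^{\ast})\vect{e}_k$ collapses the signal component of $\vect{w}_k^H\vect{g}^{\ast}$ to the constant $\sqrt{\kappa_R}$, and the residual error piece contributes $(1-\kappa_R\sigma_{\hat{h}_{k}}^2)\,E\bigl\{[(\widehat{\vect{H}}^T\widehat{\vect{H}}^{\ast})^{-1}]_{kk}\bigr\}=(1-\kappa_R\sigma_{\hat{h}_{k}}^2)/[(N-K)\sigma_{\hat{h}_{k}}^2]$ by the inverse-Wishart identity already used in \eqref{eq_norm_zf}. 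Multiplying by $c_1^2\beta_{e}$ with $c_1^2$ read off from Section \ref{sec3.1} and simplifying then reproduces the numerators in \eqref{thm2_eq1} and \eqref{thm2_eq2}.

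For the denominator, the AWGN term contributes $1$ and the quantization term equals $c_3^2\beta_{e}\sigma_q^2 E[\norm{\vect{g}}^2]=\beta_{e} p_d\sigma_q^2$ immediately. The artificial-noise contribution $c_2^2\beta_{e}E[\vect{g}^T\vect{S}\vect{g}^{\ast}]$ splits as follows: under R-AN, $\vect{S}=\vect{I}_N$ and $E[\norm{\vect{g}}^2]=N$ so $c_2^2\beta_{e}N=2\bar{\theta}\beta_{e}p_d/\pi$; under NS-AN, the defining nullspace property gives $\vect{S}\hat{\vect{h}}_k^{\ast}=\vect{0}$, which annihilates every term in the bilinear expansion except $\vect{\epsilon}^T\vect{S}\vect{\epsilon}^{\ast}$, whose conditional expectation $(1-\kappa_R\sigma_{\hat{h}_{k}}^2)\operatorname{tr}(\vect{S})=(1-\kappa_R\sigma_{\hat{h}_{k}}^2)(N-K)$ delivers the NS-AN form of $P_{e}^{\textnormal{AN}}$ after scaling by $c_2^2\beta_{e}$. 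Plugging these pieces into the deterministic equivalent completes the derivation.

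The main obstacle is not the moment algebra, which is routine, but rather the justification of the passage from $E[\log(1+X/Y)]$ to $\log(1+E[X]/E[Y])$: it is neither a straight Jensen bound nor an exact identity, and the dependence between numerator and denominator through $\vect{g}$ complicates a naive argument. A clean route is to condition on $\widehat{\vect{H}}$, apply Gaussian Poincar\'e-type inequalities to the quadratic forms in $\vect{\epsilon}$ that constitute $X$ and $Y$, and then take the outer expectation over $\widehat{\vect{H}}$, whose $NK$ i.i.d.\ entries concentrate the residual terms at the same rate. Once this asymptotic equivalence is accepted---as is standard in the Massive MIMO deterministic-equivalents literature---the two displays of the theorem follow.
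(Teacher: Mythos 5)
Your moment computations coincide exactly with the paper's: the decomposition $\vect{g}=\sqrt{\kappa_R}\hat{\vect{h}}_k+\vect{\epsilon}$ from Lemma \ref{lemma1}, the fourth moment $E[\norm{\hat{\vect{h}}_k}^4]=N(N+1)\sigma_{\hat{h}_{k}}^4$, the inverse-Wishart identity for $E[\norm{\vect{w}_{\text{zf},k}}^2]$, and the treatment of $\vect{g}^T\vect{S}\vect{g}^{\ast}$ under NS-AN (the paper reaches $(N-K)(1-\kappa_R\sigma_{\hat{h}_{k}}^2)$ by approximating $\widetilde{\vect{U}}\widetilde{\vect{U}}^H$ by $(1-K/N)\vect{I}_N$ and invoking the strong law on $\norm{\vect{\epsilon}}^2$, whereas you condition on $\widehat{\vect{H}}$ and use $\operatorname{tr}(\vect{S})=N-K$; both are the same approximation). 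The one genuine methodological difference is the first step. The paper applies Jensen's inequality to $E[\log(1+X/Y)]$ to get $\log(1+E[X/Y])$ -- which preserves the upper-bound character of $\overline{R}_e$ exactly -- and only then replaces the denominator $\sigma_e^2$ by its almost-sure limit. You instead posit the two-sided deterministic equivalent $\log(1+E[X]/E[Y])$ and try to justify it by concentration of both numerator and denominator. That justification is too strong in one regime: for MRT-BF the numerator is dominated by $|\hat{\vect{h}}_k^T\vect{\epsilon}^{\ast}|^2$ when $\kappa_R$ is small (and entirely so when $\kappa_R=0$), and this quadratic form, normalized by $N$, converges in distribution to an exponential law rather than to a point mass -- it does not concentrate. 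Your claimed asymptotic equality therefore fails there, but the formula you derive is still valid as an upper bound precisely because Jensen's inequality covers that case; this is why the paper routes the argument through Jensen first and only claims $\lesssim$, later softened to $\cong$. If you replace your deterministic-equivalent step by Jensen followed by the a.s. convergence of $\sigma_e^2$ alone, your proof becomes the paper's proof with no loss.
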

\begin{proof}
	See Appendix \ref{proof_thm2}
\end{proof}

From \eqref{thm1_MRT} and \eqref{thm1_ZF}  we identify the different components of noise at the legitimate user $k$ as follows. The term ${2  \theta\beta_k p_d}/{\pi}$ in \eqref{thm1_MRT} or  ${2 \theta \beta_k p_d }(1-\sigma_{\hat{h}_{k}}^2) /\pi$ in \eqref{thm1_ZF} captures the effect of \emph{beamforming gain penalty \footnote{Beamforming gain penalty is due to CSI uncertainty at the user since user relies on channel statistics rather than instantaneous channel realization.}plus inter-user interference}, the term $P_{k}^{\text{AN}}$ captures the leakage power of \emph{artificial noise}, and the term $\beta_k \sigma_q^2 p_d+1$ captures the effect of \emph{quantization noise and AWGN}. For more details, see Appendix \ref{proof_thm1}.

It is clear from \eqref{thm1_AN} and \eqref{thm2_AN} that both the legitimate user and eavesdropper achieve higher data rates when the BS employs NS-AN than R-AN. Furthermore, their information rates increase with increasing the number of BS antennas $N$ and decrease with increasing the number of users $K$. Note that the dependence of the rates on $K$ when MRT-BF is used is captured by $\operatorname{tr}(\vect{\Sigma})$, whereas  captured by the factors $N-K$ and $\operatorname{tr}(\vect{\Sigma}^{-1})$ when ZF-BF is used. 

Also, by inspecting  \eqref{thm1_AN}, it is obvious that when perfect CSI is available at the BS, the nullspace noise seen by the legitimate user becomes 0 due to $\sigma_{\hat{h}_{k}}^2=1$. That is to say, the artificial noise is perfectly aligned with the nullspace of the channel.  Again, from \eqref{thm2_AN}, the assumption of perfect CSI implies $\kappa_R=0$ (passive eavesdropper), thus rendering both the random and nullspace artificial noises have the same negative effect on the information rate from the eavesdropper’s perspective. 

\subsection{Achievable secrecy rate}
\label{sec3.5}
From \eqref{thm1_ZF} and \eqref{thm2_eq2}, we rewrite the respective rates  $\underline{R}_{k}^{\textnormal{ZF}}$ and $\overline{R}_{e}^{\textnormal{ZF}}$ as follows:
  \begin{equation}
\label{thm1_ZF11}
\underline{R}_{k}^{\textnormal{ZF}} 
= \log \left( 1+ \frac{2 \theta  \beta_k p_d (N-K)}{A_1}\right)
\end{equation}
 \begin{IEEEeqnarray}{rcl}
\label{thm2_ZF11}
\overline{R}_{e}^{\textnormal{ZF}} \cong  \log \left( 1+ \frac{ 2 \theta  \beta_{e} p_d \ (\kappa_R (N-K-1) + \sigma_{\hat{h}_{k}}^{-2} ) } { A_2}\right) 
\end{IEEEeqnarray}
where $A_1,A_2$ are defined as
\begin{IEEEeqnarray}{rcl}
\nonumber
A_1 &=& \pi \operatorname{tr}(\vect{\Sigma}^{-1}) ( P_{k}^{\text{AN}}  +\frac {2 \theta \beta_k p_d }{\pi}(1-\sigma_{\hat{h}_{k}}^2)  + \beta_k \sigma_q^2 p_d +1) \\
A_2 &=& \pi \operatorname{tr}(\vect{\Sigma}^{-1}) (P_{e}^{\textnormal{AN}}   + \beta_{e} p_d \sigma_q^2 + 1).
\end{IEEEeqnarray}

Using \eqref{eq:Rs_Def}, we define $\underline{R}_s^{\text{ZF}} (p_d, \theta, N) =\left [\underline{R}_{k}^{\text{ZF}} - \overline{R}_{e}^{\text{ZF}}\right ]^+$ as a lower bound on secrecy rate\footnote{The explicit use of the parameters $p_d, \theta$ and $N$  in the secrecy rate is  introduced only for reasons of mathematical convenience.} when ZF-BF is used. Thus
\begin{align} 
\label{eq_thm12_sec_rate2}
\nonumber
\underline{R}_s^{\text{ZF}} (p_d, \theta, N) &= \Bigg[ \log \left( 1+ \frac{2 \theta  \beta_k p_d (N-K)}{A_1}\right)-\log \bigg( 1+  \\ \nonumber
&\frac{ 2 \theta  \beta_{e} p_d  (\kappa_R (N-K-1) + \sigma_{\hat{h}_{k}}^{-2} ) } { A_2}\bigg) \Bigg]^{+}\\
& = \left[\log\left(\frac{2 A_2 \theta  \sigma_{\hat{h}_{k}} ^2 p_d \beta _k N + C_1}{2 A_1 \theta  \kappa_R   \sigma_{\hat{h}_{k}} ^2 p_d \beta _e N  +C_2}\right)\right]^{+}
\end{align}

where $C_1$ and $C_2$ are defined as
\begin{equation}
\begin{aligned}
C_1 &= A_2  (A_1-2 \theta  K p_d \beta _k)\sigma_{\hat{h}_{k}} ^2 \\
C_2 &=A_1  (A_2 \sigma_{\hat{h}_{k}} ^2-2 \theta  p_d \beta _e (\kappa_R  (K+1) \sigma_{\hat{h}_{k}} ^2-1))
\end{aligned}
\end{equation}
Likewise, from \eqref{thm1_MRT} and \eqref{thm2_eq1}, we write the respective rates  $\underline{R}_{k}^{\textnormal{MRT}}$ and $\overline{R}_{e}^{\textnormal{MRT}}$ as follows: 
\begin{equation} \label{thm1_MRT11}
\underline{R}_{k}^{\textnormal{MRT}} =\log\left (1+\frac{2 \theta \beta_k \sigma_{\hat{h}_{k}}^4  p_d N}{B_1}\right)
\end{equation}

\begin{equation} \label{thm2_MRT11}
\overline{R}_{e}^{\textnormal{MRT}} \cong  \log \left( 1+ \frac{ 2 \theta   \beta_{e} p_d \sigma_{\hat{h}_{k}}^2   (\kappa_R \sigma_{\hat{h}_{k}}^2 N+ 1 ) } { B_2}\right)
\end{equation}
where $B_1$ and $B_2$ are defined as
\begin{equation}
\begin{aligned}
B_1 &=\pi \operatorname{tr}(\vect{\Sigma}) ({2  \theta\beta_k p_d}/{\pi} +P_{k}^{\text{AN}}+\beta_k \sigma_q^2 p_d+1) \\
B_2 &= \pi \operatorname{tr}(\vect{\Sigma}) (P_{e}^{\textnormal{AN}}   + \beta_{e} p_d \sigma_q^2 + 1).
\end{aligned}
\end{equation}

Denoting $\underline{R}_s^{\text{MRT}}(p_d, \theta, N)=  \left[\underline{R}_{k}^{\text{MRT}} - \overline{R}_{e}^{\text{MRT}}  \right]^+$ as a lower bound on secrecy rate when MRT-BF is employed, thus we write 
\begin{align} \label{eq_thm12_sec_rate1} \nonumber
\underline{R}_s^{\text{MRT}}(p_d, \theta, N)&=\Bigg[ \log\left (1+\frac{2 \theta \beta_k \sigma_{\hat{h}_{k}}^4  p_d N}{B_1}\right)-\log \bigg( 1+  \\\nonumber
&\frac{ 2 \theta   \beta_{e} p_d \sigma_{\hat{h}_{k}}^2   (\kappa_R \sigma_{\hat{h}_{k}}^2 N+ 1 ) } { B_2}\bigg) \Bigg]^{+} \\
&=\left[ \log\left(\frac{2 B_2 \theta   \sigma_{\hat{h}_{k}} ^4 p_d \beta _k N  + C_3}{2 B_1 \theta  \kappa_R   \sigma_{\hat{h}_{k}} ^4 p_d \beta _e N + C_4}\right)\right]^{+}
\end{align}

where $C_3$ and $C_4$ are  defined as
\begin{equation}
\begin{aligned}
C_3 &=B_1 B_2\\
C_4&= B_1 B_2 + 2 B_1 \theta  \sigma_{\hat{h}_{k}} ^2 p_d \beta _e.
\end{aligned}
\end{equation}

Because of the concavity of the $\log(\cdot)$ and the non-monotonic behavior of  $\underline{R}_{k}^{\text{ZF}} - \overline{R}_{e}^{\text{ZF}}$ and $\underline{R}_{k}^{\text{MRT}} - \overline{R}_{e}^{\text{MRT}}$ with respect to $\theta$, the lower bounds \eqref{eq_thm12_sec_rate2} and \eqref{eq_thm12_sec_rate1} can be maximized with respect to $\theta$. The optimal $\theta$ maximizing the secrecy rates $\underline{R}_s^{\text{MRT}}(p_d, \theta, N)$ and $\underline{R}_s^{\text{ZF}}(p_d, \theta, N)$ can be found by solving ${d \underline{R}_s^{\text{MRT}}(p_d, \theta, N)}/d\theta = 0$ and ${d \underline{R}_s^{\text{ZF}}(p_d, \theta, N)}/d\theta = 0$ for $\theta \in (0,1)$. 

Due to the cumbersome algebraic expressions of optimal $\theta$, we omit them and hence compute the optimal values numerically instead. As shown in the next section, it turns out that when all parameters are fixed, the optimal policy is to allocate almost all power to artificial noise (i.e., signal power becomes infinitesimal) in the asymptotic limit of $N$.

\subsection{Secrecy rate under passive eavesdropping }
\label{sec3.6}

Passive eavesdropping corresponds to the situation where the eavesdropper does not transmit any signal (i.e., $p_e = 0, \kappa_R=0$) to conceal himself and his harm is limited only by eavesdropping on the downlink transmission to decode the confidential message sent to the legitimate user. Based on the CSI's availability at the BS, we study the following two scenarios. 

\textit{Passive eavesdropping with imperfect CSI} (P-ICSI):
Perfect CSI is rarely available at the BS, and hence it needs to be estimated beforehand. To specialize our results of the quantized system to this scenario, we redefine the variance of channel estimate $\sigma_{\hat{h}_{l}}^2$ and the diagonal matrix $\vect{\Sigma}$ derived previously as follows:
\begin{equation} 
\label{eq:Variable_change1}
\begin{cases}
 \sigma_{\hat{h}_{l}}^2 &\xrightarrow{p_e =0} \xi_l^2 =  \frac{\gamma^2 p_{l}^{\prime} \tau}{\gamma^2   p_{l}^{\prime} \tau +  \gamma^2 +   \sigma_q^2 }\\
 \vect{\Sigma} &\xrightarrow{p_e =0}  \vect{ \Pi} =  \operatorname{diag}(\xi_1^2, \xi_2^2, \cdots,\xi_K^2).
 \end{cases}
\end{equation} 

Substituting \eqref{eq:Variable_change1} in Theorem \ref{thm1}, \eqref{eq:Variable_change1} with $\kappa_R=0$ in Theorem \ref{thm2}, and using the definition \eqref{eq:Rs_Def}, we obtain the achievable secrecy rate ${\underline{R}_s^{\text{P-ICSI}}}$ shown in \eqref{eq:P-ICSI} (on the top of this page), where $P_{k, \textnormal{P-ICSI} }^{\textnormal{AN}} $ is defined as
\begin{figure*}[!btp]
	\normalsize
	\begin{equation}
	{\underline{R}_s^{\text{P-ICSI}} =}
	\begin{cases} 
	\left [   \log \left(1 + \frac{2 \operatorname{tr^{-1}}(\mathbf{\Pi} )  \xi_k^4 \theta  p_d \beta _k N }{ p_d \beta _k \left(2 \theta +\pi  \sigma _q^2\right)+\pi  (P_{k, \textnormal{P-ICSI} }^{\textnormal{AN}}+1)}\right)-\log \left(1+ \frac{2 \operatorname{tr^{-1}}(\mathbf{\Pi} )  \xi_k^2 \theta  p_d \beta _e}{p_d \beta _e \left(2-2 \theta +\pi  \sigma _q^2\right)+\pi}\right) \right]^+ & \text{MRT-BF} \\ 
	\left[ \log \left(1+\frac{2 \operatorname{tr^{-1}}(\mathbf{\Pi^{-1}} )  \theta  p_d \beta _k (N-K)}{ p_d \beta _k \left(2 \theta -2 \xi_k^2 \theta +\pi  \sigma_q^2\right)+\pi  (P_{k, \textnormal{P-ICSI} }^{\textnormal{AN}}+1) }\right) - \log \left(1 + \frac{2 \operatorname{tr^{-1}}(\mathbf{\Pi^{-1}} ) \xi_k^{-2}  \theta  p_d \beta _e}{p_d \beta _e \left(2-2 \theta +\pi  \sigma _q^2\right)+\pi }\right) \right]^+&\text{ZF-BF}  
	\end{cases}
	\label{eq:P-ICSI}
	\end{equation}
	\hrulefill
	\vspace*{0pt}
\end{figure*}

\begin{align} \label{eq:P-ICSI_AN}
P_{k, \textnormal{P-ICSI} }^{\textnormal{AN}} &=
\begin{cases}
\frac{2 \bar{\theta} p_d \beta _k}{\pi }&\text{R-AN }  \\
\frac{2 \bar{\theta} p_d \beta _k \left(1-\xi_k^2\right)  }{\pi }&\text{NS-AN}.
\end{cases} 
\end{align}

\textit{Passive eavesdropping with perfect CSI} (P-PCSI): The assumption of perfect CSI at the BS corresponds to $\xi_l = 1$ and hence $\vect{\Pi} = \vect{I}_K$. Thus, substituting $\xi_k = 1$ and $\vect{\Pi} = \vect{I}_K$ in \eqref{eq:P-ICSI} and \eqref{eq:P-ICSI_AN}, the secrecy rate in \eqref{eq:P-PCSI} (on the top of this page) follows directly, where
 \begin{figure*}[!btp]
	\normalsize
\begin{equation}
{\underline{R}_s^{\text{P-PCSI}} =}
\begin{cases} 
\left [   \log \left(1+ \frac{2 \theta  p_d \beta _k N/K}{ p_d \beta _k \left(2 \theta +\pi  \sigma _q^2\right)+\pi  (P_{k, \textnormal{P-PCSI} }^{\textnormal{AN}} +1)}\right)- \log \left(1+\frac{2 \theta  p_d \beta _e/K }{p_d \beta _e \left(2-2 \theta +\pi  \sigma _q^2\right)+\pi}\right) \right]^+ & \text{MRT-BF} \\ 
\left[\underbrace{\log \left(1+\frac{2 \pi^{-1} \theta  p_d \beta _k (N/K-1)}{p_d \beta _k \sigma_q^2 +P_{k, \textnormal{P-PCSI} }^{\textnormal{AN}} +1 }\right)}_{R_k^{\text{P-PCSI}}} -  \underbrace{\log \left(1+\frac{2 \theta  p_d \beta _e /K}{p_d \beta _e \left(2-2 \theta +\pi  \sigma _q^2\right)+\pi}\right)}_{R_e^{\text{P-PCSI}}} \right]^+&\text{ZF-BF}  
\end{cases}
\label{eq:P-PCSI}
\end{equation}
\hrulefill
\vspace*{0pt}
\end{figure*}
\begin{equation} 
P_{k, \textnormal{P-PCSI} }^{\textnormal{AN}} =
\begin{cases}
{2 \bar{\theta}p_d \beta _k}/{\pi } &\text{R-AN }  \\
0&\text{NS-AN}.
\end{cases}
\end{equation} 
The result of ZF-BF in \eqref{eq:P-PCSI} can be directly obtained from [\cite{8626548},eq. (27)]\footnote{More specifically, $R_k^{\text{P-PCSI}}$ in \eqref{eq:P-PCSI} can be directly obtained from [\cite{8626548}, Eqs. (19) and (20)] and $R_e^{\text{P-PCSI}}$ in \eqref{eq:P-PCSI} is obtained from [\cite{8626548}, Eq.(23)] with the assumption that the variance of AWGN at the eavesdropper is  0 and the ratio $\alpha = K/N$ is set to 0 instead of being fixed as assumed in \cite{8626548}.}.

Finally, we remark that under passive eavesdropping, both R-AN and NS-AN look like random noise from the perspective of eavesdropper and hence they have the same effect on his rate. Further, while the achievable rate of legitimate user increases with increasing $N$, however, the eavesdropper's rate is independent of the number of BS antenna. Therefore, the secrecy rate steadily increases with $N$, contrary to the case of active eavesdropping with a single-antenna eavesdropper. This observation has already been reported in the literature (i.e., see \cite{Kapetanovic2015}) under unquantized systems and hence it holds true for quantized systems as well.

\subsection{Secrecy rate in unquantized system}
\label{sec3.7}
With the unquantized system, it is assumed that the BS has access to the original received signal (uplink) and the transmitted signal (in downlink) undergoes no quantization, i.e.,  the BS is assumed to have infinite-resolution ADCs and DACs. 

To specialize our results of the quantized system to the unquantized system, we redefine the set of parameters $\{p_d, \sigma_q^2 ,\sigma_{\hat{h}_{l}}^2,  \vect{\Sigma} \}$ as follows: 
\begin{equation} 
\label{eq:Variable_change2}
\begin{cases}
p_d &\rightarrow {p_d \pi}/{2} ,  \sigma_q^2  \rightarrow  0 \hspace{5pt} (\textnormal{Remark} \ref{remark:rem_UQ})\\
  \sigma_{\hat{h}_{l}}^2 &\xrightarrow{\sigma_q^2=0, \gamma=1}   \vartheta_l^2  = \frac{ p_{l}^{\prime} \tau}{  p_{l}^{\prime} \tau +  p_{e}^{\prime} \tau \delta(l-k)+  1  }\\
 \vect{\Sigma} &\rightarrow \vect{ \Xi} =  \operatorname{diag}(\vartheta_1^2,\vartheta_2^2,\cdots,\vartheta_K^2).
\end{cases}
\end{equation} 

Considering the parameters' replacement \eqref{eq:Variable_change2} in Theorems \ref{thm1} and \ref{thm2} and using the definition \eqref{eq:Rs_Def} yields the achievable secrecy rate as given in \eqref{eq:Rs_UQ} shown at the top of page \pageref{eq:Rs_UQ}, where  $P_{k, \textnormal{UQ}}^{\textnormal{AN}} $ and $P_{e, \textnormal{UQ}}^{\textnormal{AN}}$ are, respectively, given by
\begin{equation} 
\label{thm1_AN2222}
P_{k, \textnormal{UQ}}^{\textnormal{AN}} =
\begin{cases}
\bar{\theta}  \beta _k p_d &\text{R-AN }  \\
\bar{\theta}  \beta _k p_d  \left(1-\vartheta_k^2 \right)   &\text{NS-AN}
\end{cases}
\end{equation} 
\begin{equation}
\label{thm2_AN222}
P_{e, \textnormal{UQ}}^{\textnormal{AN}} =
\begin{cases}
\bar{\theta}  \beta _e p_d&\textnormal{R-AN }  \\
\bar{\theta}  \beta _e p_d \left(1-\kappa_R\vartheta_k^2   \right) &\textnormal{NS-AN}.
\end{cases}
\end{equation}

\section{Asymptotic Performance comparison}
\label{sec4}
Inspecting the  secrecy rates in \eqref{eq_thm12_sec_rate2} and \eqref{eq_thm12_sec_rate1} provides no clear clue of how the performance of MRT-BF and ZF-BF can be compared. Therefore, a better understanding of the performance gap can be gained through asymptotic performance. Our focus here will be on the asymptotic behavior of the beamforming/artificial noise schemes as the number of BS antennas increases with no limit. As shown next, the asymptotic performance renders it easy to capture the important parameters for a specific scheme to guarantee a positive secrecy rate, which turns to be even very useful for the non-asymptotic case.

In Massive MIMO the transmit power of the BS can be cut down as the number of BS antennas grows large while maintaining a nonzero data rate for each user in the system, i.e.,  \textit{power-scaling law} \cite{6457363}. Since our concern is the secrecy rate rather than the conventional rate, thus,  it is of interest to know whether the power-scaling law remains valid. That is to say, we seek to see if it is possible to reduce the transmit power at the BS as $N \to \infty$ while maintaining a nonzero secrecy rate.

In the following, we study the asymptotic behavior of the secrecy rate in Massive MIMO system with and without transmit power scaling at the BS. 

\subsection{Massive MIMO with no power scaling}
\label{sec4.1}

Here, we assume that the transmit power $p_d $ at the BS is not scaled down as $N$ grows large (independent of $N$), i.e., no power scaling (no-PS). In the following, we state our results in the following corollary.
\begin{cor} \label{cor1}
	Assume the BS uses MRT-BF or ZF-BF. Then when R-AN is used, the maximum secrecy rate converges to
	\begin{equation} \label{eq_cor1}
	\underline{R}_{s, \textnormal{R-AN}}^{\textnormal{no-PS}} \rightarrow\left[ \log \left( \frac{\beta_k  (p_d  \beta _e (\pi  \sigma _q^2+2)+\pi ) }{  \kappa_R \beta _e   (p_d \beta _k  (\pi  \sigma _q^2+2)+\pi  )}\right)\right]^{+}
	\end{equation}
	and when NS-AN is used, the maximum secrecy rate converges to
	\begin{equation}\label{eq_cor2}
	\underline{R}_{s, \textnormal{NS-AN}}^{\textnormal{no-PS}} \rightarrow \left[\log\left( \frac{\beta _k (p_d \beta _e (\pi  \sigma _q^2+2-2 \kappa_R  \sigma_{\hat{h}_{k}}^2)+\pi )}{\kappa_R  \beta _e (p_d \beta _k (\pi  \sigma _q^2+2-2 \sigma_{\hat{h}_{k}}^2 )+\pi )} \right)\right]^{+}
	\end{equation}
	asymptotically as $N\to \infty$.
\end{cor}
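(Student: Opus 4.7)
My plan is to extract the limits directly from the closed-form expressions \eqref{eq_thm12_sec_rate2} and \eqref{eq_thm12_sec_rate1}. Both $\underline{R}_s^{\text{ZF}}(p_d,\theta,N)$ and $\underline{R}_s^{\text{MRT}}(p_d,\theta,N)$ have the structure $\bigl[\log\bigl((\alpha_1 N + \beta_1)/(\alpha_2 N + \beta_2)\bigr)\bigr]^+$, where every $\alpha_i, \beta_i$ depends on $\theta, K, p_d, \kappa_R, \sigma_q^2, \sigma_{\hat{h}_k}^2$ and the large-scale gains but not on $N$ (recall that $\vect{\Sigma}$ is $K\times K$ and its entries are $N$-independent). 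I will first fix $\theta\in(0,1)$, let $N\to\infty$ to obtain a pointwise limit $g_\infty(\theta)$, and then maximize this limit over $\theta$.

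The pointwise limit reduces to the ratio of leading coefficients: $g_\infty^{\text{ZF}}(\theta)=\log\bigl(A_2\beta_k/(A_1\kappa_R\beta_e)\bigr)$ and $g_\infty^{\text{MRT}}(\theta)=\log\bigl(B_2\beta_k/(B_1\kappa_R\beta_e)\bigr)$. The common prefactor $\pi\operatorname{tr}(\vect{\Sigma}^{-1})$ in $A_1,A_2$ cancels, and so does $\pi\operatorname{tr}(\vect{\Sigma})$ in $B_1,B_2$; a short algebraic check then shows that the extra signal-gain-penalty term $\tfrac{2\theta\beta_k p_d}{\pi}$ present in $B_1$ but absent from $A_1$ is exactly compensated by the $(1-\sigma_{\hat{h}_k}^2)$ factor decorating the ZF expression, so that $g_\infty^{\text{MRT}}(\theta)\equiv g_\infty^{\text{ZF}}(\theta)=:g_\infty(\theta)$ for each of the R-AN and NS-AN schemes. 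This is precisely the asymptotic equivalence of MRT-BF and ZF-BF announced in the paper.

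Next I maximize $g_\infty(\theta)$ over $\theta\in(0,1)$. The argument of the logarithm is a ratio of affine functions of $\theta$, so its derivative has constant sign. For NS-AN, and also for MRT-BF with R-AN, the sign is immediate, since the numerator strictly decreases while the denominator is nondecreasing in $\theta$; hence $g_\infty$ is strictly decreasing and the supremum is approached as $\theta\to0^+$. For ZF-BF with R-AN both numerator $N(\theta)$ and denominator $D(\theta)$ decrease in $\theta$, so one must compute $N'(\theta)D(\theta)-N(\theta)D'(\theta)$ explicitly and show it is non-positive, using $\sigma_{\hat{h}_k}^2<1$ and $\kappa_R\sigma_{\hat{h}_k}^2<1$ (the latter from Lemma \ref{lemma1}). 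Granted this monotonicity, the supremum of $g_\infty$ is realized by letting $\theta\to 0^+$, i.e., devoting essentially all transmit power to the artificial noise. Substituting $\bar{\theta}=1$ together with the R-AN and NS-AN leakage powers of Theorems \ref{thm1}--\ref{thm2}, and collecting the $\pi$-factors, produces exactly \eqref{eq_cor1} and \eqref{eq_cor2}. The outer $[\cdot]^+$ survives the limit because the argument of the log is continuous and bounded for each $\theta$, and a standard $\liminf$/$\limsup$ sandwich on $[\epsilon,1-\epsilon]\subset(0,1)$ justifies exchanging $\max_\theta$ with $\lim_N$.

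The main obstacle is the monotonicity check for ZF-BF with R-AN: after expansion, the sign of $N'D-ND'$ reduces to $\beta_k\beta_e p_d(\sigma_{\hat{h}_k}^2-1)(2+\pi\sigma_q^2)+\pi(\beta_k\sigma_{\hat{h}_k}^2-\beta_e)$, which is transparently non-positive whenever $\beta_k\sigma_{\hat{h}_k}^2\le\beta_e$ but, in the complementary regime (closer legitimate user with a good channel estimate), requires showing that the negative estimation-loss term dominates the positive imbalance term. Apart from this step, everything else is bookkeeping: gathering $\operatorname{tr}$ factors, substituting the $P_k^{\text{AN}}$ and $P_e^{\text{AN}}$ of Theorems \ref{thm1}--\ref{thm2} into the correct slots, and matching the simplified ratio against the right-hand sides of \eqref{eq_cor1}--\eqref{eq_cor2}.
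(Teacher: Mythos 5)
Your overall route is the same as the paper's: pass to the limit $N\to\infty$ in \eqref{eq_thm12_sec_rate2} and \eqref{eq_thm12_sec_rate1} to obtain $\log\!\left(A_2\beta_k/(A_1\kappa_R\beta_e)\right)$ and $\log\!\left(B_2\beta_k/(B_1\kappa_R\beta_e)\right)$, argue these are decreasing in $\theta$, and evaluate at $\theta\to 0^{+}$; this is exactly what the paper does. However, your claimed pointwise identity $g_\infty^{\text{MRT}}(\theta)\equiv g_\infty^{\text{ZF}}(\theta)$ is false. After the trace prefactors cancel, the ZF denominator contains $P_k^{\textnormal{AN}}+\tfrac{2\theta\beta_k p_d}{\pi}(1-\sigma_{\hat{h}_{k}}^2)$ while the MRT denominator contains $P_k^{\textnormal{AN}}+\tfrac{2\theta\beta_k p_d}{\pi}$, so the two limits differ by $\tfrac{2\theta\sigma_{\hat{h}_{k}}^2\beta_k p_d}{\pi}$ in the denominator and coincide only at $\theta=0$. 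The ``exact compensation'' you invoke does not occur; the two curves merely share the same value at the common maximizer $\theta\to0$, which is why the corollary ends up beamforming-independent. This slip is harmless for the final formulas, but the ``short algebraic check'' you describe would not close, and it is inconsistent with your own later treatment of the four cases as distinct.

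The more serious issue is the monotonicity obstacle you flag for ZF-BF with R-AN, which is not merely hard but genuinely fails in part of the parameter space: the quantity $\beta_k\beta_e p_d(\sigma_{\hat{h}_{k}}^2-1)(2+\pi\sigma_q^2)+\pi(\beta_k\sigma_{\hat{h}_{k}}^2-\beta_e)$ is strictly positive whenever $\beta_k\sigma_{\hat{h}_{k}}^2>\beta_e$ and $p_d$ is sufficiently small (e.g. $\beta_k=1$, $\beta_e=0.1$, $\sigma_{\hat{h}_{k}}^2=0.9$, $p_d=0.01$), in which case $f_1(\theta)$ is \emph{increasing} on $(0,1)$ and its supremum is attained as $\theta\to1$, giving a value strictly larger than \eqref{eq_cor1}. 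So the step cannot be completed as announced; one must either restrict to a regime where the expression is non-positive (it always is when $\beta_k=\beta_e$, since it then factors as $\beta_k(\sigma_{\hat{h}_{k}}^2-1)\left(\beta_k p_d(2+\pi\sigma_q^2)+\pi\right)\le0$), or interpret \eqref{eq_cor1} as the limit at $\theta\to0$ rather than the maximum over $\theta$. For what it is worth, the paper's own proof simply asserts that $f_1$ and $f_2$ are decreasing without verification, so you are being more careful than the source here --- but as written your proposal leaves this step open, and it cannot be closed in full generality.
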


\begin{proof}
	In the following we need to evaluate the secrecy rate $\underline{R}_s^{\text{ZF}}(p_d, \theta, N)$ \eqref{eq_thm12_sec_rate2} and $\underline{R}_s^{\text{MRT}}(p_d, \theta, N)$ \eqref{eq_thm12_sec_rate1} as $N\to \infty$, then maximize the resulting expressions with respect to $\theta$.
	
Consider the ZF-BF scheme. Taking the limit of \eqref{eq_thm12_sec_rate2} as $N\to \infty$ and using the definition of the leakage power of artificial noise \eqref{thm1_AN} and \eqref{thm2_AN}, we obtain
	\begin{IEEEeqnarray}{rCl} 
	\label{eq: cor1_proof1}
	\nonumber
&&f_1(\theta):= \lim_{N\to \infty}\underline{R}_s^{\text{ZF}}(p_d, \theta, N) \\ \nonumber
&=& \left[\lim_{N\to \infty} \log\left(\frac{2 A_2 \theta  \sigma_{\hat{h}_{k}} ^2 p_d \beta _k N  + C_1}{2 A_1 \theta  \kappa _R \sigma_{\hat{h}_{k}} ^2 p_d \beta _e  N  +C_2}\right) \right]^+ \\ \nonumber
 &=&\left[\log\left( \frac{A_2 \beta _k}{A_1   \beta _e \kappa_R} \right ) \right]^+\\ \nonumber
&=&\left[\log \left ( \frac{\beta _k (P_{e}^{\textnormal{AN}}+p_d \beta _e \sigma _q^2+ 1)}{\kappa_R  \beta _e ( P_{k}^{\text{AN}}  + \frac{2 \theta \beta_k p_d }{\pi}(1-\sigma_{\hat{h}_{k}}^2) + \beta_k \sigma_q^2 p_d +1)} \right) \right]^+\\ 
&=& \begin{cases}
\left[\log \left (\frac{\beta _k \left(p_d \beta _e (2-2 \theta +\pi  \sigma _q^2)+\pi \right)}{\kappa_R  \beta _e (p_d \beta _k (2-2 \theta  \sigma_{\hat{h}_{k}} ^2+\pi  \sigma_q^2 )+\pi )} \right) \right]^+&\textnormal{R-AN }  \\ \nonumber
\left[\log \left( \frac{\beta _k (p_d \beta _e (2 (\theta -1) (\kappa_R  \sigma_{\hat{h}_{k}} ^2-1)+\pi  \sigma _q^2)+\pi )}{\kappa_R  \beta _e (p_d \beta _k (\pi  \sigma_q^2-2 \sigma_{\hat{h}_{k}}^2+2)+\pi )} \right) \right]^+&\textnormal{NS-AN}. 
\end{cases}
\\
\end{IEEEeqnarray}

When MRT-BF is used, we proceed as follows. Taking the limit of \eqref{eq_thm12_sec_rate1} as $N\to \infty$ and using the definition of the leakage power of artificial noise \eqref{thm1_AN} and \eqref{thm2_AN}, we obtain
\begin{align}
	\label{eq: cor1_proof2}
	\nonumber
	&f_2(\theta) :=\lim_{N\to \infty}\underline{R}_s^{\text{MRT}}(p_d, \theta, N) \\ \nonumber
	&=\left[\lim_{N\to \infty} \log\left(\frac{2 B_2 \theta   \sigma_{\hat{h}_{k}} ^4 p_d \beta _k N  + C_3}{2 B_1 \theta  \kappa_R   \sigma_{\hat{h}_{k}} ^4 p_d \beta _e N + C_4}\right) \right]^+\\ \nonumber
	&=\left[\log \left(\frac{B_2 \beta _k}{B_1 \kappa_R  \beta _e} \right)\right]^+ \\  \nonumber
	&=\left[ \log\left(   \frac{ \beta _k  (P_{e}^{\textnormal{AN}}   + \beta_{e} p_d \sigma_q^2 + 1) }{ \kappa_R  \beta _e ({2  \theta\beta_k p_d}/{\pi} +P_{k}^{\text{AN}}+\beta_k \sigma_q^2 p_d+1) }    \right) \right]^+\\  \nonumber
	&=\begin{cases}
		\left[\log \left (\frac{\beta _k \left(p_d \beta _e (-2 \theta +\pi  \sigma _q^2+2)+\pi \right)}{\kappa  \beta _e (p_d \beta _k \left(\pi  \sigma _q^2+2\right)+\pi )} \right)\right]^+ &\textnormal{R-AN }  \\
	\left[	\log \left( \frac{\beta _k (p_d \beta _e (2 (\theta -1) (\kappa_R  \sigma_{\hat{h}_{k}} ^2-1)+\pi  \sigma _q^2)+\pi )}{\kappa_R  \beta _e (p_d \beta _k (2 (\theta -1) \sigma_{\hat{h}_{k}} ^2+\pi  \sigma _q^2+2)+\pi )} \right)\right]^+&\textnormal{NS-AN}.
	\end{cases}
	\\
\end{align}

\begin{figure*}[!bpt]
	\normalsize
	\begin{equation}
	\underline{R}_s^{\text{UQ}} =
	\begin{cases} 
	\left[	\log \left(1+ \frac{\operatorname{tr^{-1}}(\vect{\Xi} )  \vartheta_{k}^4 \theta  \beta _k  p_d  N}{\theta  p_d\beta _k + P_{k, \textnormal{UQ}}^{\textnormal{AN}} +1}\right) - \log \left(1 + \frac{\vartheta_{k} ^2 \theta  p_d \beta _e \left(\vartheta _{k}^2 \kappa_R  N+1\right)}{\text{tr}(\vect{\Xi}  ) ( P_{e, \textnormal{UQ}}^{\textnormal{AN}}+1) }\right) \right]^+ & \text{MRT-BF} \\ 
	\left[\log \left(1+\frac{\theta  p_d \beta _k (N-K) \operatorname{tr^{-1}}(\vect{\Xi^{-1}} ) }{\theta  p_d \beta _k (1-\vartheta_{k}^2) +P_{k, \textnormal{UQ}}^{\textnormal{AN}} +1}\right) -\log \left(1+\frac{\theta  p_d \beta _e \left({\vartheta _{k}^{-2}}+\kappa_R (N-K-1)\right)}{\text{tr}\left({\vect{\Xi^{-1}} }\right) (P_{e, \textnormal{UQ}}^{\textnormal{AN}}  +1) }\right)  \right]^+&\text{ZF-BF}  
	\end{cases}
	\label{eq:Rs_UQ}
	\end{equation}
	\hrulefill
	\vspace*{0pt}
\end{figure*}
Next, we maximize \eqref{eq: cor1_proof1} and \eqref{eq: cor1_proof2} with respect to $\theta$. It is easy to verify that the first derivatives $\frac{d}{d \theta}  f_1(\theta)$ and  $\frac{d}{d \theta}  f_2(\theta)$ have no critical points in $\theta \in (0,1)$ for both R-AN and NS-AN schemes, and $f_1(\theta)$ and $f_2(\theta)$ are decreasing functions of $\theta$. Thus the value of $\theta$ maximizing $f_1(\theta)$ and $f_2(\theta)$ coincides with the left endpoint ($\theta \to  0$), i.e., power allocated to signal becomes infinitesimal. 

Substituting  $\theta = 0$ in \eqref{eq: cor1_proof1} and \eqref{eq: cor1_proof2}, we obtain
\begin{IEEEeqnarray}{rCl} 
	\nonumber
	&&\underline{R}_{s}^{\textnormal{no-PS}}  =f_1(0)  = f_2(0)  \\ \nonumber
	& =&
	\begin{cases}
		\left[\log \left( \frac{\beta_k  \left(p_d  \beta _e \left(\pi  \sigma _q^2+2\right)+\pi \right) }{  \kappa_R \beta _e   (p_d \beta _k  \left(\pi  \sigma _q^2+2\right)+\pi  )}\right) \right]^+&\textnormal{if R-AN }  \\
		\left[\log\left( \frac{\beta _k (p_d \beta _e (\pi  \sigma _q^2+2-2 \kappa_R  \sigma_{\hat{h}_{k}}^2)+\pi )}{\kappa_R  \beta _e (p_d \beta _k (\pi  \sigma _q^2+2-2 \sigma_{\hat{h}_{k}}^2 )+\pi )} \right)\right]^+&\textnormal{if NS-AN}
	\end{cases}
	\\
\end{IEEEeqnarray} 
which is independent of the beamforming scheme.
\end{proof}

From \eqref{eq_cor1} and \eqref{eq_cor2}, a positive secrecy rate is possible if the transmit power ratio (during channel training) between the eavesdropper and intercepted user satisfies
\begin{equation}\label{eq_cor1_and_2_alpha}
\kappa_T = \frac{p_e}{p_k}< 1+\underbrace{ \frac{\pi  \beta _k \left(\beta _k-\beta _e\right)}{\left(p_d \beta _k \left(\pi  \sigma _q^2+2\right)+\pi \right) \beta _e^2  }}_{\Delta \beta}.
\end{equation}
Further, since \eqref{eq_cor1} and \eqref{eq_cor2} are positive under the same condition \eqref{eq_cor1_and_2_alpha}, we have that
\begin{equation}\label{eq_cor1_2_difference1}
\Delta^{\textnormal{no-PS}}= \underline{R}_{s, \text{NS-AN}}^{\textnormal{no-PS}}-\underline{R}_{s, \textnormal{R-AN}}^{\textnormal{no-PS}}  >0.
\end{equation}

We summarize our conclusions from Corollary \ref{cor1} as follows:
\begin{enumerate}
	\item  The NS-AN outperforms R-AN asymptotically, independent of the beamforming technique.
	\item Using R-AN entails more BS antennas to achieve the same performance of  NS-AN.
	\item Both NS-AN and R-AN are useless when the power ratio constraint in \eqref{eq_cor1_and_2_alpha} is violated.
\end{enumerate}


\subsection{Massive MIMO with power scaling}
\label{sec4.2}

Here, we assume that as $N \to \infty$, the transmit power at the BS can be scaled down by a factor of  $1/\sqrt{N}$ or $1/N$.  In the sequel, we use PS1 and PS2 to denote the situations where the transmit power is proportional to $1/\sqrt{N}$ and $1/N$, respectively. Hence,
\begin{equation}
p_d=
\begin{cases}
{\rho}\big /{\sqrt{N}}&\textnormal{PS1}\\
{\rho} \big/{{N}}  &\textnormal{PS2}
\end{cases}
\end{equation}
where $\rho$ is a fixed value (predetermined at the BS). 

We state our results in the following two corollaries. 
\begin{cor}\label{cor2}
	Consider the BS's transmit power is scaled down by a factor of $1/\sqrt{N}$.  If the BS employs MRT-BF or ZF-BF, then the maximum secrecy rate converges to
	\begin{equation}\label{eq_cor3_1}
	\underline{R}_s^{ \textnormal {PS1}} \to \left [ \log \left ( \frac{  \beta _k }{  \kappa_R \beta_e }\right ) \right]^{+}
	\end{equation}
	irrespective of the artificial noise scheme.
\end{cor}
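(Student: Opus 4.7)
The plan is to substitute $p_d = \rho/\sqrt{N}$ into the secrecy-rate lower bounds \eqref{eq_thm12_sec_rate2} for ZF-BF and \eqref{eq_thm12_sec_rate1} for MRT-BF, carefully track the scaling of every term in $N$, and then take $N\to\infty$. Because the target expression $\log(\beta_k/(\kappa_R\beta_e))$ is independent of $\theta$ and of the AN scheme, the $[\cdot]^+$-wrapped maximum over $\theta\in(0,1)$ will be immediate once the pointwise limit is established.

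\textbf{Step 1 (noise terms collapse to AWGN).} I would first inspect the effective-noise aggregates $A_1,A_2$ (ZF) and $B_1,B_2$ (MRT). Each constituent other than the unit AWGN contribution—namely the AN leakage $P_k^{\textnormal{AN}},P_e^{\textnormal{AN}}$ as in \eqref{thm1_AN}--\eqref{thm2_AN}, the quantization term $\beta_i\sigma_q^2 p_d$, and the beamforming-gain penalty $2\theta\beta_k p_d/\pi$—carries an explicit factor of $p_d=\rho/\sqrt N$ and is therefore $O(1/\sqrt N)$. Note that $\sigma_{\hat h_k}^2$ and the $\vect\Sigma$-traces depend only on the \emph{uplink} pilot powers and so are independent of the downlink scaling. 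Hence $A_1,A_2\to\pi\operatorname{tr}(\vect\Sigma^{-1})$ and $B_1,B_2\to\pi\operatorname{tr}(\vect\Sigma)$, both finite constants. This is precisely why the AN scheme disappears from the final expression.

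\textbf{Step 2 (diverging SINRs and their ratio).} Next I would analyse the numerators. For ZF-BF, the user SINR numerator is $2\theta\beta_k p_d(N-K)=2\theta\beta_k\rho\sqrt N\,(1-K/N)$, while the eavesdropper SINR numerator is $2\theta\beta_e p_d\bigl(\kappa_R(N-K-1)+\sigma_{\hat h_k}^{-2}\bigr)=2\theta\beta_e\kappa_R\rho\sqrt N\,(1-(K+1)/N)+O(1/\sqrt N)$. Both grow like $\sqrt N$, so I would handle the indeterminate $\log(\infty)-\log(\infty)$ by the identity $\log(1+x)=\log x+\log(1+1/x)$, the second summand vanishing as $x\to\infty$. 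The difference becomes $\log(\mathrm{SINR}_k/\mathrm{SINR}_e)+o(1)$, and in this ratio the common factors $2\theta\rho\sqrt N/(\pi\operatorname{tr}(\vect\Sigma^{-1}))$ cancel, leaving $\beta_k/(\kappa_R\beta_e)$. The MRT-BF computation is structurally identical: the numerators pick up an extra $\sigma_{\hat h_k}^4$ factor (common to both user and eavesdropper), the denominator normalizer is $\pi\operatorname{tr}(\vect\Sigma)$ rather than $\pi\operatorname{tr}(\vect\Sigma^{-1})$, and again everything cancels, yielding the same limit.

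\textbf{Step 3 (optimization and conclusion).} Since the pointwise limit $\log(\beta_k/(\kappa_R\beta_e))$ does not depend on $\theta$, the optimization in the definition of the maximum secrecy rate is trivial—every admissible $\theta$ attains it—and \eqref{eq_cor3_1} follows. The main technical care is in Step 2: verifying that the subleading $\sigma_{\hat h_k}^{-2}$ term in the eavesdropper numerator and the $K/N$ corrections are genuinely dominated by the leading $\sqrt N$ growth, and that the $\log(\infty)-\log(\infty)$ indeterminacy is resolved correctly. Once this scaling bookkeeping is pinned down, the cancellation of $\theta$, of the AN leakage, of $\sigma_q^2$, and even of $\sigma_{\hat h_k}^4$ (for MRT) is transparent, which explains the striking universality asserted by the corollary.
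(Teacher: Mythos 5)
Your proposal is correct and follows essentially the same route as the paper: the paper's proof of Corollary~\ref{cor2} likewise substitutes $p_d=\rho/\sqrt{N}$ into \eqref{eq_thm12_sec_rate2} and \eqref{eq_thm12_sec_rate1}, takes $N\to\infty$ (under which $A_1,A_2\to\pi\operatorname{tr}(\vect{\Sigma}^{-1})$ and $B_1,B_2\to\pi\operatorname{tr}(\vect{\Sigma})$ so the AN, quantization, and beamforming-penalty terms all vanish), and then maximizes over $\theta$, which is trivial since the limit $\log(\beta_k/(\kappa_R\beta_e))$ is $\theta$-independent. Your scaling bookkeeping, including the observation that $\sigma_{\hat{h}_k}^2$ and the $\vect{\Sigma}$-traces depend only on uplink quantities, matches the paper's reasoning.
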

\begin{proof}
The result is established by taking the limits of $\underline{R}_s^{\text{ZF}}(\rho/\sqrt{N}, \theta, N)$ \eqref{eq_thm12_sec_rate2}  and $\underline{R}_s^{\text{MRT}}(\rho/\sqrt{N}, \theta, N)$ \eqref{eq_thm12_sec_rate1} as $N \to \infty$ and then maximizing the resulting expressions with respect to $\theta$, following the same reasoning as the proof for Corollary \ref{cor1}. 
\end{proof}
\begin{cor} \label{cor3}
		Consider the BS's transmit power is scaled down by a factor of $1/{N}$. If the BS employs MRT-BF, then the maximum secrecy rate converges to
	\begin{equation}\label{eq_cor4_1}
	\underline{R}_s^{\textnormal{PS2, MRT}} \to \left [ \log \left ( \frac{\pi  \operatorname{tr} (\vect{\Sigma} )+2 \beta _k \sigma_{\hat{h}_{k}}^4 \rho      }{\pi  \operatorname{tr} (\vect{\Sigma} ) +2 \beta _e     \sigma_{\hat{h}_{k}}^4  \kappa_R \rho }\right ) \right]^{+}
	\end{equation}
	and when the BS employs ZF-BF, the maximum secrecy rate converges to
	\begin{equation} \label{eq_cor4_2}
	\underline{R}_s^{\textnormal{PS2, ZF}} \to \left [ \log \left ( \frac{ \pi \operatorname{tr} (\vect{\Sigma}^{-1})  + 2 \beta _k \rho   }{\pi  \operatorname{tr} (\vect{\Sigma}^{-1}) +2\beta_e \kappa_R \rho      \  }  \right ) \right]^{+}
	\end{equation}
	irrespective of the artificial noise scheme.
\end{cor}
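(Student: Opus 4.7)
The plan is to mirror the two-step procedure already used for Corollaries \ref{cor1} and \ref{cor2}: first take $N\to\infty$ with the substitution $p_d=\rho/N$ inside $\underline{R}_s^{\text{ZF}}(p_d,\theta,N)$ in \eqref{eq_thm12_sec_rate2} and $\underline{R}_s^{\text{MRT}}(p_d,\theta,N)$ in \eqref{eq_thm12_sec_rate1}, and then maximize the resulting $\theta$-dependent expression over $\theta\in(0,1)$.

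The first step is an asymptotic book-keeping exercise. Under $p_d=\rho/N$, every term inside $A_1,A_2$ (resp.\ $B_1,B_2$) that carries a factor of $p_d$ without a compensating $N$ vanishes: this includes the leakage powers $P_k^{\textnormal{AN}},P_e^{\textnormal{AN}}$ (cf.\ \eqref{thm1_AN}, \eqref{thm2_AN}), the quantization-noise contributions $\beta_k\sigma_q^2 p_d$ and $\beta_e\sigma_q^2 p_d$, and the beamforming-penalty term $\tfrac{2\theta\beta_k p_d}{\pi}(1-\sigma_{\hat{h}_k}^2)$. Hence $A_1,A_2\to\pi\operatorname{tr}(\vect{\Sigma}^{-1})$ and $B_1,B_2\to\pi\operatorname{tr}(\vect{\Sigma})$. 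The only terms that survive in the numerators and denominators of \eqref{eq_thm12_sec_rate2} and \eqref{eq_thm12_sec_rate1} are those that pair $p_d$ with $N$ (yielding $p_dN=\rho$) and the cross-terms $A_1A_2\sigma_{\hat{h}_k}^2$ (resp.\ $B_1B_2$) inside $C_1,C_2$ (resp.\ $C_3,C_4$); the remaining $C$-terms all carry an unmatched $p_d$ and vanish.

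After these simplifications, the argument of the logarithm reduces, for ZF-BF, to
\begin{equation*}
\frac{2\theta\rho\beta_k+\pi\operatorname{tr}(\vect{\Sigma}^{-1})}{2\theta\kappa_R\rho\beta_e+\pi\operatorname{tr}(\vect{\Sigma}^{-1})},
\end{equation*}
and, for MRT-BF, to
\begin{equation*}
\frac{2\theta\sigma_{\hat{h}_k}^4\rho\beta_k+\pi\operatorname{tr}(\vect{\Sigma})}{2\theta\kappa_R\sigma_{\hat{h}_k}^4\rho\beta_e+\pi\operatorname{tr}(\vect{\Sigma})}.
\end{equation*}
Crucially, neither expression retains any dependence on the artificial-noise choice, which proves the claim that all AN schemes become equivalent in this regime. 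The second step is then immediate: both ratios are Möbius functions of $\theta$ of the form $(a_1\theta+b)/(a_2\theta+b)$ with common intercept $b>0$, and the sign of the derivative matches the sign of $a_1-a_2$, which is positive precisely when a positive secrecy rate is achievable (i.e.\ $\beta_k>\kappa_R\beta_e$). In that case the maximizer is the right endpoint $\theta\to 1$, contrary to the no-PS and PS1 situations where the optimum sits at $\theta\to 0$.

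Substituting $\theta=1$ into the two reduced ratios yields \eqref{eq_cor4_1} and \eqref{eq_cor4_2}; if $\beta_k\le\kappa_R\beta_e$ the $[\cdot]^+$ collapses both bounds to zero, consistent with the stated formulas. The main obstacle is purely organizational: one must be careful, while taking the limit, to keep track of which $p_d$-factors are absorbed into $\rho$ via $p_dN=\rho$ and which are genuinely $o(1)$, because a careless cancellation would wrongly retain the artificial-noise or quantization-noise contributions and destroy the promised ``AN-independent and quantization-independent'' structure. Verifying the endpoint optimum via the sign of $a_1-a_2$ (rather than solving ${d}/{d\theta}=0$) is the cleanest way to finish.
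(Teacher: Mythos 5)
Your proposal is correct and follows exactly the route the paper sketches for Corollary~\ref{cor3} (substitute $p_d=\rho/N$ into \eqref{eq_thm12_sec_rate2} and \eqref{eq_thm12_sec_rate1}, let $N\to\infty$ so that $A_1,A_2\to\pi\operatorname{tr}(\vect{\Sigma}^{-1})$ and $B_1,B_2\to\pi\operatorname{tr}(\vect{\Sigma})$ while only the $p_dN=\rho$ terms survive, then optimize over $\theta$); your explicit M\"obius-function argument showing the optimum flips to the endpoint $\theta\to1$ is a clean way to carry out the maximization the paper leaves implicit. The only quibble is your parenthetical claim that the PS1 optimum sits at $\theta\to0$ — under $p_d=\rho/\sqrt{N}$ the limiting rate $\bigl[\log\bigl(\beta_k/(\kappa_R\beta_e)\bigr)\bigr]^{+}$ is in fact independent of $\theta$ — but this aside does not affect the proof.
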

\begin{proof}
The result is established by taking the limits of $\underline{R}_s^{\text{ZF}}(\rho/N, \theta, N)$ \eqref{eq_thm12_sec_rate2}  and $\underline{R}_s^{\text{MRT}}(\rho/N, \theta, N)$ \eqref{eq_thm12_sec_rate1} as $N \to \infty$ and then maximizing the resulting expressions with respect to $\theta$, following the same reasoning as the proof for Corollary \ref{cor1}. 
\end{proof}

By inspection of Corollaries \ref{cor2} \& \ref{cor3} we can observe that a positive secrecy rate is possible if the transmit power ratio satisfies
\begin{equation} \label{eq_TPR2}
\kappa_T = \frac{p_e}{p_k}< \left(\frac{\beta_k}{\beta_e} \right)^2.
\end{equation}
Since \eqref{eq_cor4_1} and \eqref{eq_cor4_2}  are both positive under the same condition \eqref{eq_TPR2}, thus it is easy to show that 
\begin{equation}\label{eq_cor1_2_difference2}
\Delta^{\textnormal{PS2}}= \underline{R}_{s}^{\textnormal{PS2, ZF}}-\underline{R}_{s}^{\textnormal{PS2, MRT}} >0
\end{equation}
asymptotically.

We summarize our conclusions from Corollaries \ref{cor2} \& \ref{cor3} as follows:
\begin{enumerate}
	\item  When power scaling at the BS is considered, the asymptotic performance is independent of artificial noise, contrary to the no-power-scaling regime.
	\item Under PS1, MRT-BF and ZF-BF are equivalent while under PS2 regime, ZF-BF  outperforms MRT-BF, asymptotically.
	\item With power scaling at the BS, the asymptotic secrecy rate drops to zero  when \eqref{eq_TPR2} is violated.
\end{enumerate}

Finally, we close this section with the following.
\begin{remark}
Corollaries \ref{cor1}--\ref{cor3} can be used to deduce the asymptotic secrecy rate for the passive eavesdropping case and the unquantized system, considering the variable replacements as discussed in Sec. \ref{sec3.6} and Sec. \ref{sec3.7}.
\end{remark}

\section{Numerical Results and discussion} 
\label{sec5}

In this section, we present some numerical results to verify the analytical results in this work. We consider a single-cell Massive MIMO system with $K$ single-antenna users and a single-antenna active eavesdropper. Without loss of generality, we assume $\beta_1=\beta_2=\cdots =\beta_K=\beta_{e}=1$ and all legitimate users transmit at the same power, i.e., $p_1 = p_2=\cdots=p_k = p_u$. Unless otherwise stated, analytical results refer to the achievable secrecy rate using \eqref{eq_thm12_sec_rate2} and \eqref{eq_thm12_sec_rate1} and Corollaries \ref{cor1}-\ref{cor3} whereas simulation results refer to simulated achievable secrecy rate evaluated by Monte Carlo simulation with quantization-noise correlation and exact ergodic information rate leakage \eqref{eq10.3} are accounted. 

\begin{figure}[!ht]
	\centering
	\includegraphics[scale=0.6]{./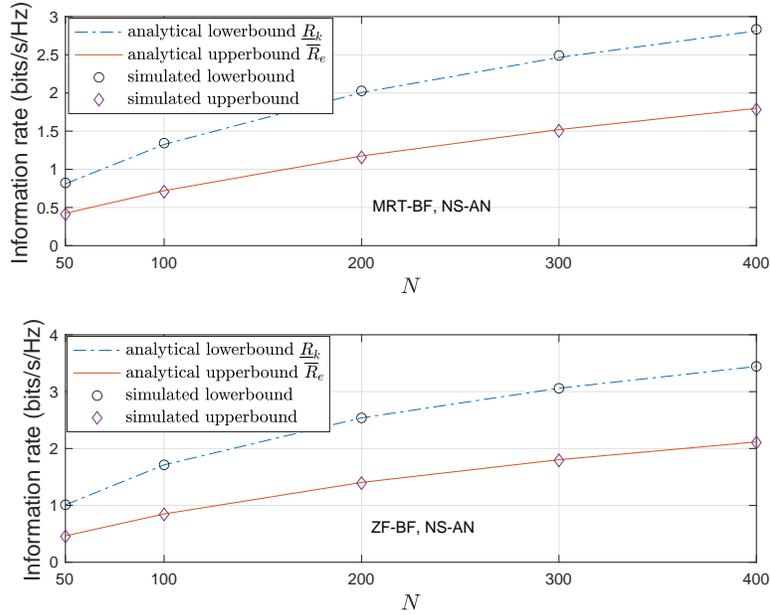}
	\caption{Theoretical and simulated lower bound on achievable user rate and upper bound on eavesdropper rate under MRT-BF (topmost plot) and ZF-BF (bottommost plot) with NS-AN. We use $\theta=0.5$, $K=\tau= 10$, $ p_u = p_d = 10$dB and $p_{e}=5$dB. The theoretical results are obtained by \eqref{thm1_MRT},\eqref{thm1_ZF}, \eqref{thm2_eq1}, and \eqref{thm2_eq2}.}
\label{fig: 0}
\end{figure}


In Fig. \ref{fig: 0} we show the simulated and theoretical lower and upper bounds on information rates of the legitimate user and eavesdropper, respectively.  We show the results assuming the BS employs nullspace noise.  The topmost and bottommost plots compare the theoretical (using \eqref{thm1_MRT},\eqref{thm1_ZF}, \eqref{thm2_eq1} and \eqref{thm2_eq2}) bounds with simulated ones under MRT-BF and ZF-BF, respectively. As can be seen from Fig. \ref{fig: 0}  that there is a good match between the analytical and the simulated results. 


\begin{figure*}[!ht]
	\centering
	\vspace*{-.02in}
	\subfloat[MRT-BF with R-AN]{
		\hspace*{-.1in}
		\includegraphics[scale=0.6]{./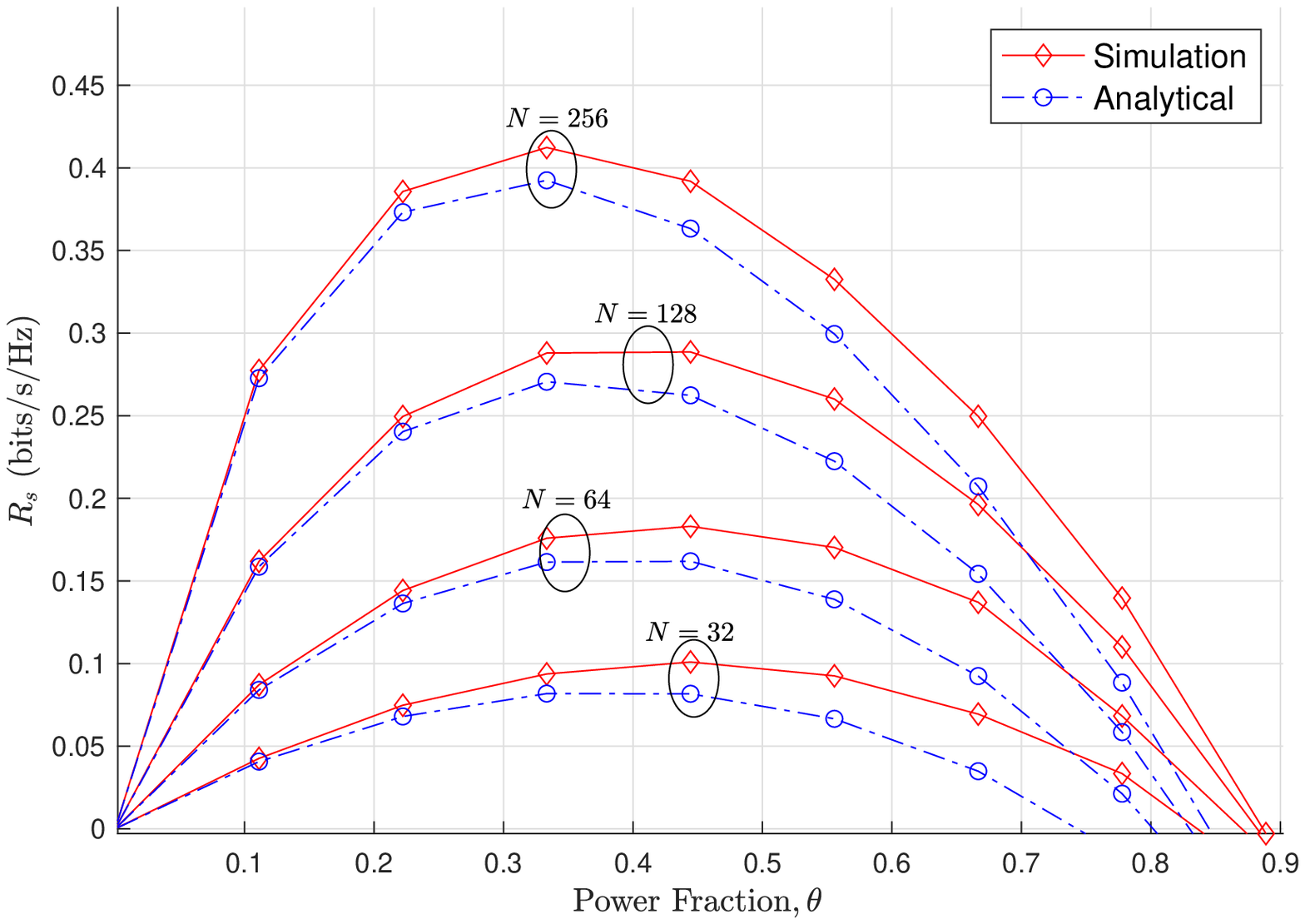}
		\label{fig:1.1}
	}
	\hfill
	\subfloat[MRT-BF with NS-AN]{
		\hspace*{-.1in}
		\includegraphics[scale=0.6]{./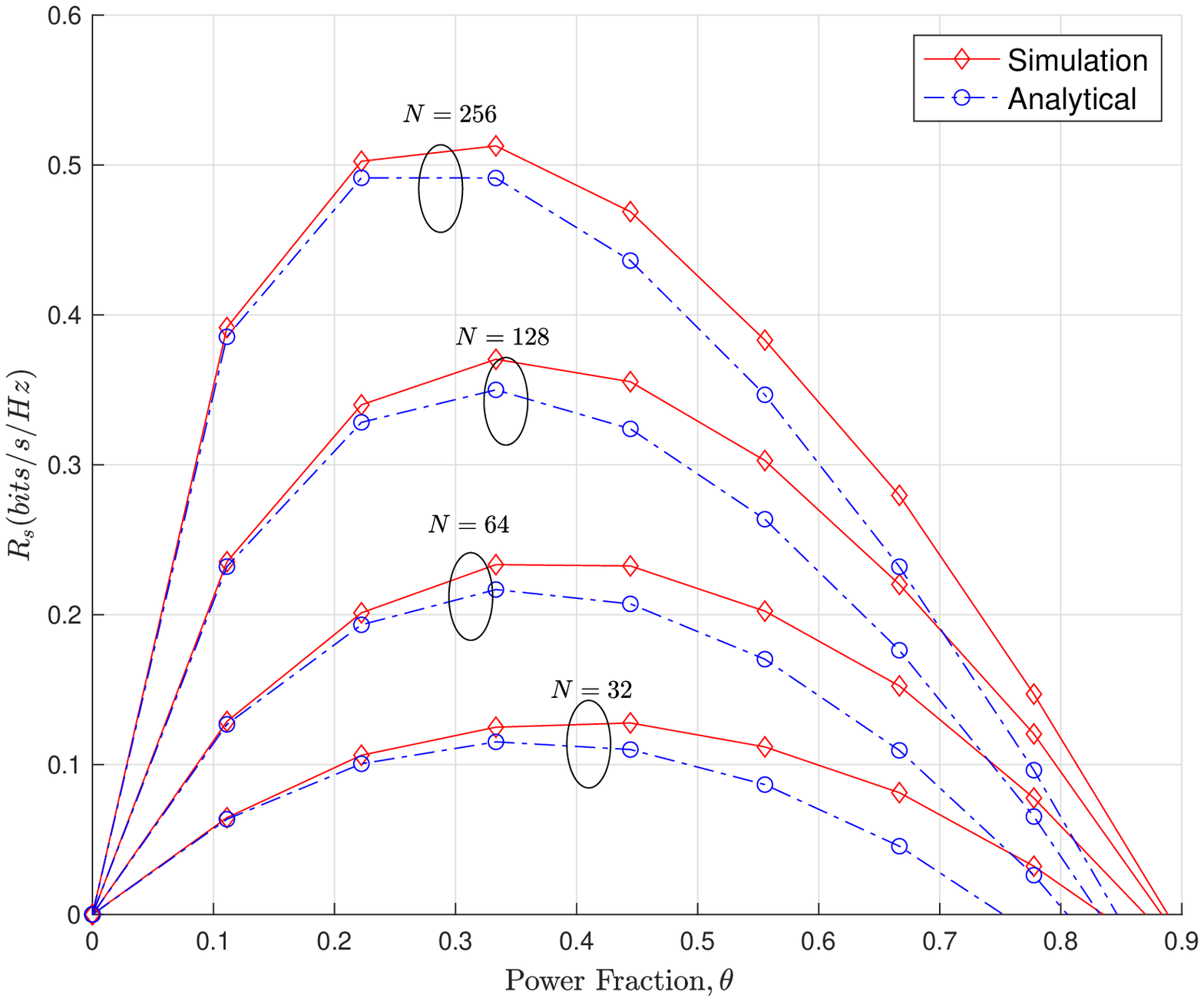}
		\label{fig:1.2}
	}
	\caption{Achievable secrecy rate of MRT-BF for different number of BS antennas, $K=\tau=10, p_u = p_d = 10$dB and $p_{e}=7$dB.}
	\label{fig:1}
\end{figure*}

The achievable secrecy rate corresponding to MRT-BF and ZF-BF is shown in Fig. \ref{fig:1} and Fig. \ref{fig:2}, respectively, for a different number of BS antennas ($N=32, 64, 128, 256$) and as varying $\theta$ (allocated power ratio of signal) between 0 and 1.  The eavesdropper's power is set to $p_e =p_u/2= 7$dB.  From Fig. \ref{fig:1}, we can observe that the NS-AN (Fig. \ref{fig:1.2})  always outperforms R-AN (Fig. \ref{fig:1.1}) due to a smaller leakage power of the artificial noise. For example, when $N=256$, the performance gap between NS-AN and R-AN is about 0.1 bits/s/Hz.  Likewise, it is evident from Fig. \ref{fig:2} that the use of  NS-AN (Fig. \ref{fig:2-2}) provides higher rates compared with R-AN (Fig. \ref{fig:2-1}) under ZF-BF. Further, it is clear that ZF-BF with NS-AN achieves the highest secrecy rate while MRT-BF with R-AN provides the lowest secrecy rate, where the gap between them is about 0.3 bits/s/Hz when $N=256$. We can observe that the analytical results serve as a good lower bound on the secrecy rate compared with the simulated results. It is worth noting that, in the neighborhood of the optimal value of $\theta$  where the secrecy rate is peaked, the gap between the analytical and simulated results is very small.


\begin{figure*}[!ht]
	\centering
	\vspace*{-.02in}
	\subfloat[ZF-BF with R-AN]{
		\hspace*{-.1in}
		\includegraphics[scale=0.6]{./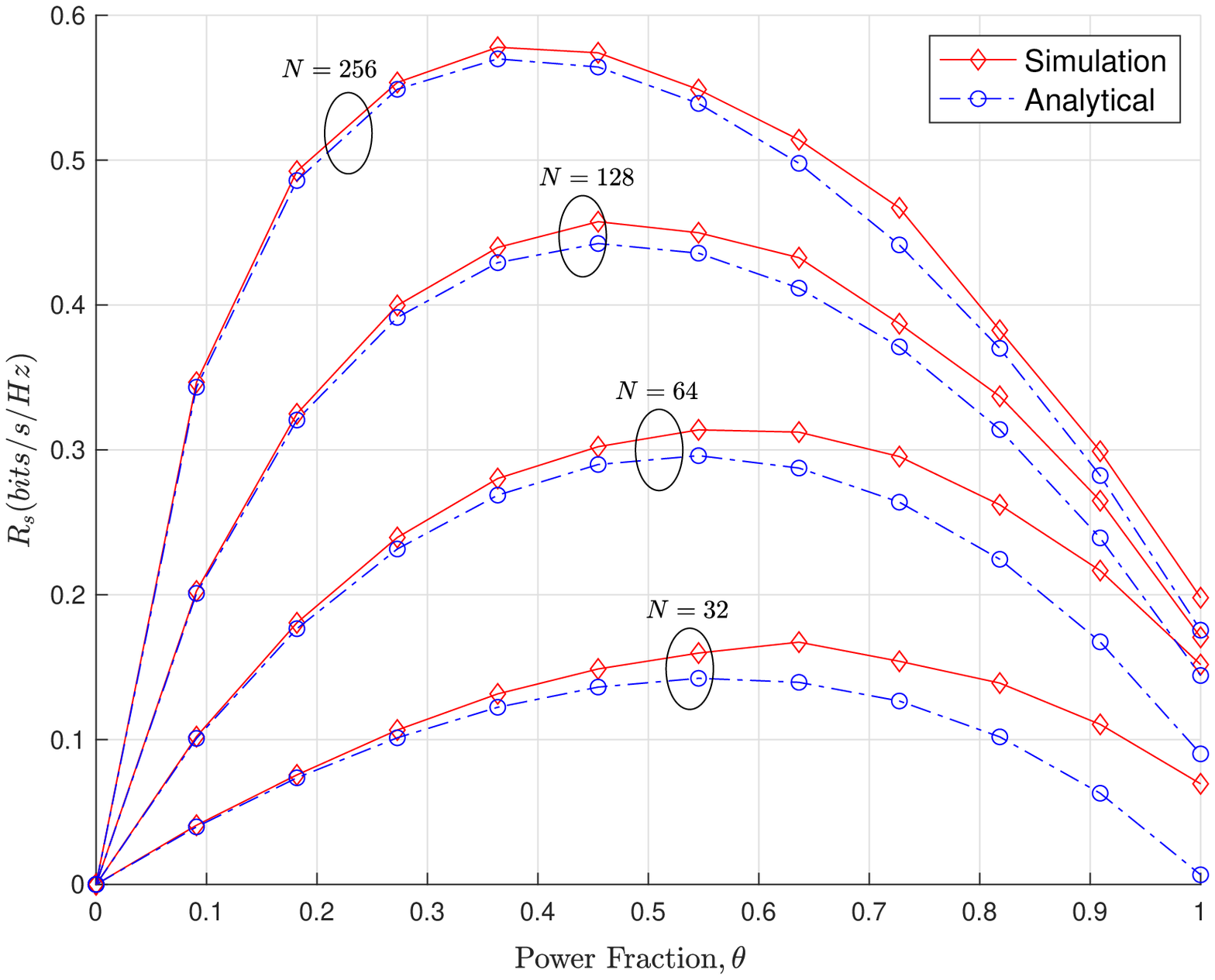}
		\label{fig:2-1}
	}
	\hfill
	\subfloat[ZF-BF with NS-AN]{
		\hspace*{-.1in}
		\includegraphics[scale=0.6]{./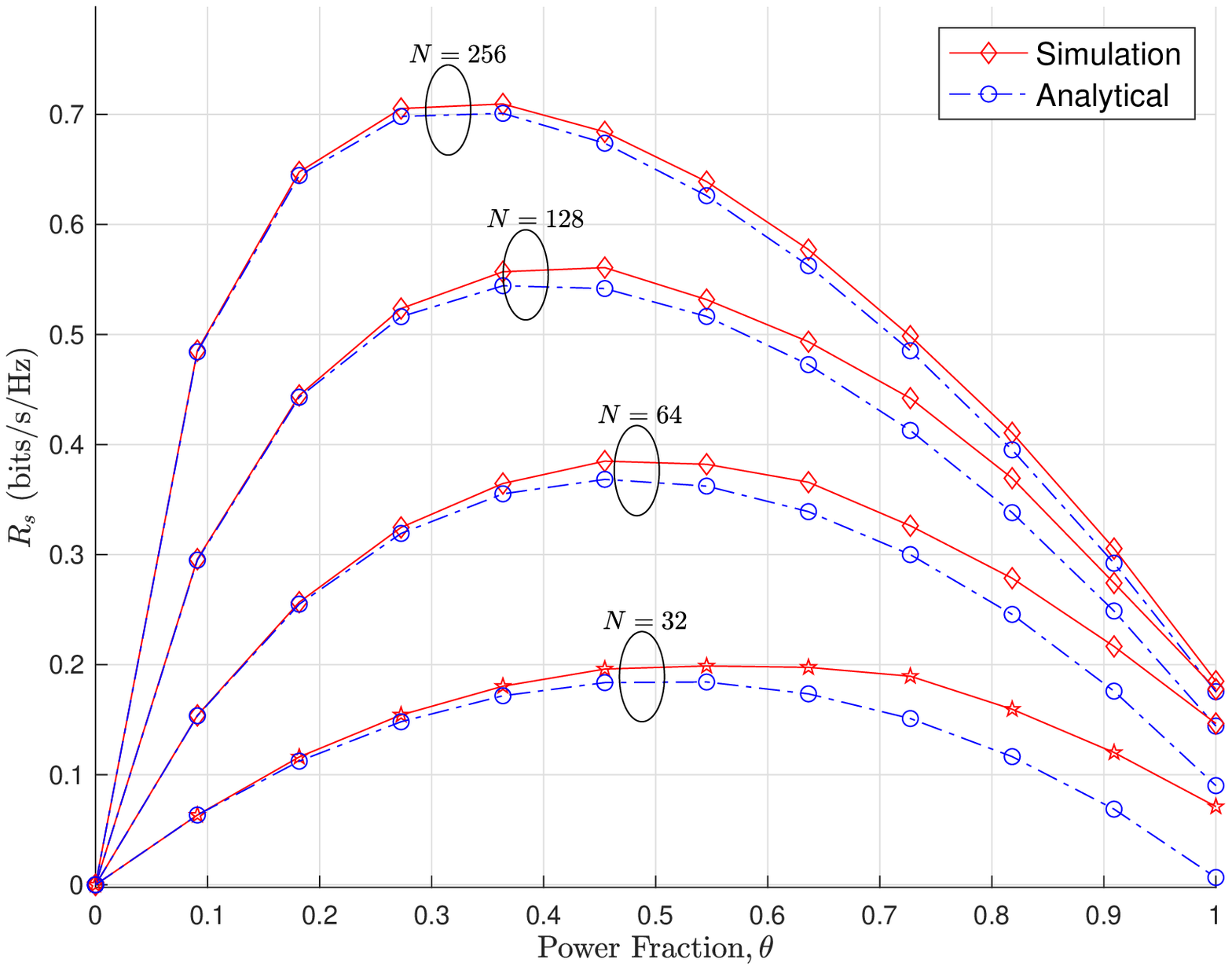}
		\label{fig:2-2}
	}
	\caption{Achievable secrecy rate of ZF-BF for different number of BS antennas, $K=\tau = 10, p_u = p_d = 10$dB and $p_{e}=7$dB.}
	\label{fig:2}
\end{figure*}

 
 Moreover, we observe that in all simulated cases in Figs. \ref{fig:1} and \ref{fig:2}, the secrecy rate increases as the number of BS antennas $N$ increases, while the power fraction allocated to signal is monotonically decreasing. As $N$ increases, both the intercepted user's rate and information leakage increase, thus in order to maintain a positive secrecy rate, more power should be allocated to artificial noise to degrade the eavesdropper channel (see the proof of Corollary \ref{cor1}).

\subsection{Impact of number of users}
\begin{figure}[!ht]
	\centering
	\includegraphics[scale=0.6]{./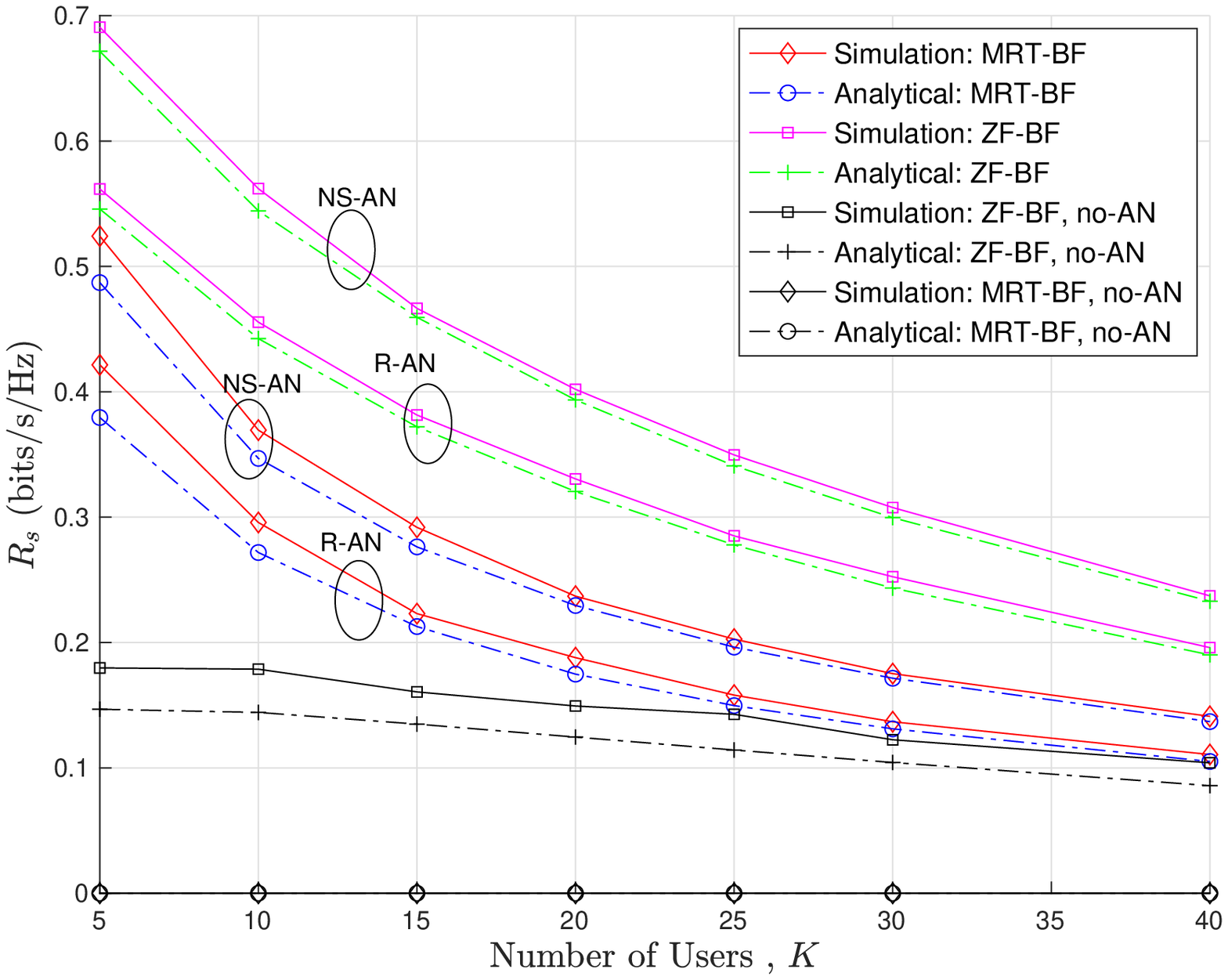}
	\caption{The impact of number of users on secrecy rate under ZF-BF and MRT-BF, $N=128$, $\tau=K$, $p_u = p_d = 10$dB and $p_{e} = 5$dB.}
\label{fig: 3}
\end{figure}


Fig. \ref{fig: 3} depicts the impact of increasing the number of users on the secrecy rate. As seen, the secrecy rate decreases steadily as the number of users increases. This, in particular, follows from the increases of inter-user interference (in case of MRT-BF) and the reduction in the array gain (in case of ZF-BF), thus reducing the rate of the intercepted user. As observed previously, ZF-BF with NS-AN provides a higher secrecy rate, albeit at the price of a high computational burden when compared with MRT-BF combined with R-AN. 

\subsection{Impact of transmit power ratio}
\begin{figure}[ht!]
	\centering
	\includegraphics[scale=0.6]{./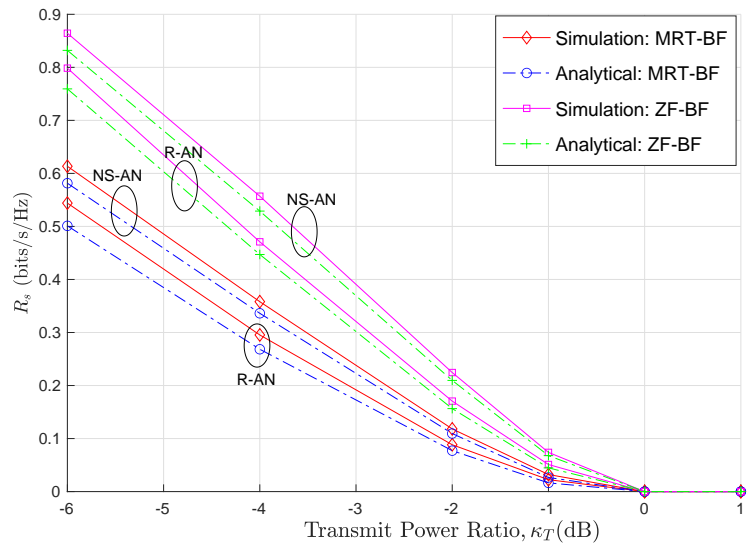}
	\caption{The impact of  transmit power ratio $\kappa_T = p_e/p_u$, during pilot attack, on secrecy rate, $N=64$, $K=\tau = 10$ and  $p_u = p_d = 10$dB.}
	\label{fig:4}
\end{figure}


The effect of transmit power ratio $\kappa_T = p_e/p_u$ during the pilot attack is illustrated in Fig. \ref{fig:4}. In all beamforming and artificial noise schemes, we observe that the secrecy rate is steadily reduced as $\kappa_T$ increases. In general, ZF-BF with NS-AN outperforms other schemes as observed previously. However, the secrecy rate drops to zero for all schemes when $\kappa_T$ approaches 1 (0dB). This is in line with the asymptotic condition derived in \eqref{eq_cor1_and_2_alpha}. From \eqref{eq_cor1_and_2_alpha}, $\kappa_T < 1$ due to $\beta_k = \beta_e = 1$ in our simulation. Thus in the absence of an advanced  secrecy protocol, active eavesdropping can be deleterious to the secrecy rate.

\begin{figure}[!ht]
	\centering
	\includegraphics[scale=0.6]{./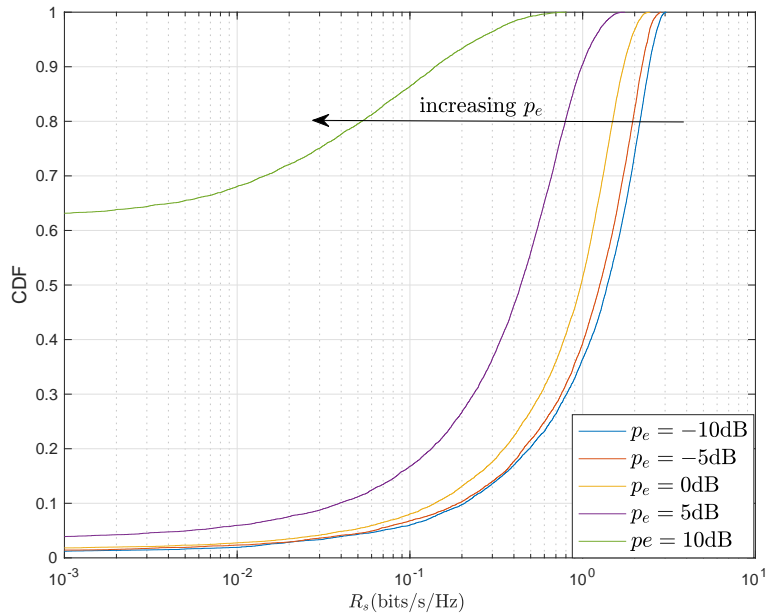}
	\caption{CDF of secrecy rate for (ZF-BF, NS-AN)-scheme. The BS is positioned in the center of a circle of radius $1$km while the eavesdropper is in a circle of radius $100$m around the intercepted user. All users' positions are random and uniformly distributed in the cell. $N=128$, $K=\tau = 10$  and $p_d= p_u= 10$dB.}
\label{fig:7}
\end{figure}


Fig. \ref{fig:7} depicts the cumulative distribution function (CDF) of the secrecy rate when the BS employs ZF-BF and NS-AN, where this scheme is chosen due to its high performance as we have shown before. We assume that the BS is positioned in the center of a circle of radius 1km while the active eavesdropper in a circle of radius 100m around the intercepted user, i.e., this captures the situation when the eavesdropper is very close to the intercepted user. The positions of users are assumed random and uniformly distributed inside the circular cell.  As seen in Fig. \ref{fig:7} the average secrecy rate decreases with increasing the power of eavesdropper. When the eavesdropper transmits at the same power level as the legitimate user, the average secrecy drops to zero. This again confirms our analysis and the transmit power-ratio threshold given in \eqref{eq_cor1_and_2_alpha} even in this non-asymptotic case.  

\subsection{Active Vs. Passive eavesdropping}
\begin{figure}[!ht]
	\centering
	\includegraphics[scale=0.6]{./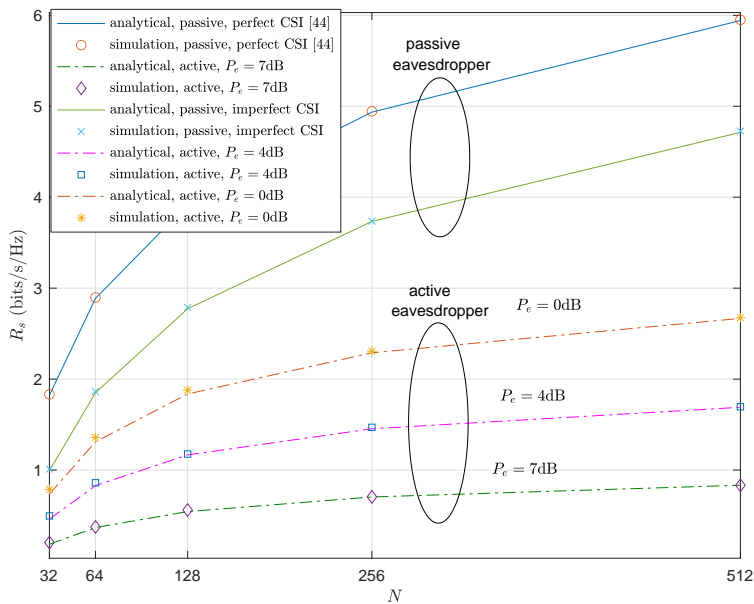}
	\caption{Passive vs. active eavesdropping performance comparison under ZF-BF and NS-AN. $K=\tau = 10$ and $p_d = p_u = 10$dB. For passive eavesdropping, we show the results for perfect and imperfect knowledge of CSI, i.e., P-ICSI and P-PCSI. The analytical results for passive eavesdropping use \eqref{eq:P-ICSI} and \eqref{eq:P-PCSI}.}
	\label{fig:4_1}
\end{figure}


We plot in Fig. \ref{fig:4_1} the theoretical and simulated secrecy rate versus the number of BS antennas in the presence of active and passive eavesdropping.  We show the results for the case of ZF-BF and NS-AN. We can see that when the eavesdropper is passive (i.e., $P_e=0$ or $\kappa_R=0$), the secrecy rate increases monotonically with the number of BS antennas under both perfect and imperfect knowledge of CSI at the BS. This is in line with our analytical expressions in \eqref{eq:P-ICSI}  and \eqref{eq:P-PCSI}. We remark again that in unquantized Massive MIMO systems, the monotonic increase of secrecy rate in the presence of a passive single-antenna eavesdropper has been reported in the literature (i.e., see \cite{Kapetanovic2015}) and it also holds true for quantized systems. 

However, when an active eavesdropper exists, the secrecy rate becomes drastically small and grows at a much slower pace as the number of BS antennas $N$ increases, especially when the pilot attack is strong. Guided by \eqref{eq_cor2} of Corollary \ref{cor1}, the slow growth of secrecy rate with increasing $N$ indicates that the secrecy rate will finally saturate as $N \to \infty$.  The performance gap between the passive and active eavesdropping scenarios is significant, even when the transmit power of the eavesdropper is at the noise level. 

 Finally, the results of passive eavesdropping demonstrate just how the assumption of perfect CSI may overestimate the achievable secrecy rate.

\subsection{Quantized Vs. Unquantized systems}
\begin{figure}[!ht]
	\centering
	\includegraphics[scale=0.6]{./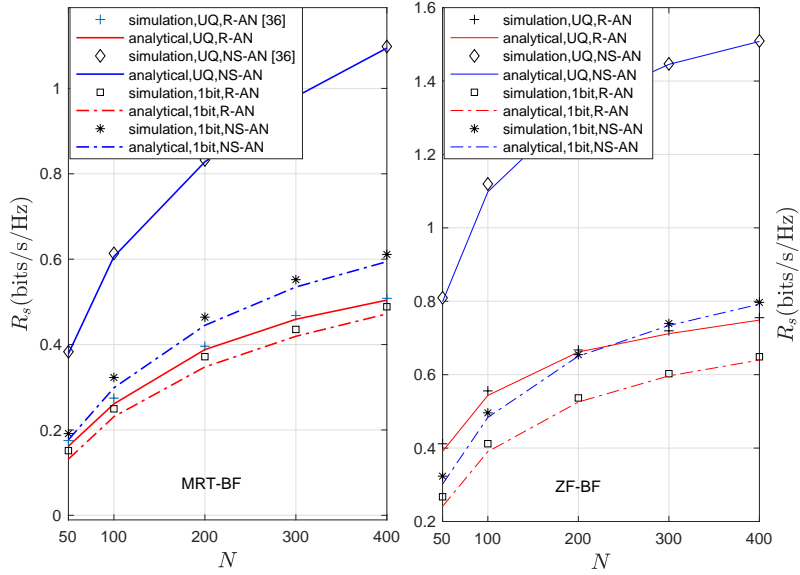}
	\caption{Performance comparison between the one-bit quantized and unquantized systems under MRT-BF (leftmost plot) and ZF-BF(rightmost plot). $K= \tau = 10$ and $p_d = p_u = 10$dB, $P_e =7$dB. The analytical results for the unquantized system use \eqref{eq:Rs_UQ}. Dash-dotted and solid lines refer to the one-bit quantized and unquantized systems, respectively.}
		\label{fig:4_2}
\end{figure}


In Fig.\ref{fig:4_2}, we show the theoretical and simulated secrecy rate for both the one-bit quantized and unquantized (i.e., infinite-resolution ADCs/DACs at the BS) systems. The impact of quantization noise on the achievable secrecy rate is captured by the performance gap between the two systems. The larger the gap, the larger the effect of quantization noise and vice versa. We show the results for MRT-BF (leftmost plot) and ZF-BF (rightmost plot). It is clear that the secrecy rate achieved by the unquantized system is larger for all simulated $N$, especially when the ZF-BF is employed.  For example, compared with the unquantized system, the quantized system requires roughly eight times the number of antennas to achieve 0.8 bits/s/Hz when using  ZF-BF and NS-AN, while it requires four times the number of antennas to achieve 0.6 bits/s/Hz when using MRT-BF and NS. However, when R-AN is used,  the performance gap becomes relatively small, when compared with the gap resulting from using NS-AN. Thus the use of random artificial noise renders the performance loss due to quantization noise smaller.

\subsection{Asymptotic behaviour of secrecy rate}

In this subsection, we demonstrate the behavior of secrecy rate as $N \to \infty$. For all simulated results, it is entirely understood that allowing high order of magnitudes of $N$ is used only to show the correct asymptotic behavior. The simulated results are only shown for a conceivable number of BS antennas, i.e., $N=32, 64, 128, 512$. The asymptotes for the quantized system are given in Corollaries \ref{cor1}-\ref{cor3}. And the asymptotes for the unquantized system is derived from Corollaries \ref{cor1}-\ref{cor3} while considering the parameters' replacement in \eqref{eq:Variable_change2}.

Fig. \ref{fig:5} illustrates the asymptotic behavior of the secrecy rate as $N \to \infty$. As seen, when no power scaling is used at the BS (topmost plot), both MRT-BF and ZF-BF are asymptotically equivalent. As $N$ gets larger and larger, almost all power is allocated to artificial noise asymptotically ($\theta \to 0$), thus the artificial noise being used dominates (determines) the performance asymptotically. We can observe that under the no-PS case, NS-AN outperforms R-AN. 

When the BS's power is scaled down by $N$ (bottommost plot), almost all power should be allocated to data ($\theta \to 1$) to maintain a positive secrecy rate as $N \to \infty$, rendering both R-AN and NS-AN equivalent asymptotically, and hence the beamforming scheme being used determines the performance. It is evident that the ZF-BF outperforms MRT-BF. When the power scales down with $\sqrt{N}$ (middlemost plot),  any combinations of beamforming and artificial noise schemes are asymptotically equivalent. The reader will observe the very large number of BS antennas for the no-PS and PS1 cases to converge to the corresponding asymptotic values, compared with the PS2 case which converges at a much faster pace.

\begin{figure}[!htp]
	\centering
	\includegraphics[scale=0.7]{./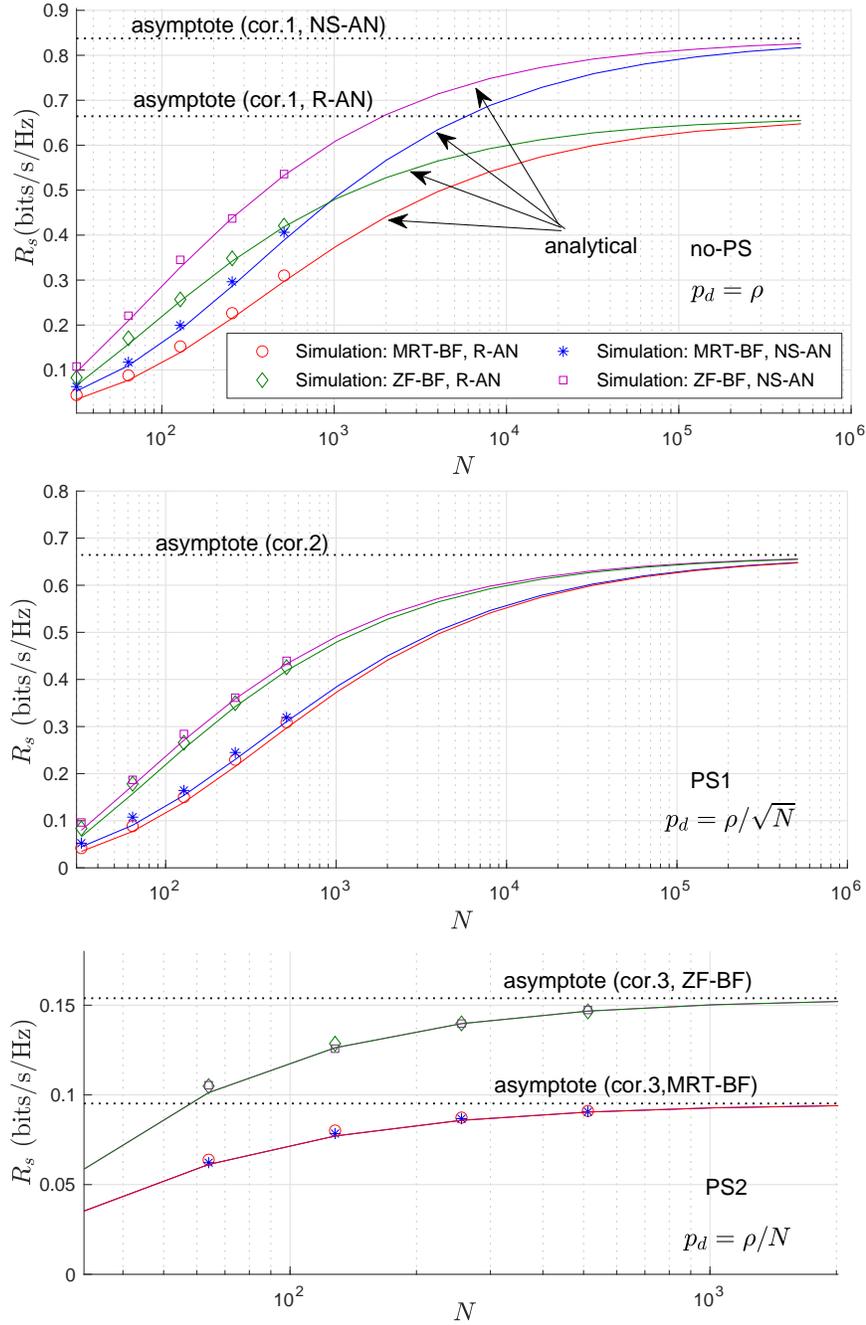}
	\caption{ The asymptotic behaviour of secrecy rate under no-PS (topmost plot), PS1 (middlemost plot) and PS2 (bottommost plot) power scaling regimes. We use  $K= \tau = 10, p_u =10$dB, $\rho = 10$ dB(fixed power at BS) and $\kappa_T = -2 $dB ($p_e = 8$dB).  The three scenarios, no-PS, PS1 and PS2 correspond, respectively, to $p_d = \rho$,  $p_d=\rho/\sqrt{N}$ and  $p_d=\rho/N$. Markers, solid lines and dotted lines represent simulated, analytical and asymptotic results, respectively.}
		\label{fig:5} 
\end{figure}


\begin{figure}[!htp]
	\centering
	\includegraphics[scale=0.6]{./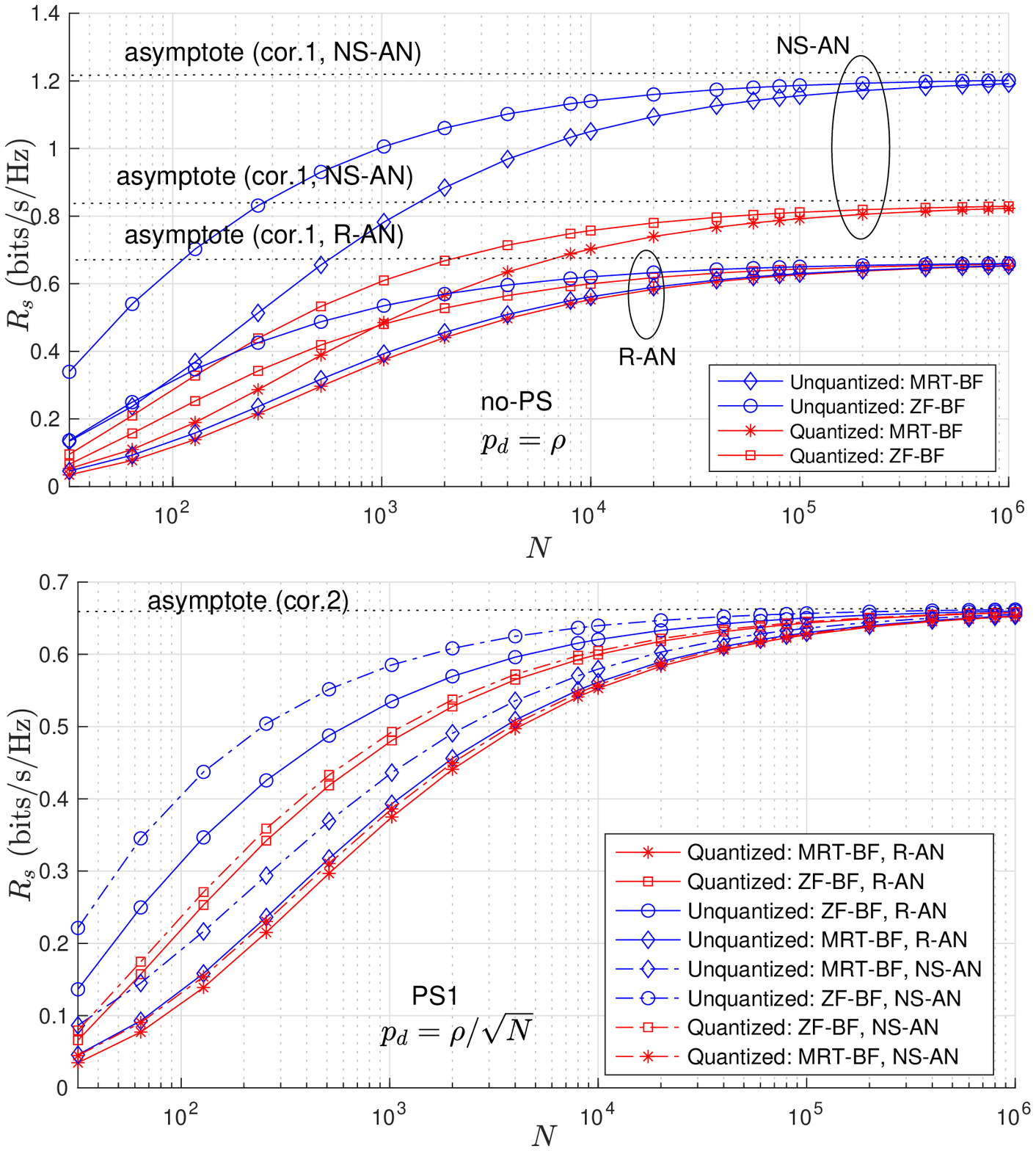}
	\caption{The asymptotic gap of secrecy rate between quantized and unquantized systems under no-PS (topmost plot) and PS1 (bottommost  plot) power scaling regimes. We use  $K= \tau = 10, p_u =10$dB, $\rho = 10$ dB(fixed power at BS) and $\kappa_T = -2 $dB ($p_e = 8$dB).  The two scenarios, no-PS and PS1 correspond, respectively, to $p_d = \rho$ and $p_d=\rho/\sqrt{N}$.}
	\label{fig:6}
\end{figure}
%

Fig. \ref{fig:6} shows the asymptotic performance gap between the quantized system and its unquantized (i.e., infinite-resolution ADCs/DACs) counterpart under no-PS and PS1 power scaling regimes. For the no-PS case, we observe a comparably larger gap when NS-AN is used whereas it is smaller when R-AN is used, especially under MRT-BF. Thus when the combination of MRT-BF and R-AN is considered, there is not much loss in secrecy rate due to quantization noise.  We also observe from Fig. \ref{fig:6} (topmost) that both quantized and unquantized systems are asymptotically equivalent under R-AN, in contrast to NS-AN. This implies that the leakage power of R-AN dominates the power of quantization noise, whereas the power of quantization noise dominates the leakage power of NS-AN in the asymptotic limit. 
For the PS1 regime in Fig. \ref{fig:6} (bottommost), the gap diminishes asymptotically under all schemes and hence quantization noise is irrelevant. 

To show that analytically, we present only the case of ZF-BF with NS-AN. The asymptotic rate corresponding to the unquantized system can be derived from \eqref{eq_cor2} and \eqref{eq_cor3_1} in Corollaries 1 and 2, respectively,  with parameters' change in \eqref{eq:Variable_change2}. Thus using \eqref{eq:Variable_change2} in \eqref{eq_cor2} and \eqref{eq_cor3_1}  we get
\begin{equation}
\label{eq_cor222}
\underline{R}_{s, \textnormal{NS-AN}}^{\textnormal{no-PS, UQ}} \rightarrow \left[ \log\left(\frac{\beta _k \left(p_d \beta _e \left(1-\kappa_R  \vartheta_k^2+1\right)\right)}{\kappa_R  \beta _e \left(p_d \left(1- \vartheta_k^2\right) \beta _k+1\right)} \right)\right]^{+}
\end{equation}

\begin{equation}
\label{eq_cor3_111}
\underline{R}_s^{ \textnormal {PS1, UQ}} \to \left [ \log \left ( \frac{  \beta _k }{  \kappa_R \beta_e }\right ) \right]^{+}
\end{equation}
as $N\to \infty$. It is clear that  $\underline{R}_{s, \textnormal{NS-AN}}^{\textnormal{no-PS, UQ}} > \underline{R}_{s, \textnormal{NS-AN}}^{\textnormal{no-PS}}$  and $\underline{R}_s^{ \textnormal {PS1, UQ}}=\underline{R}_s^{ \textnormal {PS1}}$. Therefore, under the no-PS regime, both secrecy rates of quantized and unquantized systems converge to distinct limits while under PS1 regime, both systems converge to the same limits, as $N\to \infty$.

Finally, for the case of PS2 which is not shown here, one can verify that the secrecy rate for the unquantized system converges to different asymptotic limits for ZF-BF and MRT-BF where the artificial noise scheme is asymptotically irrelevant.

\section{Conclusion}
\label{sec6}

This paper has investigated the secrecy in the downlink of Massive multiple-input multiple-output (MIMO) system under the presence of a single-antenna active eavesdropper and when the signal at the base station undergoes one-bit quantization. We investigated the efficacy of two artificial noise techniques; nullspace artificial noise (NS-AN) and random artificial noise (R-AN). Thus, we have derived the achievable secrecy rate when the BS uses the maximum-ratio transmission beamforming  (MRT-BF) and zero-forcing beamforming (ZF-BF). Although the very coarse quantization and pilot attack, secure communication is possible, where the best performance is achieved when ZF-BF is combined with NS-AN. In fact, we showed analytically that when the eavesdropper is sufficiently close to the intercepted user, the average secrecy rate drops to zero as the transmit power ratio between the eavesdropper and intercepted user approaches 1. The practical scenario examined in the paper has further corroborated our analysis.

It was shown that when the number of BS antennas $N$ grows large, the performance is independent of the beamforming technique and hence the NS-AN should be exploited to maximize the performance. This observation has an implication for research into other possible schemes of artificial noise to degrade the channel of the eavesdropper. Further, it was shown that the total power at the BS can be reduced proportional to $1/N$ or $1/\sqrt{N}$ while a positive secrecy rate is maintained, given the ratio between the eavesdropper's power and intercepted use's power is less than $(\beta_k/\beta_e)^2$, where $\beta_k$ and $\beta_e$ denote the large-scale fading coefficients of legitimate user (intercepted) and eavesdropper, respectively. This observation suggests considering other approaches other than artificial noise to enhance secrecy. 

Due to the scope limitation of this work, a number of potential issues needs to be considered in the future, such as power control and optimal design of beamforming. We believe our findings add to the understanding of the impact of active eavesdropping in quantized Massive MIMO systems.

\begin{appendices} 
	
	\section{Proof of Lemma 1}
	 \label{Proof_lemma1}
	 
	From \eqref{eq5}, the channel estimate $\hat{\vect{h}}_k$ may be written as
	\begin{equation} \label{eq_lem1}
	\hat{\vect{h}}_k = \lambda_k \left(\sqrt{\gamma^2 p_k^{\prime} \tau} \vect{h}_k + \sqrt{\gamma^2 p_{e}^{\prime} \tau}\vect{g} + \gamma  \tilde{\vect{z}} + \tilde{\vect{q}} \right )
	\end{equation}
	where $\tilde{\vect{z}} \sim \mathcal{CN} (\vect{0},\vect{I}_N)$ and $\tilde{\vect{q}}\sim \mathcal{CN}(\vect{0},\sigma_q^2 \vect{I}_N)$. Further, since the eavesdropper's channel $\vect{g}$ and channel estimate $\hat{\vect{h}}_k$ are correlated, we can express $\vect{g}$ as 
	\begin{equation} \label{g_ghat}
\vect{g} = \hat{\vect{g}} + \vect{\epsilon}
	\end{equation}
where $\hat{\vect{g}}$ the optimal MMSE solution for $\vect{g}$ based on the observation $\hat{\vect{h}}_k$ and $\vect{\epsilon}$ is uncorrelated estimation error with minimum variance. It follows that
	\begin{equation}\label{eq_lem1.1}
	\hat{\vect{g}}= E\left[\vect{g} \hat{ \vect{h}}_k^H\right] \left (E\left[\hat{ \vect{h}}_k \hat{ \vect{h}}_k^H\right]\right)^{-1}	\hat{\vect{h}}_k
	\end{equation}
	From \eqref{eq_lem1} we have
	\begin{subequations}
		\begin{equation} \label{eq_lem1.2}
		E[\vect{g} \hat{ \vect{h}}_k^H] =  \sqrt{\lambda_k^2 \gamma^2 p_{e}^{\prime} \tau} \vect{I}_N 
		\end{equation}
		\begin{equation}\label{eq_lem1.3}
		E[\hat{ \vect{h}}_k \hat{ \vect{h}}_k^H]=\lambda_k^2( \gamma^2 p_k^{\prime} \tau + \gamma^2 p_{e}^{\prime}   \tau + \gamma^2 + \sigma_q^2) \vect{I}_N 
		\end{equation}		
	\end{subequations}	
	Substituting \eqref{eq_lem1.2} and \eqref{eq_lem1.3} with the definition of $\lambda_k$ \eqref{eq4.2} in \eqref{eq_lem1.1} yields $\hat{\vect{g}}= \sqrt{\kappa_R} \hat{ \vect{h}}_k $,  i.e., the first term of \eqref{lemma1_eq1}. 
	
	From \eqref{g_ghat} we write $\vect{\epsilon} = \vect{g} - \hat{\vect{g}}$.  From the orthogonality principle, the covariance matrix of estimation error is given by $\vect{C}_{\vect{\epsilon}} = E[\vect{\epsilon} \vect{\epsilon}^H] =E[(\vect{g} - \hat{\vect{g}}) \vect{g}^H]= (1-{\kappa_R} \sigma_{\hat{h}_{k}}^2 ) \vect{I}_N $, which is given in \eqref{lemma1_eq2}.
	This completes the proof.

\section{Proof of Theorem 1} 
\label{proof_thm1}

In this section, we derive the two lower bounds \eqref{thm1_MRT}, \eqref{thm1_ZF}  on the achievable data rate of legitimate user $k$ (intercepted) while assuming the BS employs MRT-BF and ZF-BF schemes, respectively.  

From Lemma \ref{R_k_general}, the achievable data rate for any beamforming scheme is again given by
	\begin{equation}\label{eq10.2_1}
	\underline{R}_k= \log\left(1+\frac{|a|^2}{\sigma_{n_{\text{eff}}}^2}\right)
	\end{equation}

where $a$ and $\sigma_{n_{\text{eff}}}^2$ are defined by
	\begin{align}
	\label{constant_a}
	a &= c_1 \sqrt{\beta_{k}} E[ \vect{h}_k^T  \vect{w}_k] \\ \nonumber
	\label{var_noise}
	\sigma_{n_{\text{eff}}}^2&= c_1^2 \beta_k \operatorname{Var} (\vect{h}_k^T  \vect{w}_k) + \sum \limits_{j=1, j\neq k}^K  c_1^2 \beta_k  E[ |\vect{h}_k^T  \vect{w}_j|^2]\\
	& + c_2^2 \beta_k E[ \vect{h}_k^T \vect{S} \vect{h}_k^{\ast}]+ \beta_k \sigma_q^2 p_d   +1.
	\end{align}
where $\vect{w}_j$ is the $j$-th vector of the beamforming matrix $\vect{W}$ and $\vect{S}$ is the artificial-noise shaping matrix as defined previously.

In the proofs, we write the channel vector $\vect{h}_k $ as a sum of channel estimate and uncorrelated estimation error, i.e., 
\begin{equation}
\vect{h}_k = \hat{\vect{h}}_k+ \vect{e}_k
\end{equation}
where $\hat{\vect{h}}_k \in \mathcal{C}^N$ is the channel estimate with i.i.d. $\mathcal{CN}(0,\sigma_{\hat{h}_{k}}^2)$ components, and $\vect{e}_k \in \mathcal{C}^N$ is uncorrelated estimation error with i.i.d. $\mathcal{CN}(0,1-\sigma_{\hat{h}_{k}}^2)$ components, i.e.,  $E\left[\vect{e}_k\vect{e}_k^H\right]= (1-\sigma_{\hat{h}_{k}}^2) \vect{I}_N$.

In the following, we will evaluate \eqref{eq10.2_1} for MRT-BF and ZF-BF. For each beamforming scheme, we evaluate  the deterministic constant $a$ in \eqref{constant_a} and variance of effective noise $\sigma_{n_{\text{eff}}}^2$ in \eqref{var_noise}. 

\subsection{MRT-BF}
\label{proof_thm1_MRT}

From \eqref{eq:mrt_zf_bf}, the MRT-BF matrix is given by $\vect{W}_{\text{mrt}}:=\vect{W}=\widehat{\vect{H}}^{\ast}$. Let  $\vect{w}_{\text{mrt}, j} = \widehat{\vect{h}}_j^{\ast} $ be the $j$-th column of $\vect{W}_{\text{mrt}}$, i.e., the beamforming vector of user $j$. In the following, we will evaluate \eqref{constant_a} and \eqref{var_noise} and then substitute the results in  \eqref{eq10.2_1}.  

From \eqref{constant_a}, we have

	\begin{align} \label{eq_appA1}
	\nonumber
	a_{\textnormal{mrt}}&=c_1 \sqrt{\beta_{k}} \underbrace{E[ \vect{h}_k^T   \vect{w}_{\text{mrt}, k}]}_{I_0} =c_1 \sqrt{\beta_{k}} E[ \vect{h}_k^T   \hat{\vect{h}}_k^{\ast}] \\ \nonumber
	 &=  c_1 \sqrt{\beta_{k}} E[ (\hat{\vect{h}}_k+ \vect{e}_k)^T   \hat{\vect{h}}_k^{\ast}] \\ \nonumber
&=	 c_1 \sqrt{\beta_{k}} E[ \norm{\hat{\vect{h}}_k}^2] +c_1 \sqrt{\beta_{k}}  \underbrace{E[ \vect{e}_k^T   \hat{\vect{h}}_k^{\ast}]}_{=0}\\
&=c_1 \sqrt{\beta_{k}}  \underbrace{N \sigma_{\hat{h}_{k}}^2}_{I_0}
	\end{align}
	and hence 
	\begin{equation} \label{eq_appA1.1}
	|a_{\textnormal{mrt}}|^2= \frac{2 \theta \beta_k p_d}{\pi \operatorname{tr}(\vect{\Sigma})} \sigma_{\hat{h}_{k}}^4 N.
	\end{equation}
	
	Using \eqref{var_noise} we write 
	\begin{align} 
	\label{eq_appA2} \nonumber
	&\sigma_{n_{\text{eff}}, \textnormal{mrt} }^2 = \\\nonumber
	& c_1^2 \beta_k\operatorname{Var} (\vect{h}_k^T  \vect{w}_{\text{mrt}, k}  )+ \sum \limits_{j=1, j\neq k}^K  c_1^2 \beta_k  E[ |\vect{h}_k^T  \vect{w}_{\text{mrt}, j}  |^2] \\\nonumber
	& + c_2^2 \beta_k E[ \vect{h}_k^T \vect{S} \vect{h}_k^{\ast}]+ \beta_k \sigma_q^2 p_d   +1\\\nonumber
	&=c_1^2 \beta_k\underbrace{\operatorname{Var} (\vect{h}_k^T  \hat{\vect{h}}_k^{\ast} )}_{I_1}+ \sum \limits_{j=1, j\neq k}^K  c_1^2 \beta_k  \underbrace{ E[ |\vect{h}_k^T  \hat{\vect{h}}_j^{\ast} |^2]}_{I_2} \\
	& + c_2^2 \beta_k\underbrace{E[ \vect{h}_k^T \vect{S} \vect{h}_k^{\ast}]}_{I_3} + \beta_k \sigma_q^2 p_d   +1.
	\end{align}
	The terms $I_1, I_2$ can be obtained as follows:
	\begin{align} 
	\label{eq_appA3_1} 
	\nonumber
	I_1 &\stackrel{}{=} E[ |\vect{h}_k^T   \hat{\vect{h}}_k^{\ast}|^2]-|I_0|^2\\ \nonumber
	&\stackrel{a}{=} E[ |(\hat{\vect{h}}_k+ \vect{e}_k)^T   \hat{\vect{h}}_k^{\ast}|^2] - N^2\sigma_{\hat{h}_{k}}^4 \\ 
	&\stackrel{b}{=}  E[\norm{\hat{\vect{h}}_k}^4+E[|\vect{e}_k^T \hat{\vect{h}}_k^{\ast}|^2]]-N^2\sigma_{\hat{h}_{k}}^4=N \sigma_{\hat{h}_{k}}^2 \\ \label{eq_appA3_2}  \nonumber
	I_2 &\stackrel{c}{=}  E[ \vect{h}_k^T  \hat{\vect{h}}_j^{\ast} \hat{\vect{h}}_j^{T} \vect{h}_k^{\ast}]=E[ \vect{h}_k^T  E[\hat{\vect{h}}_j^{\ast} \hat{\vect{h}}_j^{T}] \vect{h}_k^{\ast}]\\
	&=   \sigma_{\hat{h}_{j}}^2 E[ \norm{\vect{h}_k}^2] = N \sigma_{\hat{h}_{j}}^2.
	\end{align}
	where in (a) we use $I_0=N\sigma_{\hat{h}_{k}}^2$ evaluated in \eqref{eq_appA1}, in (b) we use the fact that $\vect{h}_k$ and $\hat{\vect{e}}_k$ are uncorrelated and $E[\norm{\hat{\vect{h}}_k}^4=N(N+1) \sigma_{\hat{h}_{k}}^2$ \cite{T2018}, and in (c) we make use of the statistical independence of $\vect{h}_k$ and $\hat{\vect{h}}_j$.
	
	Regarding $I_3$:
	
	\textbf{Case 1:} From \eqref{eq_shaping} ,  $\vect{S} = \vect{I}_N$ when R-AN scheme is used, leading to 
	\begin{equation}
	I_3 = E[ \vect{h}_k^T \vect{S} \vect{h}_k^{\ast}]=E[ \norm{\vect{h}_k}^2]=N.
	\end{equation}

\textbf{Case 2:}  Again, from \eqref{eq_shaping} we have $\vect{S} = 	\vect{I}_N - \vect{P}_{\text{proj}}$ when NS-AN scheme is used, where $ \vect{P}_{\text{proj}} = \widehat{\vect{H}}^{\ast} (\widehat{\vect{H}}^T \widehat{\vect{H}}^{\ast})^{-1}\widehat{\vect{H}}^T$ is the projection matrix. Hence,
\begin{align} \nonumber
I_3 &= E[ \vect{h}_k^T (\vect{I}_N - \vect{P}_{\text{proj}} )\vect{h}_k^{\ast}]\\ \nonumber
&=E[ \norm{\vect{h}_k}^2] - E[ \vect{h}_k^T \vect{P}_{\text{proj}}\vect{h}_k^{\ast}]\\ \nonumber
&=N - E[ (\hat{\vect{h}}_k+ \vect{e}_k)^T \vect{P}_{\text{proj}} (\hat{\vect{h}}_k+ \vect{e}_k)^{\ast}]\\\nonumber
&=N - E[ (\hat{\vect{h}}_k^T \vect{P}_{\text{proj}} \hat{\vect{h}}_k^{\ast}]  - E[ \vect{e}_k^T \vect{P}_{\text{proj}} \vect{e}_k^{\ast}] \\\nonumber
&\stackrel{a}{=}N-E[ \norm{\hat{\vect{h}}_k}^2]- E [\operatorname{tr} (  \vect{P}_{\text{proj}} \vect{e}_k^{\ast} \vect{e}_k^T )] \\\nonumber
&\stackrel{b}{=}N-N\sigma_{\hat{h}_{k}}^2-E [\operatorname{tr} (  \vect{P}_{\text{proj}} E [\vect{e}_k^{\ast} \vect{e}_k^T] )]  \\  \nonumber
&\stackrel{c}{=} N-N\sigma_{\hat{h}_{k}}^2- K (1-\sigma_{\hat{h}_{k}}^2)\\
&=(N-K)(1-\sigma_{\hat{h}_{k}}^2)
\end{align}
where (a) follows due to  $\vect{P}_{\text{proj}} \hat{\vect{h}}_k^{\ast}=\hat{\vect{h}}_k^{\ast}$ (by definition), i.e., channel estimate is projected onto itself  (b) due to the statistical independence between channel estimate and estimation error, and in (c) we make use of $E[\operatorname{tr} ( \vect{P}_{\text{proj}})]=E[ \operatorname{tr} ( \vect{I}_K) ]= K$.  
	
Summarizing the above results for $I_3$, we write 
	\begin{equation}\label{eq_appA4}
	I_3=
	\begin{cases}
	N &\text{if R-AN }  \\
	(N-K) (1-\sigma_{\hat{h}_{k}}^2) &\text{if NS-AN}.
	\end{cases}
	\end{equation} 
Substituting~ \eqref{eq_appA3_1}, \eqref{eq_appA3_2} and \eqref{eq_appA4} with  definitions of $c_1$ and $c_2$  in \eqref{eq_appA2}  yields
	\begin{align} 
	\label{eq_appA5} 
	\nonumber
	\sigma_{n_{\text{eff}}, \textnormal{mrt}}^{2}&= \underbrace{\frac{2\theta \beta_k p_d}{\pi \operatorname{tr}(\vect{\Sigma})} \sigma_{\hat{h}_{k}}^2}_{\text{beamforing gain penalty}} + \underbrace{\frac{2\theta \beta_k p_d}{\pi \operatorname{tr}(\vect{\Sigma})}\sum_{j=1,j\neq k}^K \sigma_{\hat{h}_{j}}^2}_{\text{inter-user interference}}\\\nonumber
	& + \underbrace{ P^{\textnormal{AN}}_k  }_{\text{artificial noise}} +\underbrace{\beta_k \sigma_q^2 p_d +1}_{\text{quantization noise plus AWGN}} \\ 
	&=\frac{2  \theta \beta_k p_d}{\pi} + P^{\text{AN}}_k +\beta_k \sigma_q^2 p_d +1. 
	\end{align}
	where 
	\begin{equation} \label{eq_appA6}
P^{\textnormal{AN}}_k =
	\begin{cases}
	{2 \bar{\theta} \beta_k p_d }/{\pi}&\text{if R-AN }  \\
	{2 \bar{\theta} \beta_k p_d }(1-\sigma_{\hat{h}_{k}}^2)/\pi &\text{if NS-AN}.
	\end{cases}
	\end{equation} 
	In \eqref{eq_appA5}, the \textit{beamforming gain penalty} is due to the CSI uncertainty at the user. Substituting \eqref{eq_appA1.1}, \eqref{eq_appA5} with \eqref{eq_appA6} in \eqref{eq10.2_1}, the first part of Theorem \ref{thm1} follows.

	\subsection{ZF-BF}
\label{proof_thm1_ZF}
	
	From \eqref{eq:mrt_zf_bf}, the ZF-BF matrix is given by $\vect{W}_{\text{zf}}:=\vect{W}=\widehat{\vect{H}}^{\ast} (\widehat{\vect{H}}^T \widehat{\vect{H}}^{\ast})^{-1}$ satisfying $\widehat{\vect{H}}^{T} \vect{W}_{\text{zf}}= \vect{I}_K$. Let  $\vect{w}_{\text{zf}, j}$ be the $j$-th column of $\vect{W}_{\text{zf}}$, i.e., the beamforming vector of user $j$.  In our analysis we need the following Lemma.
	\begin{lemma} \label{zf_norm}
	The expected value of norm squared of $j$-th column of $\vect{W}_{\textnormal{zf}}$  is given by
	\begin{equation}
	E[\norm{\vect{w}_{\textnormal{zf},j}}^2] = \frac{\sigma_{\hat{h}_j}^{-2}}{N-K}
	\end{equation}
	\end{lemma}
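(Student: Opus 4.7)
The plan is to compute $E[\vect{W}_{\text{zf}}^H \vect{W}_{\text{zf}}]$ in closed form using the decomposition of $\widehat{\vect{H}}$ given in \eqref{eq:CE_decomposition} and then read off its $j$-th diagonal entry, since $\norm{\vect{w}_{\text{zf},j}}^2 = [\vect{W}_{\text{zf}}^H \vect{W}_{\text{zf}}]_{jj}$.

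First I would substitute $\widehat{\vect{H}} = \widetilde{\vect{H}} \vect{\Sigma}^{1/2}$ into the ZF-BF expression \eqref{eq:mrt_zf_bf} to obtain
\begin{equation*}
\vect{W}_{\text{zf}} = \widetilde{\vect{H}}^{\ast} \vect{\Sigma}^{1/2} \bigl(\vect{\Sigma}^{1/2} \widetilde{\vect{H}}^{T} \widetilde{\vect{H}}^{\ast} \vect{\Sigma}^{1/2}\bigr)^{-1} = \widetilde{\vect{H}}^{\ast} \bigl(\widetilde{\vect{H}}^{T} \widetilde{\vect{H}}^{\ast}\bigr)^{-1} \vect{\Sigma}^{-1/2},
\end{equation*}
where I have used that $\vect{\Sigma}^{1/2}$ is real and diagonal (hence commutes with itself under inversion/transposition).

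Next, forming $\vect{W}_{\text{zf}}^H \vect{W}_{\text{zf}}$ and using Hermiticity of $\widetilde{\vect{H}}^{T} \widetilde{\vect{H}}^{\ast}$, the inner $\widetilde{\vect{H}}^{T} \widetilde{\vect{H}}^{\ast}$ cancels, leaving
\begin{equation*}
\vect{W}_{\text{zf}}^H \vect{W}_{\text{zf}} = \vect{\Sigma}^{-1/2} \bigl(\widetilde{\vect{H}}^{T} \widetilde{\vect{H}}^{\ast}\bigr)^{-1} \vect{\Sigma}^{-1/2}.
\end{equation*}
Taking expectation and invoking the inverse central Wishart moment $E\bigl[(\widetilde{\vect{H}}^{T} \widetilde{\vect{H}}^{\ast})^{-1}\bigr] = (N-K)^{-1} \vect{I}_K$ (the same fact already used in \eqref{eq_norm_zf}), I get $E[\vect{W}_{\text{zf}}^H \vect{W}_{\text{zf}}] = (N-K)^{-1} \vect{\Sigma}^{-1}$, whose $j$-th diagonal entry is $\sigma_{\hat{h}_j}^{-2}/(N-K)$, establishing the claim.

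There is no real obstacle here: the only nontrivial ingredient is the Wishart inverse moment, which has already been cited and used in \eqref{eq_norm_zf}, and the rest is algebraic simplification. The main thing to be careful about is keeping the conjugate/transpose pattern consistent (since the paper uses $\widehat{\vect{H}}^{\ast}$ rather than $\widehat{\vect{H}}$ in the ZF pseudoinverse), and observing that $\vect{\Sigma}^{1/2}$ commutes through because it is diagonal and real.
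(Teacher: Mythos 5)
Your proposal is correct and follows essentially the same route as the paper's own proof: both substitute the decomposition $\widehat{\vect{H}} = \widetilde{\vect{H}} \vect{\Sigma}^{1/2}$ to write $\vect{W}_{\text{zf}} = \widetilde{\vect{H}}^{\ast} (\widetilde{\vect{H}}^{T} \widetilde{\vect{H}}^{\ast})^{-1} \vect{\Sigma}^{-1/2}$, reduce $\vect{W}_{\text{zf}}^H \vect{W}_{\text{zf}}$ to $\vect{\Sigma}^{-1/2} (\widetilde{\vect{H}}^{T} \widetilde{\vect{H}}^{\ast})^{-1} \vect{\Sigma}^{-1/2}$, and apply the inverse central Wishart moment $E[(\widetilde{\vect{H}}^{T} \widetilde{\vect{H}}^{\ast})^{-1}] = (N-K)^{-1}\vect{I}_K$ before reading off the $j$-th diagonal entry.
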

	\begin{proof}
	Using the channel estimation decomposition in \eqref{eq:CE_decomposition}, i.e., $\widehat{\vect{H}} = \widetilde{\vect{H}} \vect{\Sigma}^{1/2}$, the ZF-BF matrix $\vect{W}_{\text{zf}} $ can be expressed in terms of $\widetilde{\vect{H}}$ and $\vect{\Sigma}$ as follows:
	\begin{align}
	\nonumber
	\vect{W}_{\text{zf}} &= \widehat{\vect{H}}^{\ast} (\widehat{\vect{H}}^T \widehat{\vect{H}}^{\ast})^{-1}\\
	&= \widetilde{\vect{H}}^{\ast} (\widetilde{\vect{H}}^T \widetilde{\vect{H}}^{\ast})^{-1}\      \vect{\Sigma}^{-1/2}
	\end{align}
	Thus we can write 
		\begin{align}
		\nonumber
	E[\norm{\vect{w}_{\textnormal{zf},j}}^2]&= [\vect{W}_{\text{zf}} ^H \vect{W}_{\text{zf}} ]_{j} \\ \nonumber
	& = \left[  \vect{\Sigma}^{-1/2}  \underbrace{E[(\widetilde{\vect{H}}^T \widetilde{\vect{H}}^{\ast})^{-1} ] }_{=(N-K)^{-1} \vect{I}_K}   \vect{\Sigma}^{-1/2}  \right]_j \\
&= \left [\frac{\vect{\Sigma}^{-1}}{N-K}  \right]_j=\frac{\sigma_{\hat{h}_j}^{-2}}{N-K}
	\end{align}
where $[\vect{A}]_j$ is the $ j$-th diagonal entry of $\vect{A}$. The inner expectation in the second line follows from the properties of the $K \times K$ central Wishart matrix $\widetilde{\vect{H}}^T \widetilde{\vect{H}}^{\ast}$ \cite{RMT_2004}.
	\end{proof}

	In the following, we will evaluate \eqref{constant_a} and \eqref{var_noise} and then substitute the results in  \eqref{eq10.2_1} to obtain the achievable rate under ZF-BF.  
	
From \eqref{constant_a} we have
	\begin{align} \label{eq:eval_J0}
	\nonumber
	a_{\textnormal{zf}} &= c_1 \sqrt{\beta_{k}} \underbrace{E[ \vect{h}_k^T   \vect{w}_{\text{zf}, k}]}_{J_0}= c_1 \sqrt{\beta_{k}} E[ (\hat{\vect{h}}_k+ \vect{e}_k)^T   \vect{w}_{\text{zf}, k}] \\ \nonumber
	&= c_1 \sqrt{\beta_{k}} (E[ \underbrace{\hat{\vect{h}}_k^T \vect{w}_{\text{zf}, k}}_{=1 (\text{by definition})}] + \underbrace{E[\vect{e}_k^T \vect{w}_{\text{zf}, k} ] }_{ =0  (\text{uncorrelated})}) \\
	&=c_1\sqrt{\beta_{k}}
	\end{align}
	and hence
	\begin{equation} \label{eq_appAA2}
	|a_{\textnormal{zf}}|^2 = \frac{2 \theta \beta_k p_d}{\pi \operatorname{tr}(\vect{\Sigma}^{-1})}(N-K)
	\end{equation}
	
		Using \eqref{var_noise} we write 
	\begin{align}
	\label{eq_appAA0}  \nonumber
	&\sigma_{n_{\textnormal{eff}}, \textnormal{zf}}^{2} = c_1^2 \beta_k\underbrace{\operatorname{Var} (\vect{h}_k^T \vect{w}_{\text{zf}, k} )}_{J_1}+ \sum \limits_{j=1, j\neq k}^K  c_1^2 \beta_k  \underbrace{ E[ |\vect{h}_k^T  \vect{w}_{\text{zf}, j} |^2]}_{J_2} \\
	& + c_2^2 \beta_k\underbrace{E[ \vect{h}_k^T \vect{S} \vect{h}_k^{\ast}]}_{J_3=I_3} + \beta_k \sigma_q^2 p_d   +1.
	\end{align}
  Note that we need to evaluate $J_1$ and $J_2$, while $J_3=I_3$ is given in \eqref{eq_appA4}. 
  
  To evaluate the terms $J_1$ and $J_2$, we proceed as follows.
	\begin{align} \label{eq_appAA3} \nonumber
	J_1&\stackrel{a}{=}  E[ |\vect{h}_k^T   \vect{w}_{\text{zf}, k}|^2] -|\underbrace{E[ \vect{h}_k^T   \vect{w}_{\text{zf},k}]}_{J_0}|^2\\ \nonumber
	& \stackrel{b}{=}E[ |\vect{h}_k^T  \vect{w}_{\text{zf},k}|^2]-1=E[ |\underbrace{{\hat{\vect{h}}_k}^T \vect{w}_{\text{zf},k}}_{=1 } + \vect{e}_k^T  \vect{w}_{\text{zf},k}|^2]-1\\ \nonumber
	&\stackrel{c}{=}E[|{\vect{e}}_k^T\vect{w}_{\text{zf},k}|^2]  = (1-\sigma_{\hat{h}_{k}}^2) E[\norm{\vect{w}_{\text{zf},k}}^2]\\
	&\stackrel{d}{=}\frac{(1-\sigma_{\hat{h}_{k}}^2)\sigma_{\hat{h}_{k}}^{-2}}{N-K}
	\end{align}
where in (b) we use $J_0=1$ evaluated in \eqref{eq:eval_J0}, (b) follows  because $\vect{e}_k$ and $\vect{w}_{\text{zf},k}$ are independent and ${\hat{\vect{h}}_k}^T \vect{w}_{\text{zf},k}=1$ (by definition), and in (d) we use Lemma \ref{zf_norm}.

Next,
	\begin{align} \label{eq_appAA4} \nonumber
	J_2 &\stackrel{a}{=} E[ |\vect{h}_k^T  \vect{w}_{\text{zf}, j}|^2] =E[|(	\hat{\vect{h}}_k+ \vect{e}_k)^T \vect{w}_{\text{zf}, j}|^2] \\ \nonumber
	&\stackrel{b}{=} E[| \underbrace{\hat{\vect{h}}_k^T \vect{w}_{\text{zf}, j}}_{=0} + \vect{e}_k^T \vect{w}_{\text{zf}, j} |^2]=E[| \vect{e}_k^T \vect{w}_{\text{zf}, j} |^2] \\\nonumber
	&\stackrel{c}{=} (1-\sigma_{\hat{h}_{k}}^2) E[\norm{\vect{w}_{\text{zf}, j}}^2]\\
	&\stackrel{d}{=} \frac{(1-\sigma_{\hat{h}_{k}}^2)\sigma_{\hat{h}_{j}}^{-2}}{N-K}
	\end{align}
where in (b) $\hat{\vect{h}}_k^T \vect{w}_{\text{zf}, j}=0$ follows by definition of zero-forcing solution, (c) follows  because $\vect{e}_k$ and $\vect{w}_{\text{zf},j}$ are independent  and (d) uses Lemma \ref{zf_norm}.

	Substituting \eqref{eq_appAA3}, \eqref{eq_appAA4} and \eqref{eq_appA4} with the definitions of $c_1$ and $c_2$ in \eqref{eq_appAA0}, the variance of the effective noise can be expressed by
	\begin{align}\label{eq_appAA5}\nonumber
	\sigma_{n_{\text{eff}}, \textnormal{zf} }^{2}&= \underbrace{\frac{2\theta \beta_k p_d (1-\sigma_{\hat{h}_{k}}^2) }{\pi  \sigma_{\hat{h}_{k}}^{2} \operatorname{tr}(\vect{\Sigma}^{-1})}}_{\text{ beamforing gain penalty}} +
	\underbrace{\frac{2\theta \beta_k p_d}{\pi  \operatorname{tr}(\vect{\Sigma}^{-1})}\sum_{j=1,j\neq k}^K\frac{1-\sigma_{\hat{h}_{k}}^2}{\sigma_{\hat{h}_{j}}^{2}} }_{\text{inter-user interference}}\\\nonumber
	& +  \underbrace{ P^{\text{AN}}_k  }_{\text{artificial noise}}+\underbrace{\beta_k \sigma_q^2 p_d +1}_{\text{quantization noise plus AWGN}} \\ 
	&=\frac{2 \theta \beta_k p_d }{\pi} (1-\sigma_{\hat{h}_{k}}^2)  +P^{\text{AN}}_k + \beta_k \sigma_q^2 p_d +1. 
	\end{align}
	
	Finally, substituting \eqref{eq_appAA2}, \eqref{eq_appAA5} combined with \eqref{eq_appA6} in \eqref{eq10.2_1}, the second part of Theorem \ref{thm1} follows. This completes the proof.

	\section{Proof of Theorem 2}
	 \label{proof_thm2}
	 
	Here we derive the upper bounds \eqref{thm2_eq1}, \eqref{thm2_eq2} on the information rate $\overline{R}_{e}$ leaked to the eavesdropper under 
	MRT-BF and ZF-BF schemes. 
	
	From Lemma \ref{R_e_general},  by the concavity of $\log (\cdot)$, applying Jensen's inequality to \eqref{eq:R_e_general} yields
	\begin{equation} \label{eq_thm2_JIE}
	\overline{R}_{e} \le  \log \left ( 1 + {c_1^2 \beta_{e} E\left [\vect{w}_k^H \vect{g}^{\ast}  \sigma_{e}^{-2} \vect{g}^T \vect{w}_k \right ] }  \right).
	\end{equation}
	where $\sigma_{e}^{2}$ is the variance of effective noise given in Lemma \ref{R_e_general}, which is rewritten again here:
	\begin{equation}\label{eq_cov_eavesdropper_rep}
\sigma_{e}^{2} =  c_2^2 \beta_{e} \vect{g}^T \vect{S} \vect{g}^{\ast}+ c_3^2 \beta_{e}\sigma_q^2  \vect{g}^T  \vect{g}^{\ast}+ 1.
	\end{equation}
Since $\sigma_{e}^{2} $ is dependent of artificial noise scheme (R-AN or NS-AN), in the following we evaluate \eqref{eq_cov_eavesdropper_rep} for R-AN and NS-AN schemes, respectively.\\

\textbf{Case 1}: When R-AN approach is used, from \eqref{eq_shaping}  we have $\vect{S} = \vect{I}_N$. Hence,
	\begin{align} \label{thm2_proof_1} \nonumber
\sigma_{e}^{2, \textnormal{R-AN}}& =  (c_2^2 \beta_{e} + c_3^2 \beta_{e}\sigma_q^2) \norm{\vect{g}}^2 + 1\\ \nonumber
	&\xrightarrow{\text{a.s.}}  (c_2^2 \beta_{e} + c_3^2 \beta_{e}\sigma_q^2) N +1 \\
	&= {2 \bar{\theta} \beta_{e} p_d}/\pi + \beta_{e} p_d \sigma_q^2 + 1
	\end{align} 
	as $N$ grows large which follows from the strong law of large numbers.\\
	
	\textbf{Case 2}:  When NS-AN approach is used, from \eqref{eq_shaping}, we have $\vect{S} = 	\vect{I}_N -\vect{ P}_{\text{proj}}$. Using Lemma \ref{lemma1}, we can write
	\begin{align} \nonumber
	\vect{g}^T \vect{S} \vect{g}^{\ast} &= ( \sqrt{\kappa_R} \hat{\vect{h}}_k+ \vect{\epsilon})^T \vect{S} ( \sqrt{\kappa_R}  \hat{\vect{h}}_k+ \vect{\epsilon})^{\ast} \\ \label{thm2_proof_3}
	&= \vect{\epsilon}^T \vect{S} \vect{\epsilon}^{\ast}=\vect{\epsilon}^T \widetilde{\vect{U}}\widetilde{\vect{U}}^H\vect{\epsilon}^{\ast}
	\end{align}
	where $\widetilde{\vect{U}} \in \mathcal{C}^{N \times (N-K)}$ comprise $(N-K)$ eigenvectors (each has norm 1) corresponding to the $N-K$ repeated unity eigenvalues of $S$. Since $N\gg K$ (i.e., Massive MIMO setting), $\widetilde{\vect{U}}\widetilde{\vect{U}}^H$ can be very well approximated by a scaled identity matrix, where the magnitude of off-diagonal entries of $\widetilde{\vect{U}} \widetilde{\vect{U}} ^H$ are in fact much smaller than the diagonal entries. Thus,
	\begin{equation} \label{thm2_proof_5}
	\widetilde{\vect{U}}\widetilde{\vect{U}}^H \approx \frac{\operatorname{tr}( \widetilde{\vect{U}}\widetilde{\vect{U}}^H)}{N} = \left(1-\frac{K}{N} \right) \vect{I}_N
	\end{equation}
	
	Substituting \eqref{thm2_proof_5} in \eqref{thm2_proof_3} yields
	\begin{align} 
	\vect{g}^T \vect{S}\vect{g}^{\ast} \approx  (1-K/N) \norm{\vect{\epsilon}}^2 \xrightarrow{\text{a.s.}}  ({N-K}) (1-\kappa_R \sigma_{\hat{h}_{k}}^2 ) 
	\end{align}
	Therefore,
	\begin{equation}
\sigma_e^{2, \textnormal{NS-AN}}\approx {2 \bar{\theta} \beta_{e} p_d} (1-\kappa_R \sigma_{\hat{h}_{k}}^2 )/\pi  + \beta_{e} p_d \sigma_q^2 + 1.
	\end{equation}
	We summarize, 
	\begin{equation} \label{eq_eavesdropper_cov_approx}
	\sigma_{e}^{2} \approx (P_{e}^{\text{AN}} + \beta_{e} p_d \sigma_q^2 + 1) 
	\end{equation}
	where 
	\begin{equation} \label{eavesdropper_noise_def}
	P_{e}^{\text{AN}} =
	\begin{cases}
	2\bar{\theta}  \beta_{e} p_d/\pi  &\text{if R-AN }  \\
	{2 \bar{\theta} \beta_{e} p_d }(1-\kappa_R \sigma_{\hat{h}_{k}}^2)/\pi &\text{if NS-AN}.
	\end{cases}
	\end{equation} 
	
	Substituting \eqref{eq_eavesdropper_cov_approx} in \eqref{eq_thm2_JIE} yields
	
	\begin{equation} \label{eavesdropper_rate_asymp}
	\overline{R}_{e}  \lesssim  \log \left ( 1 + \frac{c_1^2 \beta_{e} E[\vect{w}_k^H \vect{g}^{\ast} \vect{g}^T \vect{w}_k ] }{P_{e}^{\text{AN}} + \beta_{e} p_d \sigma_q^2 + 1}  \right) .
	\end{equation}
	
	The expectation $\mu := E[\vect{w}_k^H \vect{g}^{\ast} \vect{g}^T \vect{w}  ]$ for both the MRT-BF and ZF-BF cases is evaluated as follows. 

For MRT-BF, setting $\vect{w}_{\text{mrt},k}: =\vect{w}_k= \hat{\vect{h}}_k^{\ast}$ (i.e., $k$-th column of MRT-BF matrix $\vect{W}_{\text{mrt}} = \widehat{\vect{H}}^{\ast}$ given in \eqref{eq:mrt_zf_bf}). Using Lemma \ref{lemma1}, we write 
	\begin{align} \label{mu_mrt}\nonumber
	&\mu_{\text{mrt}} :=E \left [ \vect{w}_{\text{mrt},k}^{H}  \vect{g}^{\ast} \vect{g}^{T} \vect{w}_{\text{mrt},k} \right] =E\left [ \hat{\vect{h}}_k^{T}  \vect{g}^{\ast} \vect{g}^{T} \hat{\vect{h}}_k^{\ast} \right] \\ \nonumber 
	&=E\left [ \hat{\vect{h}}_k^{T}  (\sqrt{\kappa_R} \hat{ \vect{h}}_k^{\ast} + \vect{\epsilon}^{\ast})(\sqrt{\kappa_R} \hat{ \vect{h}}_k^{T} + \vect{\epsilon}^{T})\hat{\vect{h}}_k^{\ast} \right] \\ \nonumber 
&= \kappa_R E  [ \norm{\hat{\vect{h}}_k}^4]  + 2\sqrt{\kappa_R} \Re \{ \underbrace{E[\hat{\vect{h}}_k^T \hat{\vect{h}}_k^{\ast} \vect{\epsilon}^{T}  \hat{\vect{h}}_k^{\ast}]}_{=0}\}+  E[\hat{\vect{h}}_k^T {\vect{\epsilon}}^{\ast} \vect{\epsilon}^{T} \hat{\vect{h}}_k^{\ast}]\\ \nonumber
	&=  \kappa_R \sigma_{\hat{h}_k}^4 N(N+1) +  (1-\kappa_R \sigma_{\hat{h}_{k}}^2) \sigma_{\hat{h}_{k}}^2 N \\
	&=  \sigma_{\hat{h}_{k}}^2 (\kappa_R \sigma_{\hat{h}_{k}}^2 N +1)N.
	\end{align}
where in the fourth line we make use of the fact that $\vect{\epsilon}$ is independent of $\hat{\vect{h}}$ each with zero-mean and $E[\norm{\hat{\vect{h}}_k}^4=N(N+1) \sigma_{\hat{h}_{k}}^2$ \cite{T2018}.
	
For ZF-BF, setting $\vect{w}_{\text{zf}, k} := \vect{w}_k$ as the $k$-th column of ZF-BF  matrix \eqref{eq:mrt_zf_bf} given by $\vect{W}_{\text{zf}} = \widehat{\vect{H}}^{\ast} (\widehat{\vect{H}}^T \widehat{\vect{H}}^{\ast})^{-1} $. Then by using Lemma \ref{lemma1} we can write 	
	\begin{align} \label{mu_zf}
	\nonumber
	&\mu_{\text{zf}}: =  E\left [\vect{w}_{\text{zf}, k} ^H \vect{g}^{\ast} \vect{g}^{T} \vect{w}_{\text{zf}, k} \right]\\ \nonumber
	&=E\left[  \vect{w}_{\text{zf}, k} ^H (\sqrt{\kappa_R} \hat{ \vect{h}}_k^{\ast} + \vect{\epsilon}^{\ast}) (\sqrt{\kappa_R} \hat{ \vect{h}}_k^{T} + \vect{\epsilon}^{T}) \vect{w}_{\text{zf}, k}  \right] \\ \nonumber
	&= \kappa_R E  [ \norm{ \underbrace{\vect{w}_{\text{zf}, k} ^H \vect{h}_k^{\ast}}_{=1}  }^2 ]+ 2\sqrt{\kappa_R} \Re \{ \underbrace{E[ \vect{w}_{\text{zf}, k}^H \hat{\vect{h}}_k^{\ast} \vect{\epsilon}^{T}   \vect{w}_{\text{zf}, k} ]}_{=0}\} \\ \nonumber
	&+E[\vect{w}_{\text{zf}, k} ^H \vect{\epsilon}^{\ast} \vect{\epsilon}^{T}  \vect{w}_{\text{zf}, k} ] =\kappa_R  + E[\vect{w}_{\text{zf}, k} ^H  E [\vect{\epsilon}^{\ast} \vect{\epsilon}^{T} ] \vect{w}_{\text{zf}, k} ] \\ \nonumber
	&= \kappa_R  + (1-\kappa_R \sigma_{\hat{h}_{k}}^2)  E[ \norm{\vect{w}_{\text{zf}, k} }^2]  \\
	&=  \kappa_R  +   \frac{\sigma_{\hat{h}_{k}}^{-2} (1-\kappa_R \sigma_{\hat{h}_{k}}^2) }{N-K} 
	\end{align}
where in the derivation steps of \eqref{mu_zf} we have used the zero-forcing property: $\vect{w}_{\text{zf}, k}^H \vect{h}_k^{\ast}  = \vect{h}_k^{T} \vect{w}_{\text{zf}, k}=1$, statistical independence of $\vect{\epsilon}$ and $\vect{w}_k$, and in the last line we use Lemma \ref{zf_norm}.

Finally, substituting \eqref{mu_mrt} and \eqref{mu_zf} combined with \eqref{eavesdropper_noise_def} and the definition of $c_1$ in \eqref{eavesdropper_rate_asymp}, \eqref{thm2_eq1} and \eqref{thm2_eq2} follow, respectively. In \eqref{thm2_eq1} and \eqref{thm2_eq2} the notation $ \lesssim$ is replaced by $ \cong $ where the notion of upper bound is understood from the bars over the symbols. This completes the proof.
\end{appendices} 

\bibliographystyle{IEEEtran}
\bibliography{IEEEabrv,Ref}


%
\end{document}